\documentclass[a4paper,UKenglish,cleveref,autoref,thm-restate]{lipics-v2021}

\pdfoutput=1 
\hideLIPIcs  

\graphicspath{{figures/}}
\usepackage{amssymb, amsfonts}
\usepackage{cite}
\usepackage{bold-extra} 
\crefname{algocf}{Algorithm}{Algorithm}
\usepackage{verbatim}
\usepackage[ruled]{algorithm2e}
\usepackage{mathtools}
\usepackage{tikz}
\SetKwComment{Comment}{$\triangleright$\ }{}
\SetCommentSty{itshape}
\captionsetup[subfigure]{justification=centering}
%
\newcommand{\UNE}{{\textsc{une}}}
\newcommand{\UW}{{\textsc{uw}}}
\newcommand{\USE}{{\textsc{use}}}
\newcommand{\N}{{\textsc{n}}}
\newcommand{\NW}{{\textsc{nw}}}
\newcommand{\SW}{{\textsc{sw}}}
\renewcommand{\S}{{\textsc{s}}}
\newcommand{\SE}{{\textsc{se}}}
\newcommand{\NE}{{\textsc{ne}}}
\newcommand{\DNW}{{\textsc{dnw}}}
\newcommand{\DSW}{{\textsc{dsw}}}
\newcommand{\DE}{{\textsc{de}}}
\newcommand{\X}{{\textsc{x}}}

\newcommand{\tri}{G_{\!\triangle}}
\newcommand{\surf}{G_{\!\lozenge}}
\newcommand{\suc}{\textsc{head}}
\newcommand{\pred}{\textsc{tail}}
\newcommand{\seg}{\textsc{seg}}
\newcommand{\gen}{\textsc{gen}}

\newcommand{\obj}{\theta}
\newcommand{\links}{\textsc{li}}
\newcommand{\lp}{\tau}
\newcommand{\bp}{\sigma}
\newcommand{\start}{s^0}
\newcommand{\occ}{\mathcal{T}}

\newcommand{\emp}{\mathcal{E}}
\newcommand{\anchor}{{a}}


\newcommand{\bo}{B_{\!\occ{}}}
\newcommand{\be}{B_{\!\emp{}}}
\newcommand{\beo}{B_{\!\emp{}^0}}


\newcommand{\coat}{{\textsc{Coat}}}
\newcommand{\init}{{\textsc{Init}}}
\newcommand{\fetch}{{\textsc{Fetch}}}
\newcommand{\noCheck}{{\textsc{skip}}}
\newcommand{\rhr}{\textsc{r}}
\newcommand{\lhr}{\textsc{l}}


\renewcommand{\O}{\mathcal{O}}

%
%
\title{Universal Coating by 3D Hybrid Programmable Matter}


\author{Irina Kostitsyna}{KBR at NASA Ames Research Center}{irina.kostitsyna@nasa.gov}{https://orcid.org/0000-0003-0544-2257}{}
\author{David Liedtke}{Paderborn University, Germany}{liedtke@mail.upb.de}{https://orcid.org/0000-0002-4066-0033}{}
\author{Christian Scheideler}{Paderborn University, Germany}{scheideler@upb.de}{https://orcid.org/0000-0002-5278-528X}{}

\authorrunning{I. Kostitsyna, D. Liedtke, C. Scheideler} 
\Copyright{Irina Kostitsyna and David Liedtke and Christian Scheideler}

\ccsdesc[500]{Theory of computation~Design and analysis of algorithms} 

\keywords{Programmable Matter, Coating, Finite Automaton, 3D Model}

\category{} 

\relatedversion{} 


\funding{This work was supported by the DFG Project SCHE 1592/10-1.} 


\nolinenumbers 

\EventEditors{John Q. Open and Joan R. Access}
\EventNoEds{2}
\EventLongTitle{42nd Conference on Very Important Topics (CVIT 2016)}
\EventShortTitle{CVIT 2016}
\EventAcronym{CVIT}
\EventYear{2016}
\EventDate{December 24--27, 2016}
\EventLocation{Little Whinging, United Kingdom}
\EventLogo{}
\SeriesVolume{42}
\ArticleNo{23}

\begin{document}
    
\maketitle              

\begin{abstract}
    Motivated by the prospect of nano-robots that assist human physiological functions at the nanoscale, we investigate the coating problem in the three-dimensional model for hybrid programmable matter. In this model, a single agent with strictly limited viewing range and the computational capability of a deterministic finite automaton can act on passive tiles by picking up a tile, moving, and placing it at some spot. The goal of the coating problem is to fill each node of some surface graph of size $n$ with a tile. We first solve the problem on a restricted class of graphs with a single tile type, and then use constantly many tile types to encode this graph in certain surface graphs capturing the surface of 3D objects. Our algorithm requires $\O(n^2)$ steps, which is worst-case optimal compared to an agent with global knowledge and no memory restrictions.
\end{abstract}
\section{Introduction}
\label{sec:introduction}
Recent advances in the field of molecular engineering gave rise to a series of computing DNA robots that are capable of performing simple tasks on the nano scale, including the transportation of cargo, communication, movement on the surface of membranes, and pathfinding~\cite{dnaCargo, dnaCargoCollective, dnaWalkMembrane, dnaPathFinding}. 
These results foreshadow future technologies in which a collective of computing particles cooperatively act as programmable matter - a homogenous material that changes its shape and physical properties in a programmable fashion.
Robots may be deployed in the human body as part of a medical treatment:
they may repair tissues by covering wounds with proteins or apply layers of lipids to isolate pathogens.
The common thread uniting these applications is the \emph{coating problem}, in which a thin layer of some specific substance is applied to the surface of a given object.

In the past decades, a variety of models for programmable matter has been proposed, primarily distinguished between passive and active systems.
In passive systems, particles 
move and bond to each other solely by external stimuli, e.g., current or light, or by their structural properties, e.g., specific glues on the sides of the particle.
Prominent examples are the DNA tile assembly models aTAM, kTAM and 2HAM (see survey in~\cite{tileAssemblySurvey}).
In contrast, particles in active systems solve tasks by performing computation and movement on their own.
Noteworthy examples include the Amoebot model, modular self-reconfigurable robots and swarm robotics~\cite{amoebotAnnouncement, MSR, MSR2, swarmRobotics}.
While computing DNA robots are difficult to manufacture, passive tiles can be folded from DNA strands efficiently in large quantities~\cite{dnaTileSurvey}.
A trade-off between feasibility and utility is offered by the hybrid model for programmable matter~\cite{hybridShapeFormation, hybridShapeRecognition, hybrid3D}, in which a single active agent that acts as a deterministic finite automaton operates on a large set of passive particles (called tiles) serving as building blocks.
A key advantage of the hybrid approach lies in the reusability of the agent upon completing a task.
While agents in purely active systems are expended as they become part of the formed structure, hybrid agents can be deployed again.
This property is beneficial in scenarios requiring the coating of numerous objects, such as isolating malicious cells within the human body.
Coating multiple objects concurrently, with each being individually coated by a single agent, allows for efficient~pipelining.

In the 3D hybrid model, we consider tiles of the shape of rhombic dodecahedra, i.e., polyhedra with 12 congruent rhombic faces, positioned at nodes of the adjacency graph of face-centered cubic (FCC) stacked spheres (see \cref{fig:modelgraph,fig:modeldirections}).
In contrast to rectangular graphs (e.g., \cite{cadbots}), this allows the agent to fully revolve around tiles without losing connectivity, which prevents the agent and tiles from drifting apart, e.g., in liquid or low gravity environments.
In this paper, we investigate the coating problem in the 3D hybrid model, in which the goal is to completely cover the surface of some impassable object with tiles, where tiles can be gathered from a material depot somewhere on the object's~surface.

\subsection{Our Results}
\label{subsec:contribution}

We present a generalized algorithm that solves the coating problem assuming that the agent operates on a graph $\tri{}$ of size $n$ and degree $\Delta \leq 6$ that is a triangulation of a closed 3D surface.
We assume a fixed embedding of $\tri{}$ in $\mathbb{R}^3$ in which edges have constantly many possible orientations, and that the boundary of each node in $\tri{}$ is a chordless cycle.
Our algorithm requires only a single type of passive tiles and solves the coating problem in $\O(n^2)$ steps, which is worst-case optimal compared to an algorithm for an agent with global knowledge and no restriction on its memory or the number of tile types. 
In the 3D hybrid model, the surface graph arises as the subgraph induced by nodes adjacent to a given object (some subset of nodes) where we assume holes in the object to be sufficiently large.
While that subgraph is not necessarily a triangulation with the properties described above, we show that our algorithm can be emulated on a restricted class of objects with a single type of tiles.
To realize the algorithm outside of that class, we construct a virtual surface graph on which our algorithm can be emulated in $\O(\Delta^2n^2)$ steps using $2^{2\Delta}$ types of passive tiles.
Notably, $\Delta$ is a constant in the 3D hybrid model.


\subsection{Related Work}
\label{subsec:relatedwork}

In recent years much work on the 2D version of the hybrid model has been carried out, yet the only publication that considers the 3D variant is a workshop paper from EuroCG 2020 in which an arbitrary configuration of $n$ tiles is rearranged into a line in $\O(n^3)$ steps \cite{hybrid3D}.
2D shape formation was studied in~\cite{hybridShapeFormation}; the authors provide algorithms that build an equilateral triangle in $\O(nD)$ steps where $D$ is the diameter of the initial configuration.
The problem of recognizing parallelograms of a specific height to length ratio was studied in~\cite{hybridShapeRecognition}.
The most recent publication \cite{hybridLine} solves the problem of maintaining a line of tiles in presence of multiple agents and dynamic failures of the tiles.

Closely related to the hybrid model is the well established Amoebot model, in which computing particles move on the infinite triangular lattice via a series of expansions and contractions.
In this model, a variety of problems was researched in the last years, including convex hull formation~\cite{amoebotConvexity}, shape formation~\cite{amoebotShape1, amoebotShape2}, and leader election~\cite{amoebotLeader}.
A~recent extension considers additional circuits on top of the Amoebot structure which results in a significant speedup for fundamental problems~\cite{amoebotCircuits}.
In~\cite{amoebotCoating2, amoebotCoating}, the authors solve the coating problem in the 2D Amoebot model;
in their variant, the objective is to apply multiple layers of coating to the object. 
In \cite{amoebot3Dcoating}, the authors solve the coating problem in the 3D Amoebot model.
In their approach, the object's surface is greedily flooded by amoebots that remain connected in a tree structure.
The process terminates for each amoebot when there are no more surrounding empty positions to move to.
Notably, given our agent's requirement to retrieve a tile after each placement, their approach cannot be applied or transferred to the hybrid model.

Deterministic finite automata that navigate the infinite 2D grid graph $\mathbb{Z}^2$ are considered in \cite{Czyzowicz2021NestFormation,Kant2021FortFormation}.
The authors of \cite{Czyzowicz2021NestFormation} address the challenge of building a compact structure, termed a nest, using available tiles.
They present a solution with a time complexity of $O(s^n)$, where $s$ denotes the initial span, and $n$ denotes the number of tiles.
In \cite{Kant2021FortFormation}, the authors focus on constructing a fort, a shape with minimal span and maximum covered area, in $\mathcal{O}(n^2)$ time.

In the field of modular reconfigurable robots, coating is often part of the shape formation problem.
In the 3D Catom model, a module of robots first assembles into a scaffolding~\cite{claytronicsScaffold} that is then coated by another module of robots~\cite{claytronicsCoating}.
The robots have spherical shape and reside in the FCC lattice;
in contrast to the hybrid model, they assume more powerful computation, sensing and communication capabilities.
The problem of leader election and local identifier assignment by generic agents in the FCC lattice is considered in~\cite{3dLeaderElection}.
Coating is approached differently in the field of swarm robotics where robots form a non-uniform spatial distribution around objects that are too heavy to be lifted alone~\cite{swarmRobotics}.

\section{Model and Problem Statement}
\label{sec:model}

\begin{figure}[t] 
    \centering
    \begin{minipage}{.49\textwidth}
        \centering
        \includegraphics[height=.65\linewidth,trim={{0} {12} {0} {24}},clip]{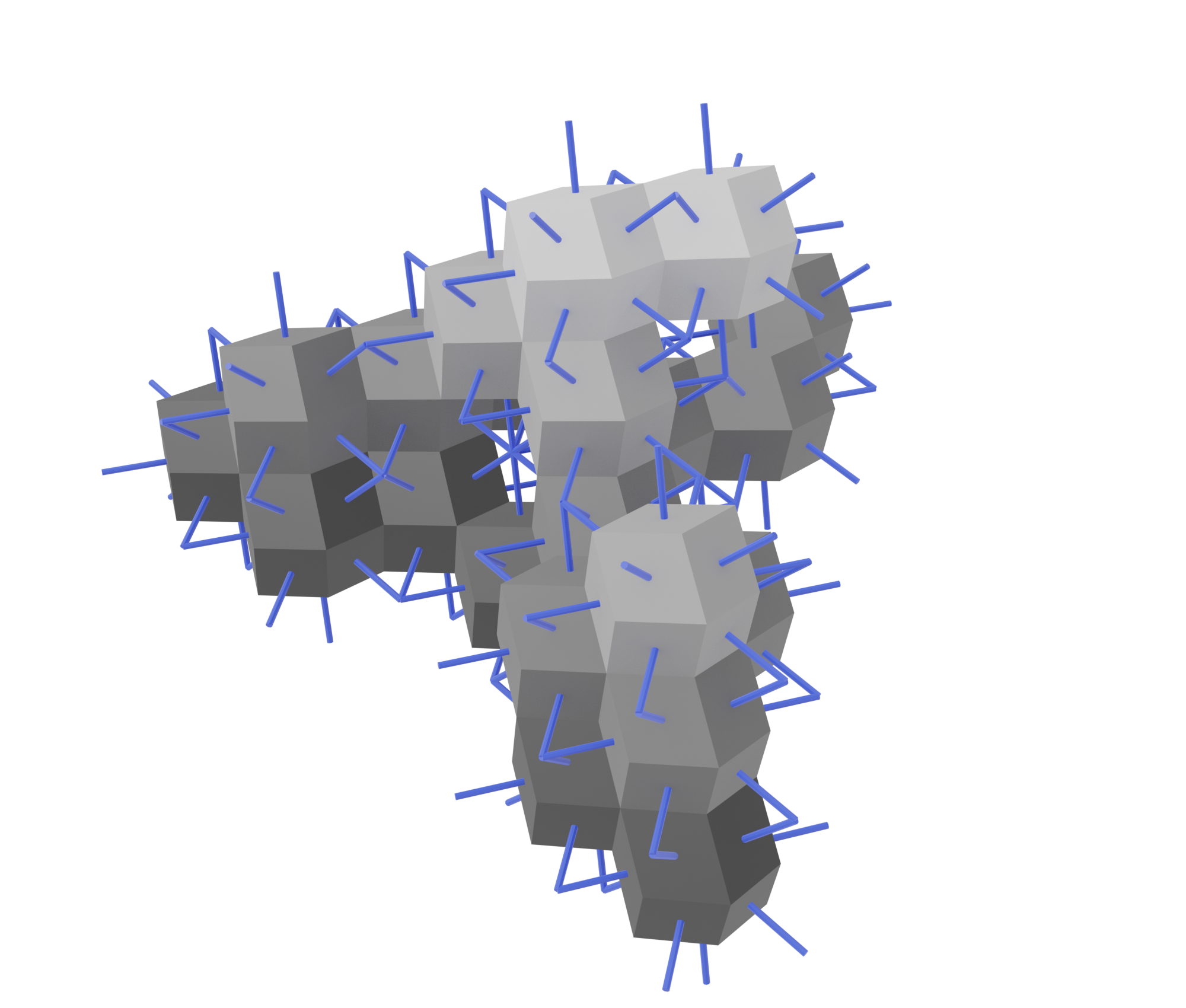}
        \caption{Tiled nodes of the underlying graph $G$ and their incident edges to empty nodes.}
        \label{fig:modelgraph}
    \end{minipage}%
    \hfill
    \begin{minipage}{.49\textwidth}
        \centering
        \includegraphics[height=.65\linewidth,trim={{0} {12} {0} {24}},clip]{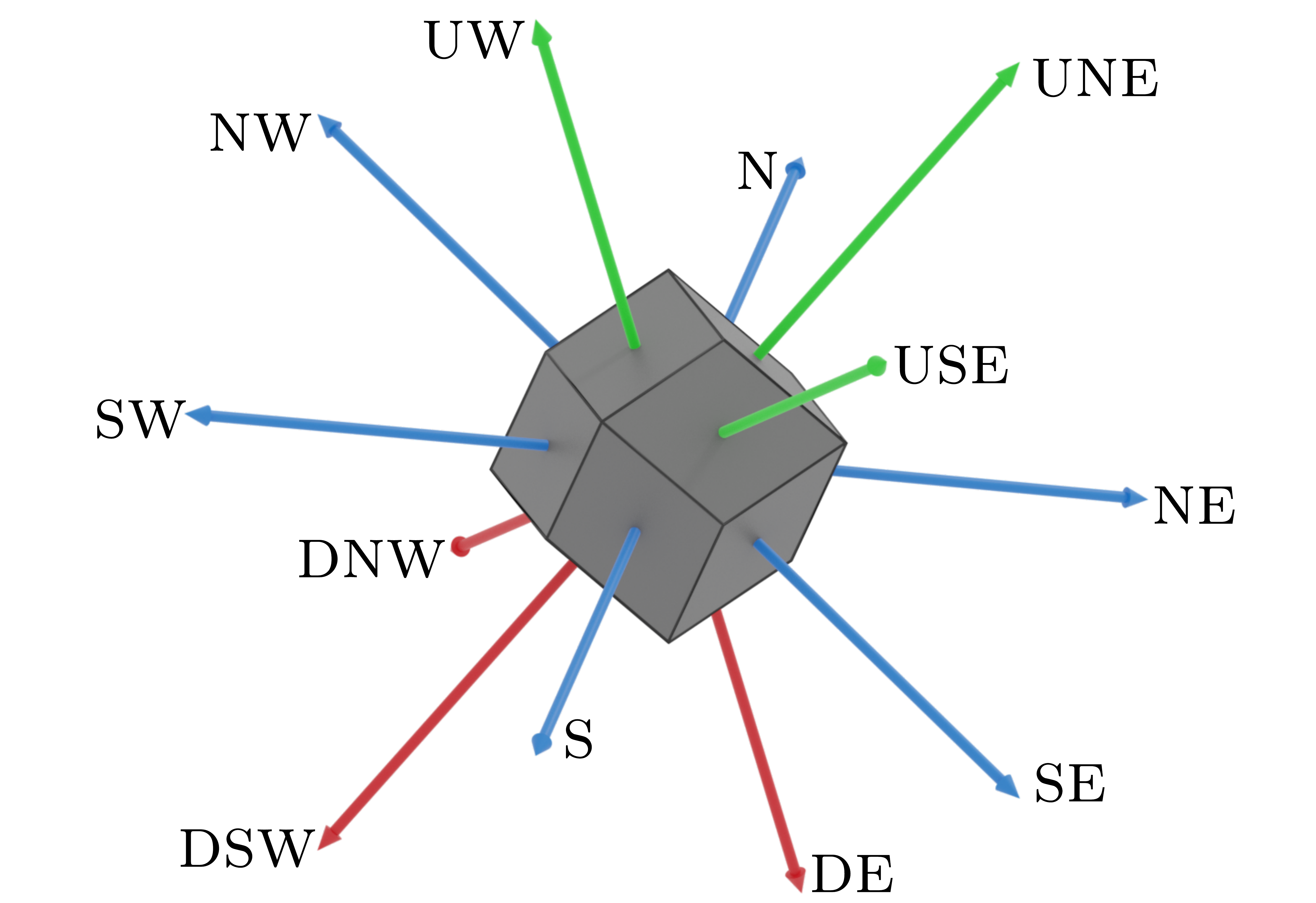}
        \caption{A passive tile (rhombic dodecahedron) and the twelve compass directions.}
        \label{fig:modeldirections}
    \end{minipage}
\end{figure}
In the \emph{3D hybrid model}, we consider a single active agent with limited sensing and computational power that operates on a finite set of passive \emph{tiles} positioned at nodes of some underlying graph $G$ that has a fixed embedding in $\mathbb{R}^3$.
We define the graph and embedding in \cref{subsec:graph}, the agent model in \cref{subsec:agent}, and the coating problem in \cref{subsec:problem}.

\subsection{Underlying Graph}
\label{subsec:graph}

Let $G = (V,E)$ be the adjacency graph of equally sized closely packed spheres at each point of the infinite face-centered cubic lattice (see \cref{fig:modelgraph}).
This graph can be embedded in $\mathbb{R}^3$ such that nodes have alternating cubic coordinates, i.e., each node $v$ has a coordinate $\vec{v} = (x,y,z)$ with $x,y,z \in \mathbb{Z}$ and $x+y+z$ is even.
Each node has twelve neighbors whose relative positions are described by the directions $\UNE$, $\UW$, $\USE$, $\N$, $\NW$, $\SW$, $\S$, $\SE$, $\NE$, $\DNW$, $\DSW$ and~$\DE$, which correspond to the following vectors in~the~embedding~of~$G$:
\begin{equation*} 
    \label{eq:embedding}
    \begin{array}{llllllllll}
        \vec{\UNE} & =  (1,1,0),   &  \vec{\UW}  & =  (0,1,1),   & \vec{\USE} & =  (1,0,1),   & \vec{\N}   & =  (0,1,-1),\\
        \vec{\NW}  & =  (-1,1,0),  &  \vec{\SW}  & =  (-1,0,1),  & \vec{\S}   & =  (0,-1,1),  & \vec{\SE}  & =  (1,-1,0),  \\
        \vec{\NE}  & =  (1,0,-1),  &  \vec{\DNW} & =  (-1,0,-1), & \vec{\DSW} & =  (-1,-1,0), & \vec{\DE}  & =  (0,-1,-1).
    \end{array}
\end{equation*}

Cells in the dual graph of $G$ w.r.t.\ the above embedding have the shape of rhombic dodecahedra, i.e., polyhedra with 12 congruent rhombic faces (see \cref{fig:modeldirections}).
This is also the shape of every cell in the Voronoi tessellation of $G$, i.e., that shape completely tessellates 3D space.
Consistent to the embedding, we denote by $v + \X$ the node $w$ that is neighboring $v$ in some direction $\X$, i.e., $\vec{w} = \vec{v} + \vec{\X}$.
Consider a finite set of tiles of $k$ distinguishable types (until \cref{sec:construction} we only consider $k=1$).
Tiles have the shape of rhombic dodecahedra and are passive, in the sense that they cannot perform computation or movement on their own.
A node $v \in V$ is either \emph{tiled}, if there is a tile positioned at $v$, or \emph{empty}, otherwise.
Except for the material depot (which we introduce in \cref{subsec:problem}), nodes can hold at most one tile at a time.
We denote by $\occ{}$ the set of tiled nodes, and by $\emp{}$ the (infinite) set of empty nodes.

\subsection{Agent Model}
\label{subsec:agent}

The agent $r$ is the only active entity in this model.
It can place and remove tiles of any type at nodes of $G$ and loses and gains a unit of material in the process.
We assume that it initially carries no material and that it can carry at most one unit of material at any time.
The agent has the computational capabilities of a deterministic finite automaton performing \emph{Look-Compute-Move} cycles.
In the \emph{look}-phase, it observes tiles at its current position $p$ and the twelve neighbors of $p$, and if there are tiles, it observes their types as well.
The agent is equipped with a compass that allows it to distinguish the relative positioning of its neighbors.
Its initial rotation and chirality can be arbitrary, but we assume that it remains consistent throughout the execution.
In the \emph{compute}-phase the agent determines its next state transition according to the finite automaton.
In the \emph{move}-phase, the agent executes an \emph{action} that corresponds to that state transition.
It either (i) moves to an empty or tiled node adjacent to $p$, (ii) places a tile (of any type) at $p$, if $p \notin \occ{}$ and $r$ carries material (we call that \emph{tiling} node $p$), (iii) removes a tile from $p$, if $p\in \occ{}$ and $r$ carries no material, (iv) changes the tile type at $p$, or (v) terminates.
During (ii) and (iii), the agent loses and gains one unit of material, respectively.
While the agent is technically a finite automaton, we describe algorithms from a higher level of abstraction textually and through pseudocode using a constant number~of variables of constant size domain.

\subsection{Problem Statement}
\label{subsec:problem}


Denote by $G(W)$ the subgraph of $G$ induced by some set of nodes $W \subseteq V$, by $d(v,w)$ the distance (length of the shortest path) between nodes $v,w \in V$ w.r.t. $G$, and by $d_W(v,w)$ the distance w.r.t. $G(W)$.
Consider a connected subset $\obj{}\subset V$ of impassable and static nodes, called \emph{object}.
Any node is either an object node, empty or tiled such that $\obj{}$ and the sets of empty nodes $\emp{}$ and tiled nodes $\occ{}$ are pairwise disjoint.
A \emph{configuration} is the tuple $C = (\occ{},\obj, p)$, and we call $C$ \emph{valid}, if $G(\occ{} \cup \obj{} \cup \{p\})$ is connected.
We assume that holes in the object have width larger than one, i.e., $d_\obj{}(v,w) \leq 2$ for any $v,w \in \obj$ with $d(v,w) \leq 2$.

Let $C^0 = (\occ{}^0,\obj{}, p^0)$ be a valid initial configuration with $\occ{}^0=\{p^0\}$.
Superscripts~generally refer to step numbers and may be omitted if they are clear from the context.
Define the \emph{coating layer} as the maximum subset $L\subset V \setminus \obj{}$ such that for each node $v \in L$ there is an object node $w \in \obj$ with $d(v,w) = 1$ and $d_L(v,p^0) < \infty$ (w.r.t. $G(L)$).
The latter condition excludes unreachable nodes that are separated by the object, e.g., the inner surface of a hollow sphere.
We assume a \emph{material depot} at the agent's initial position $p^0$, that is $p^0$ is a node with the special property of holding at least $|L|$ units of material.
An algorithm solves the \emph{coating problem}, if its execution results in a sequence of valid configurations $C^0,\dots,C^{t^*}$ such that $\occ{}^{t^*} = L$, $C^{t}$ results from $C^{t-1}$ for $1\leq t \leq {t^*}$ by applying some action (i)--(iv) to $p^{t-1}$, and the agent terminates (v) in step ${t^*}$.
\section{The Coating Algorithm}
\label{sec:algorithm}

In this section, we give a generalized coating algorithm for a surface graph $\tri$ whose node set is the coating layer $L$.
The agent operates only in $\tri$ and all notation in this section is exclusively w.r.t. $\tri$.
We assume that (1)~$\tri$ is a triangulation of a closed 3D surface (e.g., \cref{fig:idea1}) with an embedding in $\mathbb{R}^3$ in which edges have constantly many possible orientations.
Define the $i$-\emph{neighborhood} $N_i(W)$ of $W$ as the set of nodes $v \in L$ with $d(v,w) \leq i$ for some node $w \in W$, and the \emph{boundary} as $B(W) \coloneqq N_1(W) \setminus W$.
We write $N_i(w)$ and $B(w)$ for $W = \{w\}$, and use subscripts to denote subsets of only empty or only tiled nodes, e.g., $\be(w) = B(w) \cap \emp{}$.
Further, we assume that (2) $\tri(B(v))$ contains precisely one simple cycle for any $v \in L$ (we say that $B(v)$ is \emph{chordless}).
Notably, this assumption does not weaken our results, since there are surface graphs in which properties (1) and (2) naturally arise, and otherwise we can emulate the properties using additional tile types (see \cref{sec:construction}).

\begin{figure}[t]
    \centering
    \begin{subfigure}[c]{0.49\linewidth}
        \includegraphics[width=\linewidth]{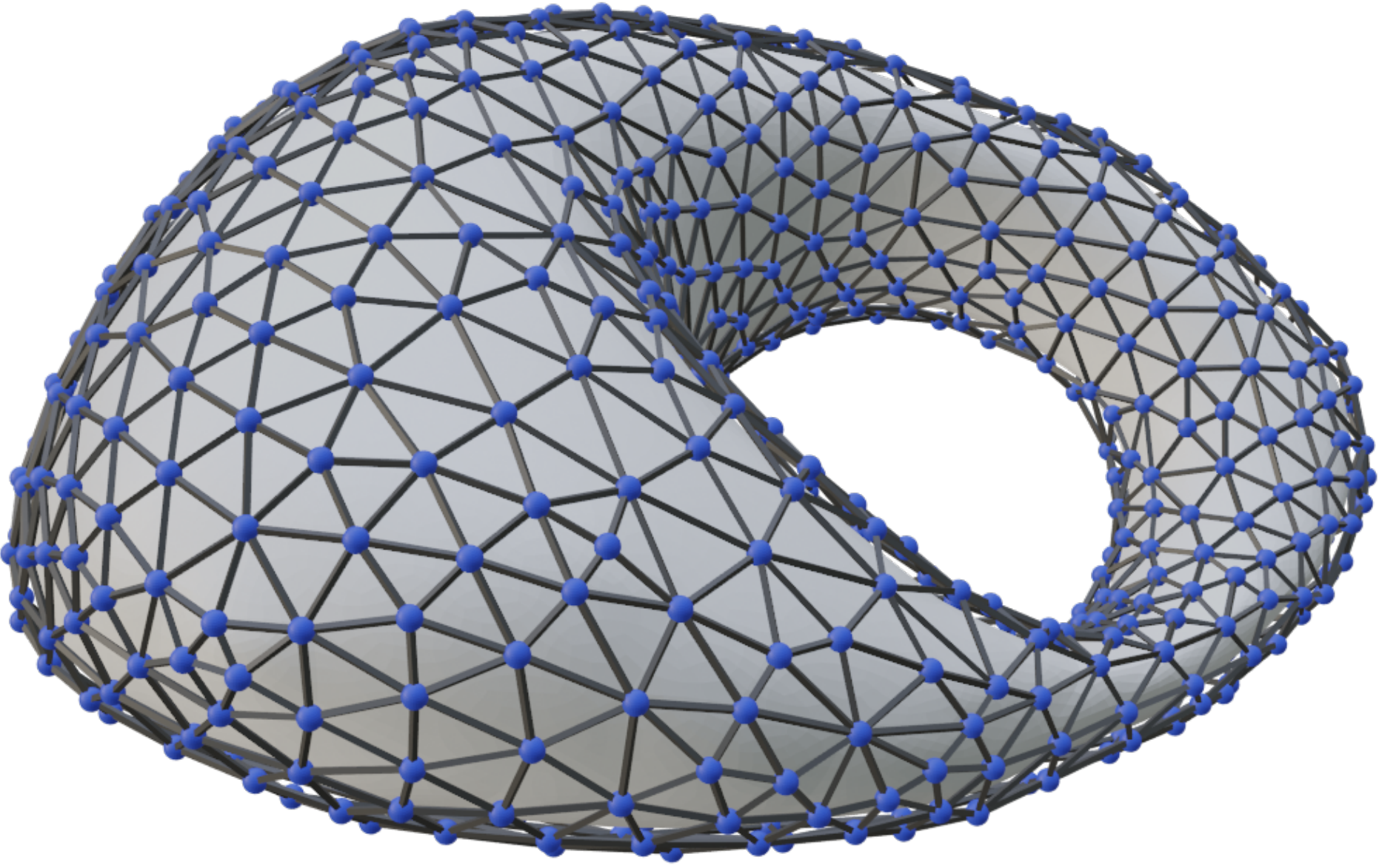} 
        \subcaption{}
        \label{fig:idea1}
    \end{subfigure}\hfill
    \begin{subfigure}[c]{0.49\linewidth}
        \includegraphics[width=\linewidth]{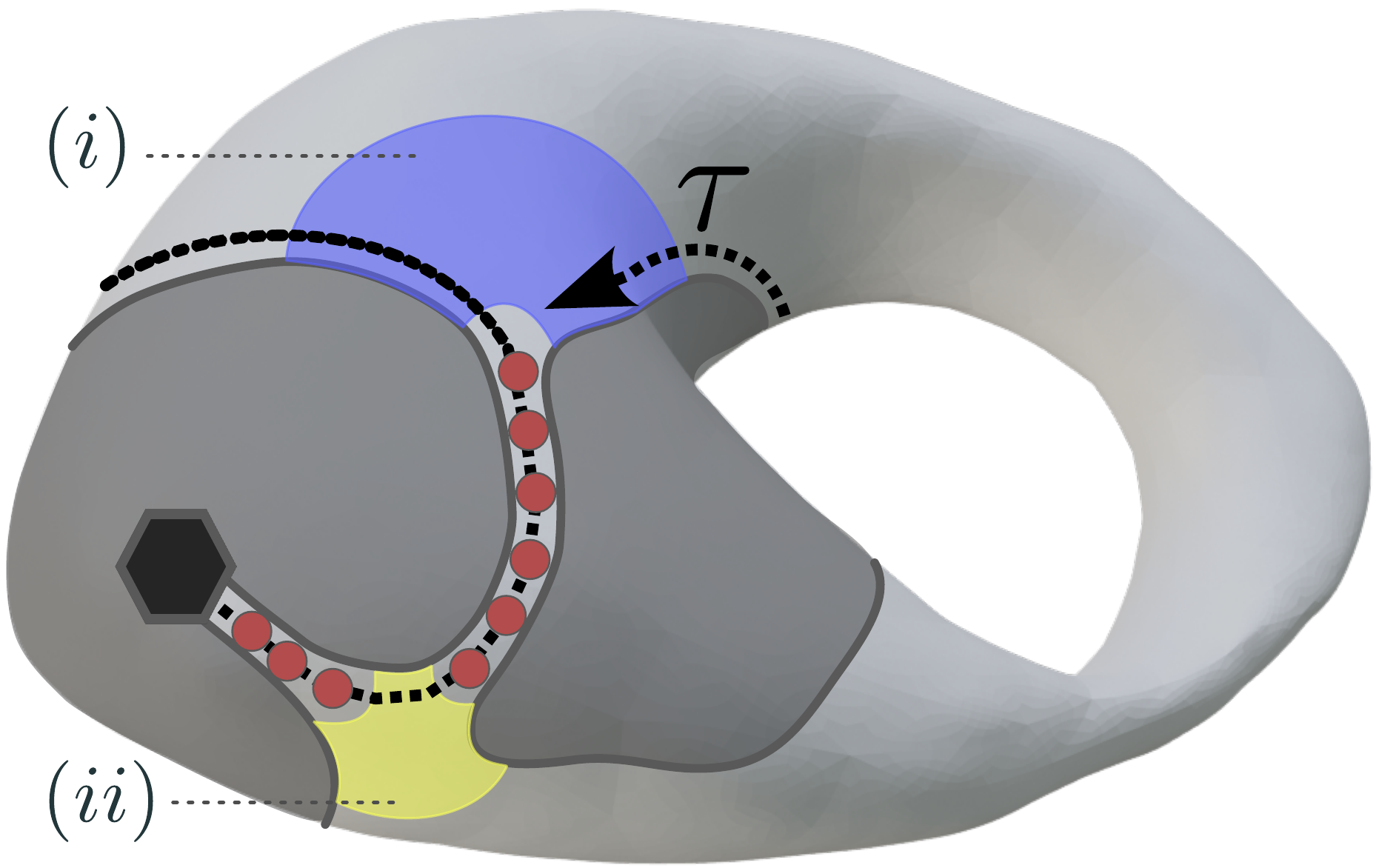} 
        \subcaption{}
        \label{fig:idea2}
    \end{subfigure}%
    \caption{(a) Example of a triangulation $\tri{}$. (b) The simple path $\tau$ that starts at the material depot (hexagon) along the boundary of tiled nodes (opaque surface area). Links are depicted as circles. The blue area ($i$) is the \emph{range} explored by the agent. In the yellow area ($ii$), $\tau$ is `exposed' to nodes that are not tiled and not links. The agent explores the blue area to find an overlap with the yellow area, in which case a link can be tiled while preserving connectivity of $\tri(\emp)$.}
    \label{fig:idea}
\end{figure}


\subsection{High-Level Description and Preliminaries}
\label{subsec:preliminaries}

The main challenge is the exploration of $\tri{}$ to find an empty node that can be tiled while maintaining a path back to the material depot, as it is the only source of additional tiles.
The exploration of all graphs of diameter $D$ and degree $\Delta$ requires $\Omega(D \log(\Delta))$ memory bits \cite{graphExploration}.
With constant memory already the exploration of plane labyrinths (grid graphs in $\mathbb{Z}^2$) requires two pebbles (placable markers) \cite{exploration1pebble,exploration2pebbles}.
Our agent has only constant memory and while it is not provided with pebbles, we can use tiles to aid the exploration of $\tri{}$.
Since at some point all nodes in $\tri{}$ must be tiled, and our algorithm uses only a single type of tiles, i.e., tiles are indistinguishable, we cannot directly use tiles as markers.
Instead, our strategy involves strategically placing tiles to create a 'narrow tunnel' of empty nodes leading to the tile depot.
We then employ the left- and right-hand-rule (LHR, RHR) commonly used in labyrinth traversal to navigate through the resulting tile structure.
In this process, the agent consistently maintains contact with the set of tiled nodes, either on its left or right.
Essentially, we `mark' empty nodes by disconnecting tiles in their boundary such that they become part of the tunnel.
This concept is formalized through the introduction of \emph{links}:

\begin{definition}
    \label{def:1link}
    A node $v \in \emp{}$ is a \emph{link}, if $\be(v)$ is disconnected.
    A~node $v \in \emp{}$ \emph{generates} (a link at) node $w$, if tiling node $v$ turns $w$ into a link.
    Similarly, $v$ \emph{consumes} (a link at) node $w$, if by tiling $v$, $w$ is no longer a link.
\end{definition}

From a high level perspective, the agent traverses the boundary of tiled nodes by following the LHR until it finds a node to place its carried tile at.
It then moves back to the material depot following the RHR, gathers material and repeats the process.
Following the LHR and RHR, the agent cannot leave the connected component of $\tri(\emp)$ that contains its position.
Hence, it is crucial that each tile placement preserves connectivity of $\tri(\emp)$, as otherwise the agent might never find its way back to the material depot, or it might terminate without tiling some node in $L$.
A naive approach to maintain connectivity would be to never place a tile at a link.
However, that strategy only works if the surface that is captured by $\tri{}$ is simply connected, i.e., it does not contain any hole.
In fact, the link property is necessary for some node $v \in \emp{}$ to be a cut node w.r.t. $\tri(\emp)$ but it is not sufficient.
If the surface contains holes, then the naive approach would converge $\tri(\emp{})$ to a cyclic graph containing only links, which again may be impossible to explore.
Hence, links must be tiled eventually.

In our algorithm, we build a simple path $\tau$ of empty nodes that contains all links and is completely contained in the boundary $B(\occ{})$ of tiled nodes (see \cref{fig:idea2}).
We cannot ensure connectivity of links on $\tau$, i.e., there can be multiple sections at which $\tau$ is `exposed' to empty nodes that are not links.
However, we can ensure that links on $\tau$ are `sufficiently close' to each other which we will formalize below by the notion of \emph{segments}.
In each traversal of $\tau$, the agent visits all links precisely once by following the LHR until there are no more links in the segment at the agent's position.
Afterwards, it explores a constant size neighborhood called \emph{range}, where we define ranges such that they contain all `close' links (all segments in some local neighborhood).
The presence of a link in that range indicates that we have found one of the above mentioned `exposures' of $\tau$ that was already visited before, i.e., there exists a link in some cycle of $\tri{}(\emp{})$ that can safely be tiled.
Note that the agent does not necessarily traverse $\tau$ fully.
As an example, the subpath of $\tau$ that follows the topmost link $v$ in \cref{fig:idea2} is not traversed, since the next segment following node $v$ contains no link.

\paragraph*{Additional Terminology.}
\begin{figure}[!t]
    \centering
    \begin{subfigure}[c]{0.33\linewidth}
        \includegraphics[width=\linewidth]{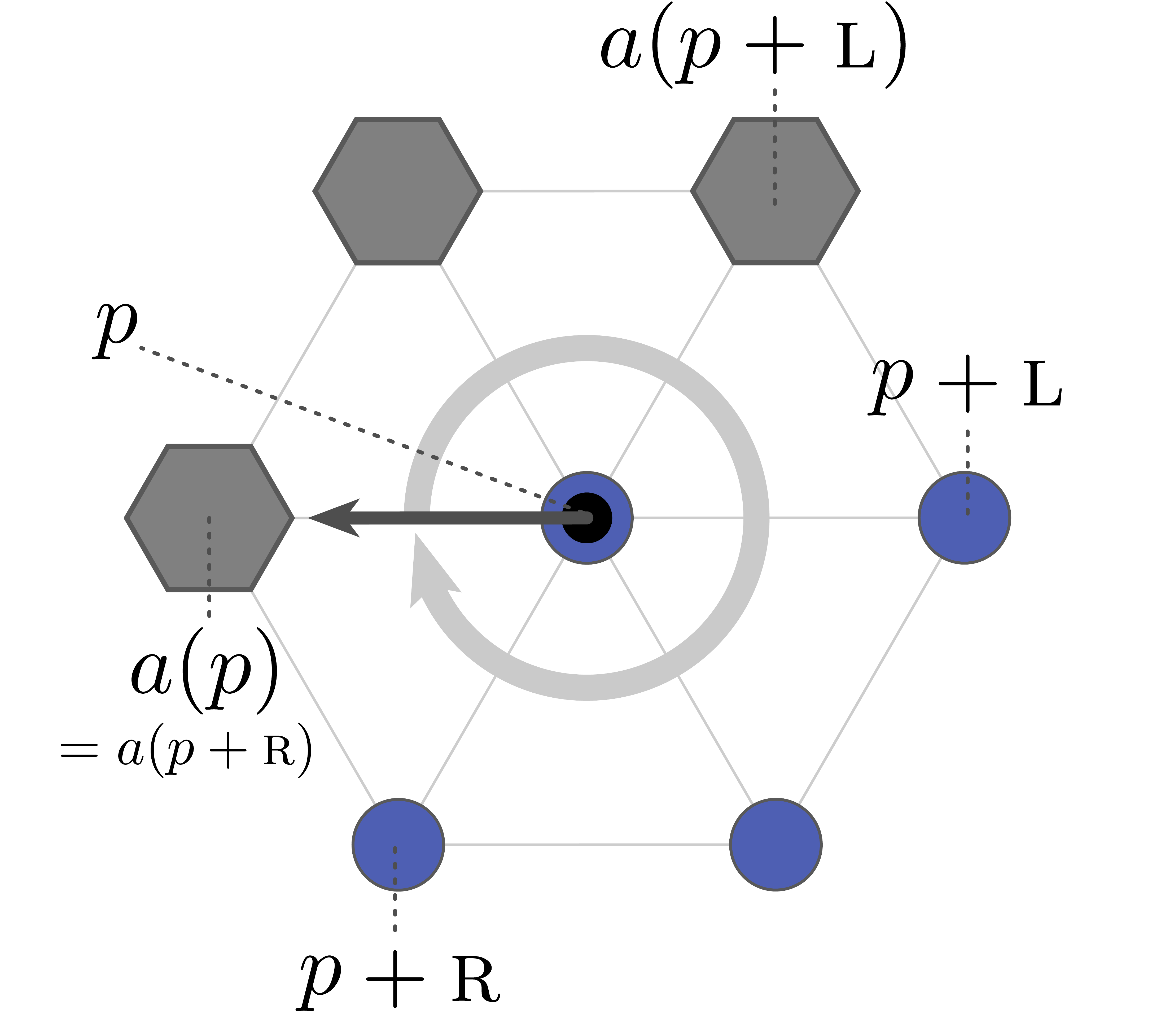}
        \subcaption{}
        \label{fig:anchor}
    \end{subfigure}%
    \begin{subfigure}[c]{0.33\linewidth}
        \includegraphics[width=\linewidth]{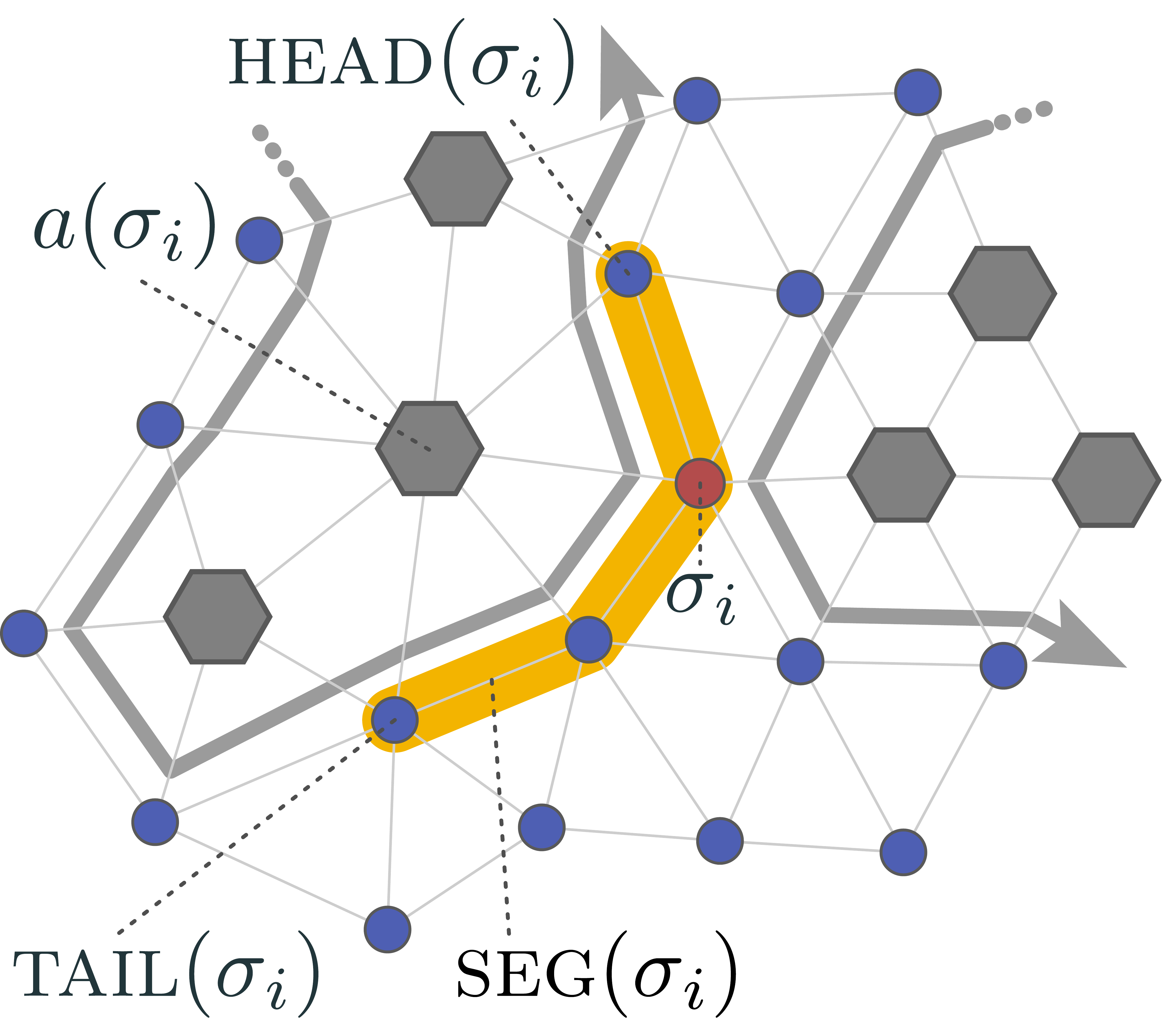}
        \subcaption{}
        \label{fig:segment}
    \end{subfigure}%
    \begin{subfigure}[c]{0.33\linewidth}
        \includegraphics[width=\linewidth]{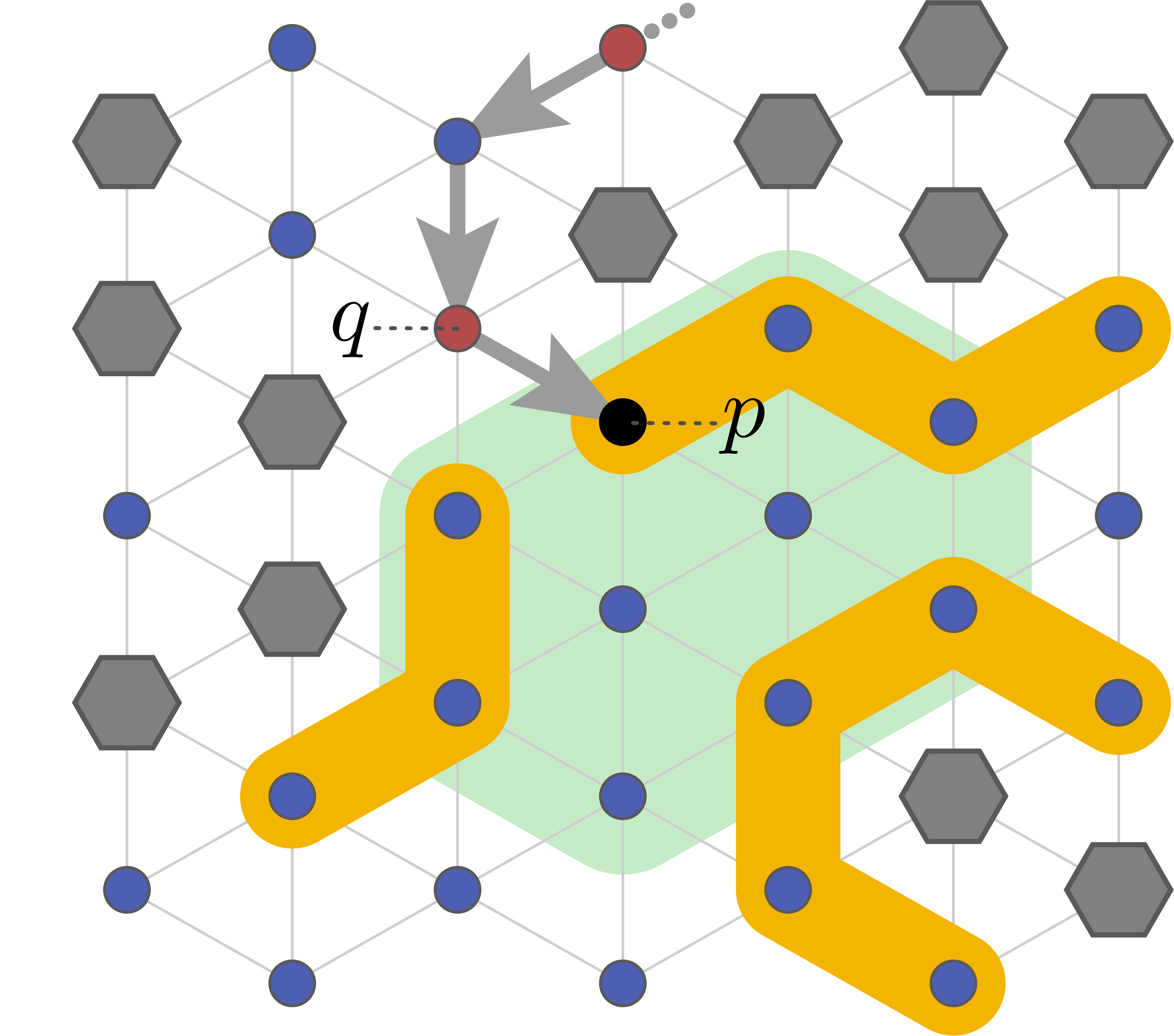}
        \subcaption{}
        \label{fig:range}
    \end{subfigure}%
    \caption{Tiled nodes are depicted as hexagons, empty nodes as disks, links are depicted in red (node $\sigma_i$ in (b), $q$ in (c) and the uppermost central node in (c)). (a) Example of the updates made to the agent's anchor $a(p)$ while following the LHR and RHR. (b) $\seg(\bp_i), \pred(\bp_i)$ and $\suc(\bp_i)$ for some node $\bp_i \in \bp$ in an example configuration. (c) Example of the $i$-range $R_i(p,q)$ for $i = 2$ (orange, opaque) together with $N_2(p)$ w.r.t. $\tri(\emp{} \setminus \{q\})$ (green, transparent).}
\end{figure}

The agent (at node $p$) maintains a direction pointer to some tiled node $\anchor(p) \in B(p)$, called \emph{anchor} of $p$.
Using the analogy of wall-following in a labyrinth, $\anchor(p)$ is the node at which the agent's `hand' is currently placed.
Initially, the agent leaves the tile depot $p^0$ to an arbitrary neighbor and sets its anchor to $p^0$.
Afterwards, it can follow the LHR (RHR) by moving to the first empty node $v$ in a clockwise (counter-clockwise) order of $B(p)$ starting at $a(p)$ and sets its anchor to the last tiled node between $a(p)$ and $v$ in that order (see \cref{fig:anchor}).
Note that the agent's anchor does not necessarily change after each move.
Subsequently, $\lhr{}$ and $\rhr{}$ denote the direction to the next node according to the LHR and RHR, respectively.

Let $\start \in B(p^0)$ be a dedicated starting node chosen arbitrarily in the initialization.
Define $\bp = (\bp_0,\dots,\bp_m)$ as the path along nodes of $B(\occ{})$ according to a LHR traversal that starts and ends at $\start = \bp_0 = \bp_m$. 
Note that $\bp$ is not necessarily a simple cycle.
In fact, in a full traversal of $\bp$, the agent can visit a link multiple times with its anchor set to different tiled nodes on each visit (e.g., $\bp_i$ in \cref{fig:segment}).
To avoid ambiguity, $\anchor(\bp_i)$ refers to the anchor \emph{after} moving from $\bp_{i-1}$ to $\bp_i$, and $\anchor(v)$ refers to the anchor at the first occurence of $v$ on $\bp$.

The \emph{segment} $\seg(\bp_i)$ of a node $\bp_i$ (see \cref{fig:segment}) contains all nodes that can be reached from $\bp_i$ by following the LHR or RHR while keeping the anchor fixed at $\anchor(\bp_i)$.
Simply put, $\seg(\bp_i)$ is the set of nodes that touch the anchor $a(\bp_i)$ from the same `side' as $\bp_i$.
Any node $\bp_j \in \seg(\bp_i)$ is a \emph{successor} of $\bp_i$, if $j > i$, or a \emph{predecessor} of $\bp_i$, if $j < i$.
As an example, node $\sigma_i$ in \cref{fig:segment} has two predecessors and one successor.
Denote by $\suc(\bp_i)$ the node without a successor in $\seg(\bp_i)$, and $\pred(\bp_i)$ the node without a predecessor in $\seg(\bp_i)$.


We can now formally introduce the above mentioned path $\lp$.
Define $\lp = (\lp_0,\dots,\lp_l)$ as a maximal simple sub-path of $\bp$ that starts at $\start$, i.e., $\lp_i = \bp_i$ for any $0 \leq i \leq l$.
The definition of $\seg(\cdot),\suc(\cdot)$ and $\pred(\cdot)$ directly carry over from $\bp$.
For simplicity, we write $v \in \lp$ or $v \in \bp$ if $\lp$ or $\bp$ contains node $v$.

Finally, we introduce the aforementioned \emph{range} that is explored by the agent before each tile placement.
Consider the agent to be positioned at some empty node $p$, such that the node $q = p + \rhr{}$ is a link.
The $i$-\emph{range} $R_i(p,q)$ (see \cref{fig:range}) is a specific neighborhood of empty nodes that is defined as if node $q$ were tiled.
Consider a node $v$ that can be reached from $p$ in at most $i$ steps without moving through $q$ or any tiled node.
If the segment $\seg(v)$ does not contain $q$, we add it to $R_i(p,q)$.
Otherwise, that segment is separated at $q$ and only the part that contains $v$ is added.
The formal definition is as follows: 

\begin{definition}
    \label{def:range}
    The $i$-\emph{range} of $p$ w.r.t. $q$ is the set of nodes $$R_i(p,q) \coloneqq \bigcup_{v \in N_i(p)}\bigcup_{\;\;w \in \bo(v)} \seg(v)$$
    where $N_i(p)$ and $\seg(v)$ are w.r.t. $\tri(\emp \setminus \!\{q\})$ and anchor $w$.
\end{definition}

\subsection{Algorithm Details and Pseudocode}
\label{subsec:algorithm}

The coating algorithm (see pseudocode in \cref{alg:algorithm}) consists of an initialization \init{} (lines 1--3) and two phases \coat{} (lines 4--10), dedicated to the tile placement, and \fetch{} (lines 11--16), dedicated to the gathering of material, the traversal of $\lp$, and the termination.
The agent switches between phases $\coat{}$ and $\fetch{}$ after each tile placement.
In the pseudocode, 
$\links$ and $\gen{}$ denote the set of all links and generators (see \cref{def:1link}). 

\begin{algorithm}[!b]
    \caption{Coating Algorithm}
    \label{alg:algorithm}
    \SetAlgoVlined
    \DontPrintSemicolon
    Phase \init{}:\\
    \nl gather material from $p^0$; move to an arbitrary $\start \in B(p^0)$\;
    \nl store the direction of $p^0$ w.r.t. $\start$; $\anchor(\start)\leftarrow p^0$; $\noCheck \leftarrow false$\;
    \nl move $\lhr{}$; enter phase \coat{} \Comment*[r]{$p \leftarrow \start{} + \lhr{}$}
    \BlankLine
    Phase \coat{}:\\
    \nl \uIf{$\noCheck{}$ \normalfont{or} $R_3(p, p + \rhr{}) \cap \left(\links \cup \{\start{}\}\right) = \emptyset$}{
        \nl place a tile at $p$; $\noCheck{} \leftarrow false$; enter phase \fetch{}\;
    }
    \nl \Else{
        \nl move \rhr{} \Comment*[r]{$p \leftarrow p + \rhr{}$}
        \nl \lIf(\Comment*[f]{$p \leftarrow p + \rhr{}$}){$p \in \gen$}{move \rhr{}}
        \nl \lIf{$p \in \gen$ \normalfont{and} $p + \rhr{} \notin \links \cup \{\start{}\}$}{$\noCheck{} \leftarrow true$}
        \nl place a tile at $p$; enter phase \fetch{}\;
    }
    \BlankLine
    Phase \fetch{}:\\
    \nl \lWhile(\Comment*[f]{$p \leftarrow p + \rhr{}$}){$p \neq \start$}{
        move \rhr
    }
    \nl move to $p^0$; gather material from $p^0$; move to $\start$\;
    \nl \lIf{$\be(\start{}) = \emptyset$}{
        place a tile at $\start$; terminate
    }
    \nl \While{$p \in \links \cup \{\start{}\}$ \normalfont{or} $v \in \links$ \normalfont{for a successor $v$ of $p$ in} $\seg(p)$}{
        \nl move \lhr \Comment*[r]{$p \leftarrow p + \lhr{}$}
    }
    \nl enter phase \coat{}\;
\end{algorithm}

In phase \init{} (lines 1--3), the agent gathers material and moves to an arbitrary node $\start\in B(p^0)$.
It stores the direction of $p^0$ w.r.t. $\start$ such that it can recognize $\start$ by the adjacent material depot at a later visit, and it initializes $\anchor{}(\start{}) \leftarrow p^0$ and $\noCheck \leftarrow false$.
Note that $\start$ is the first node of the paths $\bp$ and $\lp$, $\anchor(p)$ is the agent's anchor, and $\noCheck$ is a flag that indicates that the search for links in the $3$-range is skipped in the next execution of phase \coat{} (see line 4).
The flag is necessary to maintain a crucial invariant which we elaborate in the proof of \cref{lem:invariant}.
Afterwards, the agent moves \lhr{} and enters phase \coat{}.

Phase \coat{} is always entered such that $p \notin \links \cup \{\start{}\}$ and $p + \rhr{}$ is the last node $v\in\lp$ (i.e., with maximum index) for which $v \in \links \cup \{\start{}\}$.
In each execution of phase \coat{} the tile that is carried by the agent is either placed directly at $p$ (lines 4--5), or at some link that can be reached by following the RHR for at most two steps (lines 7--10).
In any case, the agent enters phase \fetch{} afterwards.
The position of the next tile depends on the following criteria:
If the flag $\noCheck{}$ is set to $true$, then the tile is placed at $p$ and $\noCheck{}$ is set to $false$ afterwards.
The tile is also placed at $p$, if $R_3(p,p+\rhr{})$ (see \cref{def:range}) does not contain any node of $\links \cup \{\start{}\}$.
Otherwise, the agent must have found some link or the starting node, which implies that $R_3(p,p+\rhr{})$ contains a node $v \in \lp$ with smaller index than $p \in \lp$, i.e., the agent has detected a cycle in $\tri{}(\emp)$ in which it can safely place a tile at some link.
It is crucial that no link is generated on the `wrong side' of the newly placed tile as this link may later lead to a false detection.
Let the agent's position w.r.t. $\bp$ be $p = \bp_i$, i.e., $\bp_{i-1},\bp_{i-2}$ and $\bp_{i-3}$ are the next three nodes visited by following the RHR (see \cref{fig:alg}).
If $\bp_{i-1} \notin \gen$, then node $\bp_{i-1}$ is tiled, otherwise node $\bp_{i-2}$ is tiled.
In the latter case, $\noCheck$ is set to $true$ whenever $\bp_{i-2} \in \gen$ and $\bp_{i-3}\notin \links$ (line 9).
The flag $\noCheck{}$ ensures that node $\bp_{i-3}$ is tiled next such that the agent again enters phase \coat{} $\lhr{}$ of the last node $v \in \lp$ with $v \in \links \cup \{\start{}\}$.

\begin{figure}[!t]
    \centering
    \begin{subfigure}[c]{0.32\linewidth}
        \includegraphics[width=\linewidth]{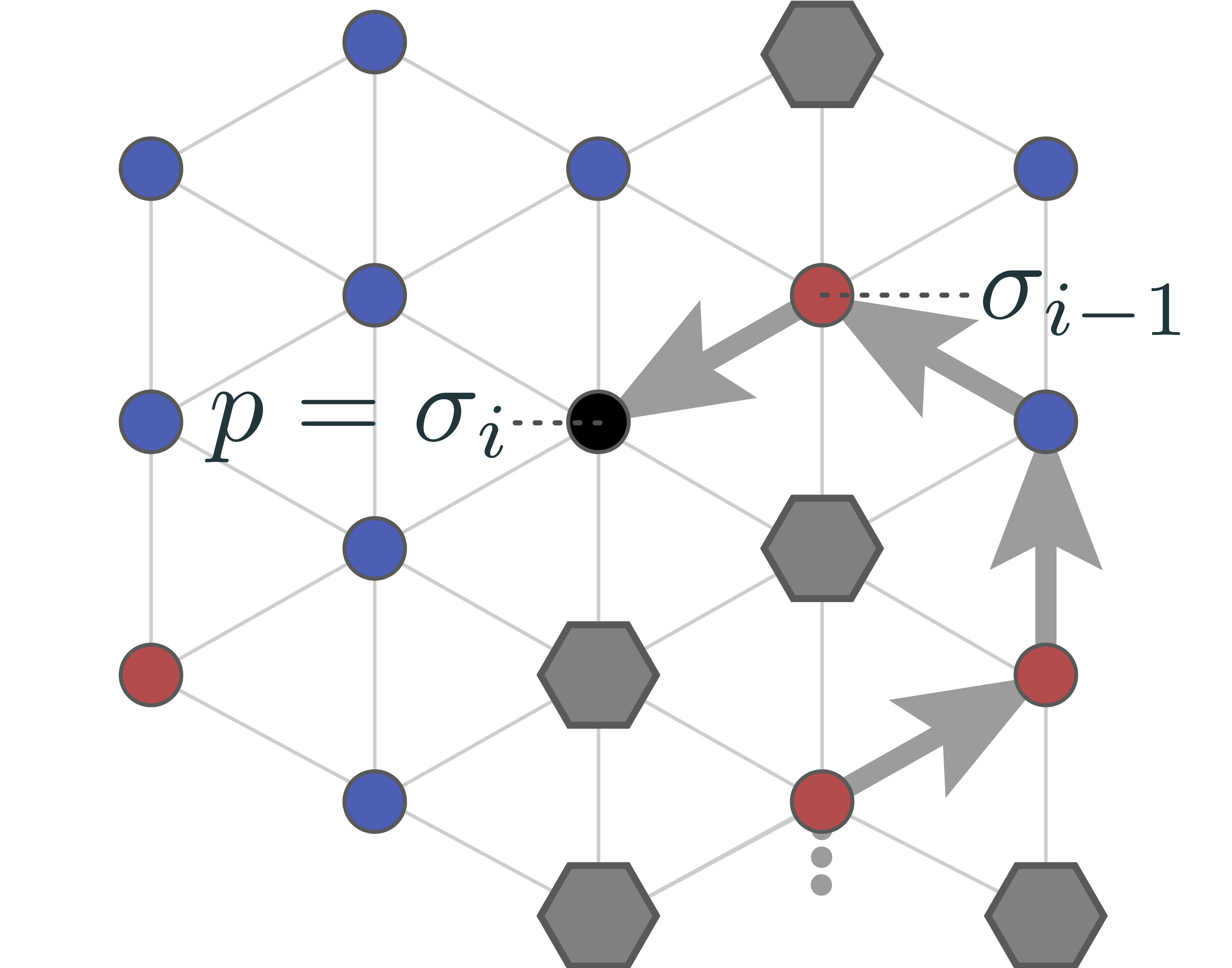}
        \subcaption{}
        \label{fig:alg1}
    \end{subfigure}
    \begin{subfigure}[c]{0.32\linewidth}
        \includegraphics[width=\linewidth]{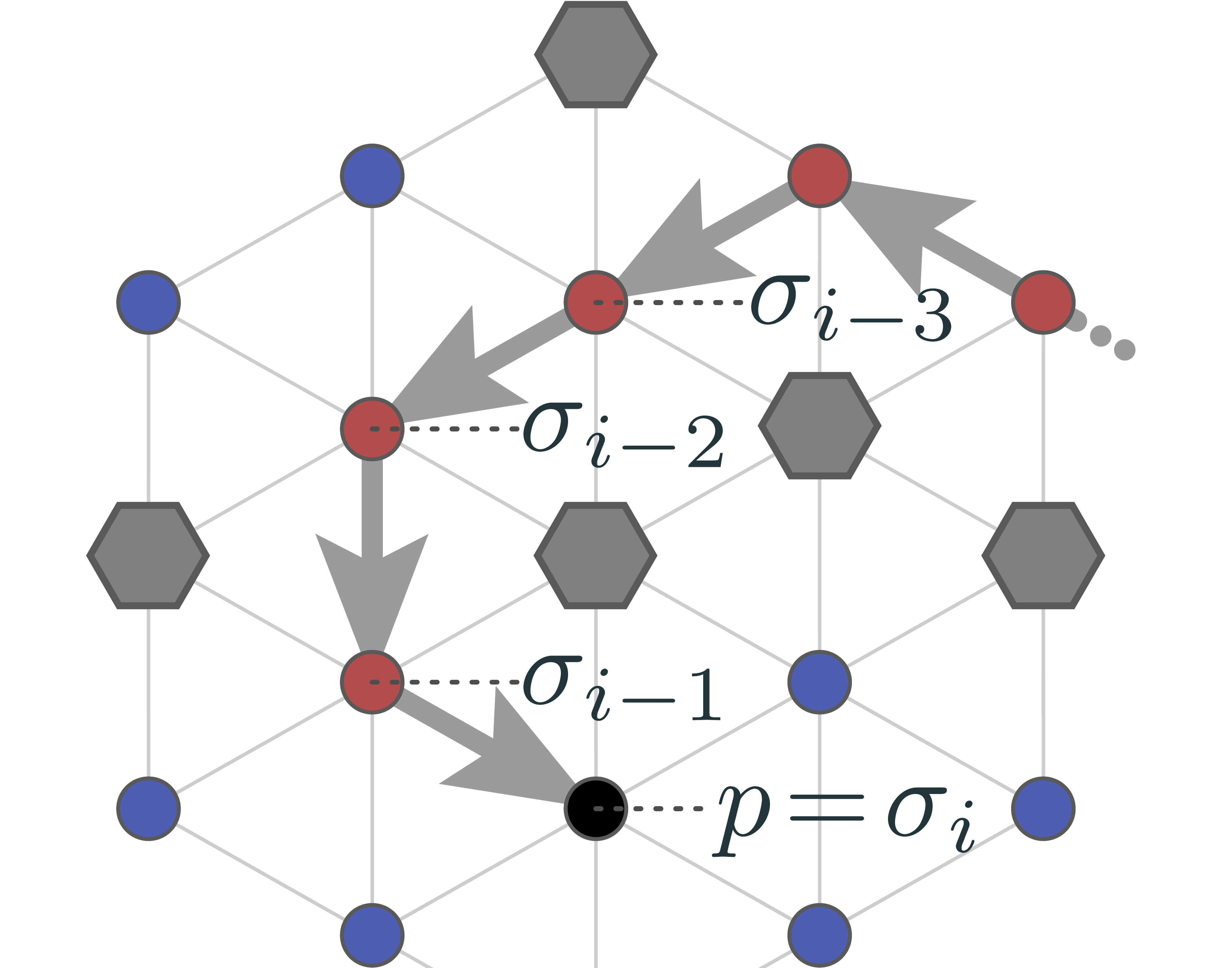}
        \subcaption{}
        \label{fig:alg2}
    \end{subfigure}
    \begin{subfigure}[c]{0.31\linewidth}
        \includegraphics[width=\linewidth]{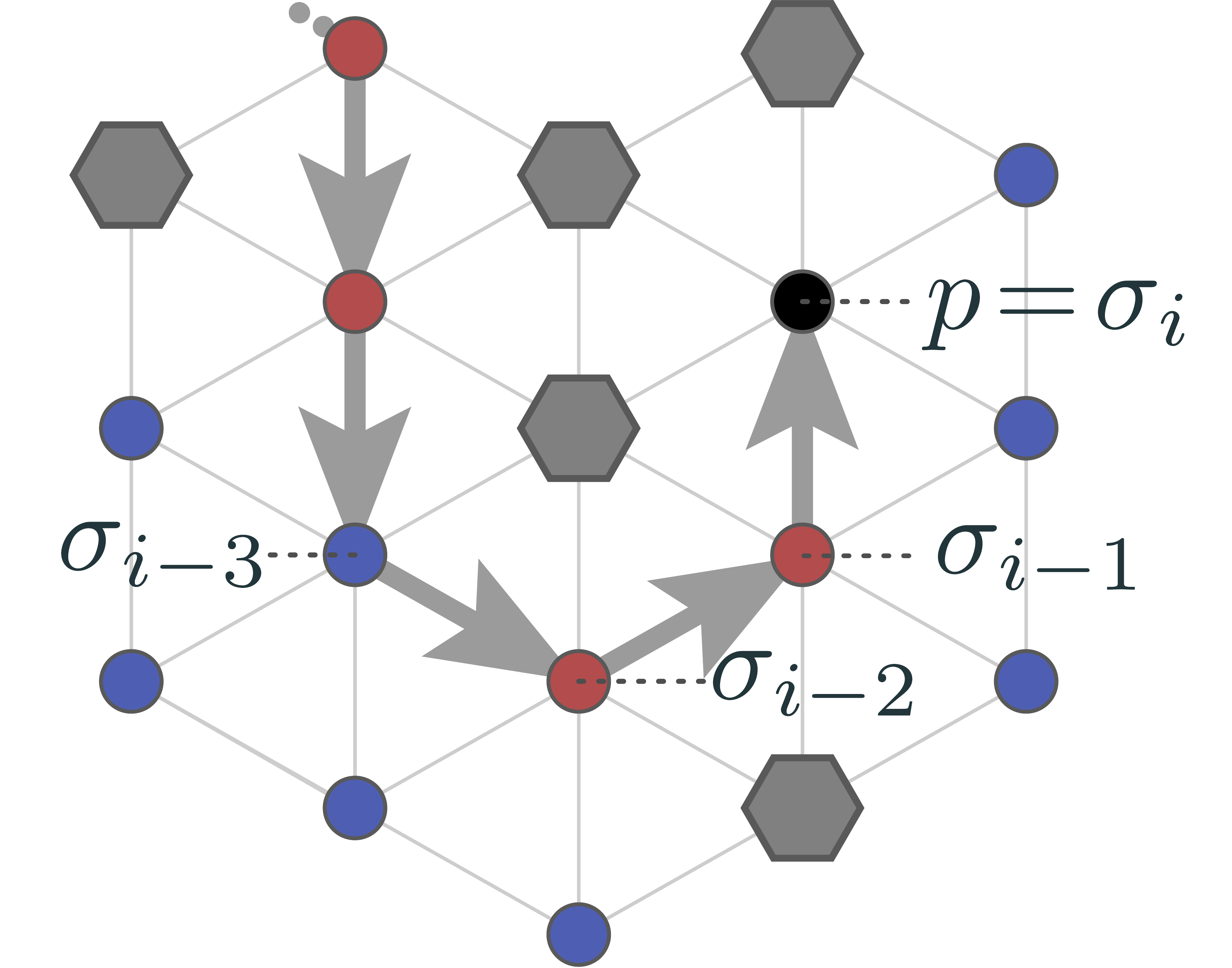}
        \subcaption{}
        \label{fig:alg3}
    \end{subfigure}%
    \caption{Examples in which the next tile is placed at some link (red disk) in phase \coat{}: $\sigma_{i-1}$ is tiled in (a); $\sigma_{i-2}$ is tiled in (b) and (c). Only in (c), $\noCheck$ is set to $true$ since $\sigma_{i-3} \notin \links \cup \{\start\}$. As a result, $\sigma_{i-3}$ is tiled on the next visit.}
    \label{fig:alg}
\end{figure}

In phase \fetch{}, the agent moves $\rhr{}$ until it is positioned at $\start$ (line 11), which it detects by the adjacent material depot and the direction stored in phase \init{}.
It moves to $p^0$, gathers material and returns to $\start$ (line 12).
If $\start$ has no empty neighbors, it places a tile at $\start$ and terminates (line 13).
Otherwise, the agent moves $\lhr{}$ as long as $p \in \links \cup \start$ or whenever a successor of $p$ in $\seg(p)$ is a link, and switches to phase \coat{} afterwards (lines 14--16).
In that step, the agent implicitly explores $\seg(p)$, since $p$ can have~multiple~successors.

\subsection{Analysis}
\label{subsec:analysis}

Consider an initial configuration $C^0 = (\occ^0,\obj{},p^0)$ with $p^0 \in \occ^0 \subseteq L$, and a material depot of size at least $|L| - |\occ^0| + 1$ at $p^0$.
In the problem statement we assume that $p^0$ is the only initially tiled node, but now we allow $\occ^0$ to contain multiple tiled nodes besides $p^0$.
This will later become useful in \cref{sec:construction} where we construct a virtual graph in which some nodes are tiled initially.
We analyze \cref{alg:algorithm} given that $C^0$ satisfies the following definition:

\begin{definition}
    \label{def:coatability}
    A configuration $C^0 = (\occ^0,\obj{},p^0)$ is coatable w.r.t. $\tri$, if $|B(v)| \leq 6$ for any $v \in \emp^0$, $\beo(p^0) \neq \emptyset$, $\links^0 = \emptyset$, and $\emp^0$ is connected.
\end{definition}

We aim to maintain five properties as invariants:
\emph{\textbf{P1}}: Links may only occur on the simple path $\lp$, i.e., $\links \subseteq \lp$.
\emph{\textbf{P2}}: All links are connected by a sequence of overlapping segments to the starting node, i.e., $\pred(v) \in \links \cup \{\start{}\}$ for any $v \in \links$.
\emph{\textbf{P3}}: The subpath of $\lp$ from $\start{}$ to the last link on $\lp$ induces no cycle in $\tri$, i.e., for any $i < j \leq k$ with $\lp_k \in \links$: if $d(\lp_i,\lp_j) = 1$, then $j = i+1$.
\emph{\textbf{P4}}: The boundary of any link contains precisely two connected components of empty nodes, i.e., $||\be(v)|| = 2$ for any $v \in \links$ where $||\cdot||$ denotes the number of connected components.
\emph{\textbf{P5}}: There exists a node of $\lp$ at which the agent enters phase \coat{}, i.e., either $\links = \emptyset$ or there is an $i$ such that $\lp_i \notin \links \cup \{\start{}\}$ and $\suc(\lp_i) = \lp_i$.


Observe that all properties hold initially by $\links^0 = \emptyset$.
The structure of our proof is as follows:
We prove termination given that $P5$ is maintained, and that $\emp$ never disconnects given that $P1$--$P4$ are maintained.
Since the agent always finds a node to place the next tile at by $P5$, there must eventually be a step in which $\be(\start{}) = \emptyset$.
Since $\emp{}$ remains connected until that step, $L = \occ{}$ holds after the last tile is placed at $\start{}$.
Finally, we show that $P1$--$P5$ are maintained as invariants.
We start with the termination of the algorithm:

\begin{lemma}
    \label{lem:termination}
    If $P5$ holds in step $t$ in which the agent gathers material at $p^0$, then there is a step $t^+ > t$ in which the agent enters $p^0$ again or terminates.
\end{lemma}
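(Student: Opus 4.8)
The plan is to argue that the agent makes steady, monotone progress toward $p^0$ in phase \fetch{} and cannot get stuck. First I would observe that immediately after gathering material at $p^0$ in step $t$, the agent is back at $\start$ (line~12), so it continues at line~13. If $\be(\start{}) = \emptyset$ it terminates in step $t^+ = t+1$ and we are done, so assume $\be(\start{}) \neq \emptyset$. Then the agent executes the \lhr{}-walk in lines~14--16 and enters phase \coat{}. Here I would use property $P5$ (which holds in step $t$ by hypothesis): $P5$ guarantees that either $\links = \emptyset$ or there is an index $i$ with $\lp_i \notin \links \cup \{\start{}\}$ and $\suc(\lp_i) = \lp_i$. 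In either case the \lhr{}-walk of lines~14--16 must terminate at such a node (the loop condition fails precisely when $p \notin \links \cup \{\start\}$ and no successor of $p$ in $\seg(p)$ is a link), because following the LHR from $\start$ along $\lp$ reaches this node after finitely many steps — $\lp$ is a \emph{simple} path of bounded length, so the walk cannot cycle. Thus the agent does enter phase \coat{} at a well-defined node $p = \lp_{i}$ with $p + \rhr{}$ the last link-or-$\start$ node on $\lp$, consistent with the precondition of phase \coat{}.

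Next I would track what happens in phase \coat{}. Regardless of which branch of the \uIf{}/\Else{} is taken (lines~4--5, or lines~7--10), the agent places exactly one tile and enters phase \fetch{}; moreover the tile is placed either at $p$ itself or at a node reached by following the RHR at most twice from $p$. So after at most a constant number of Look-Compute-Move cycles the agent is back in phase \fetch{} at some node $p'$ on (or adjacent to, via the generator shift) the boundary $B(\occ{})$, and from there line~11 makes the agent repeatedly move \rhr{} until it reaches $\start$. The key point is that the RHR-walk in line~11 traces a suffix of $\bp$ in reverse — more precisely, after placing a tile while at position $\bp_i$ (in the $\coat$ numbering) the nodes $\bp_{i-1}, \bp_{i-2}, \dots$ obtained by following the RHR form a descending-index walk along $\bp$ back toward $\bp_0 = \start$. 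Since $\bp$ is a finite closed walk that begins and ends at $\start$, and the RHR from the placement point retraces it, the agent reaches $\start = \bp_0$ in finitely many steps; there it detects $\start$ via the adjacent material depot and the stored direction, moves to $p^0$ in line~12, and that is the step $t^+ > t$ in which the agent enters $p^0$ again. If instead $\be(\start{}) = \emptyset$ at that point, the agent terminates in step $t^+$ — either outcome satisfies the lemma.

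The main obstacle, and the place I would spend the most care, is justifying finiteness of the RHR-walk in line~11 and of the \lhr{}-walk in lines~14--16 — i.e. ruling out that the agent loops forever without reaching $\start$ or the target \coat{} node. For line~11 the crux is that a tile was just placed (so the set $\occ{}$ strictly grew and the boundary $B(\occ{})$ changed accordingly), and one must verify that following the RHR from the placement node along the \emph{updated} boundary is exactly the reverse of the LHR-walk $\bp$ truncated appropriately, so that it is a simple enough walk to terminate at $\start$; this likely relies on the chordlessness assumption (2) on $B(v)$ and on the fact that $\start$ always has an adjacent tiled node (the depot neighbor, and the path $\bp$ is anchored there), together with the bookkeeping around generators and the $\noCheck$ flag (lines~8--9) which guarantees the agent re-enters \coat{} at the correct node. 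For lines~14--16, finiteness follows from $P5$ plus simplicity of $\lp$, as sketched above. I would structure the argument so that termination of these two walks is reduced to "a monotone quantity (index along $\bp$, resp. $\lp$) decreases, resp. increases, by at least one per bounded block of steps, and is bounded," citing the structural facts about $\bp$, $\lp$, $\seg$, $\suc$, $\pred$ established in the preliminaries.
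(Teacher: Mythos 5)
Your proposal is correct and takes essentially the same approach as the paper: both arguments rest on (i) $P5$ together with simplicity of $\lp$ forcing the LHR-loop of phase \fetch{} to exit within the simple prefix $\lp$, and (ii) the RHR return retracing the simple traversed boundary walk back to $\start$ because only its far end was tiled — the paper merely packages this contrapositively, deriving a contradiction from the two failure modes (ending in a component of $\emp{}$ not containing $\start$, or moving LHR forever). The retracing step you flag as the main obligation is exactly the claim the paper itself asserts in one sentence (a simple LHR path from $\start$ to the placement node keeps the agent in the component of $\emp{}$ containing $\start$), so no idea present in the paper's proof is missing from yours.
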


\begin{proof}
    Assume by contradiction that the agent does not terminate or enter $p^0$ again in any step $t' > t$.
    There are two cases: the agent places a tile in phase \coat{} and moves to a connected component of $\emp{}$ that does not contain $\start$, or the agent never enters phase \coat{}, i.e., it moves indefinitely $\lhr{}$ in phase \fetch{}.
    If the agent traverses a simple path from $\start{}$ to some node $v$ by moving $\lhr{}$, places a tile at $v$ and moves $\rhr{}$ afterwards, it must enter the connected component of $\emp{}$ that contains $\start{}$.
    Hence, in the first case, the agent places a tile at $v$ after visiting $v$ at least twice, i.e., it has fully traversed $\lp$, which contradicts the existence of node $\lp_i$ specified by $P5$.
    In the second case, the agent never reaches $\lp_i$, as it would otherwise enter phase \coat{} and place a tile.
    This implies a cycle $(\lp_j,...,\lp_k)$ with $\lp_{k+1} = \lp_{j}$ and $j < i$, which contradicts that $\lp$ is a simple path.
    Hence, there is a step $t^+$ in which $p^0$ is entered again or $\be(\start{}) = \emptyset$ and the agent terminates.
    \end{proof}

Subsequently, our notation refers to some step $t$ in which the agent gathers material at $p^0$.
With a slight abuse of notation we will use a superscript $^+$ ($^-$) to denote the next (previous) step $t^+$ ($t^-$) in which the agent gathers material at $p^0$ or terminates, e.g., $\emp{}^+$ denotes the set of empty nodes in step $t^+$.

\begin{lemma}
    \label{lem:enterCoat}
    If $P2$ holds in step $t$ and phase \coat{} is entered at some $\lp_i$ between step $t$ and $t^+$, then $\lp_{i-1}$ is the last node $v$ on $\lp$ with $v \in \links \cup \{\start{}\}$.
\end{lemma}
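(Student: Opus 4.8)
The plan is to split on how phase \coat{} is entered: the only two entry points are line~3 of \init{} and line~16 of \fetch{}. In both cases the agent arrives at $p=\lp_i$ through an \lhr{}-move out of $\lp_{i-1}$, and a first observation is that an \rhr{}-move then retraces that step, so that $p+\rhr{}=\lp_{i-1}$; this is where property~(2) enters, since a chordless boundary cycle forces $\lp_{i-1}$ and the tiled node $\anchor(\lp_i)$ to be consecutive on $\tri(B(\lp_i))$. It therefore remains to show that $\lp_{i-1}\in\links\cup\{\start{}\}$ and that $\lp_j\notin\links\cup\{\start{}\}$ for every $j>i-1$.

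The \init{} entry, and more generally the subcase $\links=\emptyset$, is immediate: then $\start{}=\lp_0$ is the only node of $\lp$ in $\links\cup\{\start{}\}$ and the agent enters \coat{} at $\lp_1$ (for the \fetch{} entry, the loop of lines~14--15 leaves $\lp_0=\start{}$ and halts at once, there being no link among the successors of $\lp_1$ in $\seg(\lp_1)$). So assume $\links\neq\emptyset$ and that \coat{} is entered from \fetch{}. After line~12 the agent sits at $\start{}=\lp_0$ with $\be(\start{})\neq\emptyset$, so it runs the loop of lines~14--15, moving \lhr{} and hence forward along $\lp$, and it halts at the smallest-index $\lp_i$ that is neither a link nor $\start{}$ and has no successor in $\seg(\lp_i)$ that is a link. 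By $P5$ the node $\lp_{i^*}$ with $\suc(\lp_{i^*})=\lp_{i^*}$ satisfies this stopping condition, so $i\le i^*\le l$ and $\lp_i$ genuinely lies on $\lp$.

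Let $\lp_{k^*}$ be the link of largest $\lp$-index. The core claim is $i=k^*+1$; granting it, $\lp_{i-1}=\lp_{k^*}$ is a link while no $\lp_j$ with $j\ge i$ is a link (maximality of $k^*$) or equals $\start{}$ (as $j\ge 1$ and $\lp$ is simple with $\lp_0=\start{}$), which is exactly the assertion. The bound $i\le k^*+1$ is clear: $\lp_{k^*+1}$ is not a link and not $\start{}$, and all its successors in its segment have index $>k^*$, so none is a link. For $i\ge k^*+1$ I would show that every $\lp_j$ with $1\le j\le k^*$ fails the stopping condition. If such a $\lp_j$ is not itself a link then $j<k^*$; pick the link $\lp_{k_s}$ of least index exceeding $j$. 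By $P2$, $\pred(\lp_{k_s})\in\links\cup\{\start{}\}$, and since $\lp_{k_s-1}\in\seg(\lp_{k_s})$ the node $\pred(\lp_{k_s})$ has index strictly below $k_s$; by minimality of $k_s$ there is no link strictly between $j$ and $k_s$, so this index is at most $j$. Hence the contiguous block of indices realizing $\seg(\lp_{k_s})$ runs from an index $\le j$ up through $k_s$, so $\lp_j\in\seg(\lp_{k_s})$, $\seg(\lp_j)=\seg(\lp_{k_s})$, and $\lp_{k_s}$ is a link-valued successor of $\lp_j$ --- contradicting that $\lp_j$ met the stopping condition.

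The main obstacle is the segment bookkeeping that the last step relies on. One needs the structural facts that $\seg(\bp_i)$ is a contiguous run of $\bp$-indices around the single tiled node $\anchor(\bp_i)$, that $u\in\seg(w)$ holds exactly when $\seg(u)=\seg(w)$, and that $\pred(\bp_i)\neq\bp_i$ for $i\ge 1$ (equivalently $\bp_{i-1}\in\seg(\bp_i)$, which also underpins the identity $p+\rhr{}=\lp_{i-1}$). These follow by unwinding the definition of the \lhr{}-traversal together with property~(2), but pinning them down cleanly is the technical heart; once they are in hand, the intuitive content of $P2$ --- that link-segments chain back toward $\start{}$ --- makes the remainder routine.
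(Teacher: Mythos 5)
Your argument is correct in substance and rests on the same two pillars as the paper's proof---the semantics of the while-loop in phase \fetch{} and property $P2$ read through the segment structure of the LHR traversal---but it is organized differently. The paper fixes the entry node $\lp_i$, assumes for contradiction that some link $\lp_k$ with $k>i-1$ exists, and follows the iterated-$\pred$ chain $S$ from $\lp_k$ back to $\start$; since the anchor changes only after leaving a node $v$ with $\suc(v)=v$, that chain must pass through $\suc(\lp_i)\notin\links\cup\{\start\}$, which contradicts $P2$ one chain step earlier. You instead characterize the stopping index directly as $k^*+1$ (one past the last link), checking for every $j\le k^*$ that the while-condition holds by applying $P2$ once, locally, to the nearest link above $j$. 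Both routes lean on the same structural bookkeeping (segments are contiguous runs of $\bp$-indices overlapping at their endpoints), which the paper also only asserts in a single sentence, so deferring it is fair; your per-index version is arguably more transparent, while the paper's chain argument avoids your extra obligation that $\lp_{k^*+1}$ exists and fails the condition (an edge case you should at least address, together with the node-versus-index identification for successors, which the paper glosses as well). Two further caveats. First, your auxiliary claim that $u\in\seg(w)$ holds exactly when $\seg(u)=\seg(w)$ is false at the overlap node where the anchor changes: there $u=\pred(w)$ lies in $\seg(w)$ but its own segment is the preceding one. Your use survives because $P2$ puts $\pred(\lp_{k_s})$ in $\links\cup\{\start\}$ while $\lp_j$ is assumed not to be, so the index of $\pred(\lp_{k_s})$ is strictly below $j$ and $\lp_j$ is interior to the run, where the identification of segments is legitimate---but it should be stated that way. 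Second, drop the appeal to $P5$: the lemma is invoked in \cref{lem:cycles} under $P1$--$P4$ only, and you do not need $P5$, since the hypothesis that phase \coat{} is entered at some $\lp_i$ already guarantees that the fetch-walk stops on $\lp$.
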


\begin{proof}
    As long as $p \in \links \cup \{\start{}\}$, the agent moves \lhr{} in phase \fetch{} which implies $\lp_{i-1} \in \links \cup \{\start{}\}$.
    It also moves \lhr{} if a successor of $p$ in $\seg(p)$ is a link.
    Hence, no successor $v$ of $\lp_{i-1}$ in $\seg(\lp_{i-1})$ is a link.
    Assume by contradiction that $\lp_{i-1}$ is not the last node on $\lp$ that is contained in $\links \cup \{\start{}\}$.
    Let $\lp_k$ be the first link after $\lp_{i-1}$, i.e., $\lp_k \in \links$ and $k > i-1$.
    Since no successor of $\lp_{i-1}$ in $\seg(\lp_{i-1})$ is a link, $\lp_k$ cannot be contained in $\seg(\lp_{i-1})$.
    Let $S = (v_0, ..., v_m)$ be the sequence of nodes with $v_0 = \lp_k$, $v_m = \start$ and $v_j = \pred(v_{j-1})$ for any $0 < j \leq m$.
    The agent's anchor changes only if there is no further successor in its current segment, i.e., \emph{after} moving $\lhr{}$ at some node $v$ of $\lp$ with $\suc(v) = v$.
    This implies $\suc(\pred(v)) = \pred(v)$ for all $v \in \lp$.
    It follows that $S$ must contain some node $v_j$ with $0 < j < m$ for which $v_j = \suc(\lp_i)$.
    Since $\suc(\lp_i) \notin \links$, there must exists some $v_{j'}$ with $j' < j$ for which $\pred(v_{j'}) \notin \links \cup \{\start{}\}$ which contradicts $P2$ and concludes the lemma.
    \end{proof}


\begin{lemma}
    \label{lem:cycles}
    If $\emp{}$ is connected and $P1$--$P4$ hold in step $t$, then $\emp^+$ is connected.
\end{lemma}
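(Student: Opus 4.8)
The plan is to follow the single tile placement that occurs between step $t$ and step $t^+$ and to show it cannot disconnect $\tri(\emp{})$. First I would note that, by inspection of \cref{alg:algorithm}, between two consecutive steps at which the agent gathers material at $p^0$ the agent walks back to $\start{}$, performs a (possibly empty) run of $\lhr{}$-moves, enters phase \coat{}, executes exactly one action of type (ii), and then in phase \fetch{} only moves (by $\rhr{}$ and then to $p^0$) -- unless it terminates in line 13. Hence it suffices to show that the node $v$ at which the tile is placed is not a cut node of $\tri(\emp{})$; since $\emp{}$ is connected by hypothesis, $\emp{}^+ = \emp{} \setminus \{v\}$ is then connected as well. (If the agent terminates in line 13, then $\be(\start{}) = \emptyset$, so $\start{}$ is isolated in $\tri(\emp{})$; as $\emp{}$ is connected this forces $\emp{} = \{\start{}\}$ and $\emp{}^+ = \emptyset$, which is connected.)

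Next I would dispose of the easy case, where the tile is placed in lines 4--5, so $v = p$. Phase \coat{} is always entered with $p \notin \links{} \cup \{\start{}\}$, and nothing happens between entering the phase and the placement, so $v \notin \links{}$. I would then prove, once, the auxiliary fact announced in \cref{subsec:preliminaries}: if $v \in \emp{}$ is not a link, then $\be(v)$ is connected, hence a single arc of the chordless cycle $\tri(B(v))$ (assumption (2)); any path in $\tri(\emp{})$ through $v$ enters and leaves $v$ via two nodes of $\be(v)$, which are joined inside that arc, so $v$ can be bypassed and is not a cut node. This same rerouting argument is the backbone of the hard case.

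The core of the proof is the case where the tile is placed in lines 7--10, i.e.\ the guard in line 4 failed and $R_3(p, p+\rhr{}) \cap (\links{} \cup \{\start{}\}) \neq \emptyset$. Writing $p = \lp_i$ and $q = p + \rhr{} = \lp_{i-1}$, I would use $P2$ together with \cref{lem:enterCoat} to get that $q$ is the last node of $\lp{}$ in $\links{} \cup \{\start{}\}$, then use $P1$ (all links lie on $\lp{}$) and $P3$ to conclude that the detected node is some $\lp_j$ with $j < i$ and that the subpath $(\lp_j, \lp_{j+1}, \dots, \lp_i)$ of $\lp{}$ closes up into a cycle $Z$ of $\tri(\emp{})$ via a bounded-length path from $\lp_i$ back to $\lp_j$ that avoids $q$ (such a path exists since $\lp_j \in R_3(p,q)$ and $R_3$ is taken in $\tri(\emp{} \setminus \{q\})$, see \cref{def:range}). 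The node $v$ at which the tile is placed is $q$ itself when $q \notin \gen{}$, and $\bp_{i-2}$ otherwise; in either case $v$ lies on the subpath $(\lp_j,\dots,\lp_i)$ of $Z$, so $v$ has two neighbours on $Z$ and $Z \setminus \{v\}$ is a path joining them. If $v$ is not a link it is not a cut node by the previous paragraph; if $v$ is a link, $P4$ gives that $\be(v)$ has exactly two components, and the plan is to show that the two $Z$-neighbours of $v$ fall into these two distinct components, so that $Z \setminus \{v\}$ reconnects them inside $\tri(\emp{} \setminus \{v\})$. The $\noCheck{}$ update in line 9, together with the ``wrong side'' consideration, is precisely what makes this true, and I would verify it by unfolding the three local shapes of $(\bp_{i-1}, \bp_{i-2}, \bp_{i-3})$ depicted in \cref{fig:alg}.

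I expect the main obstacle to be exactly that last point: proving, by a finite case analysis on which of $\bp_{i-1}, \bp_{i-2}, \bp_{i-3}$ are links or generators, that the two $Z$-neighbours of $v$ genuinely sit in \emph{different} components of $\be(v)$, and that the detour through $R_3(p,q)$ -- which \cref{def:range} phrases ``as if $q$ were tiled'' -- still avoids $v$ (and not only $q$), hence remains a detour in the real graph $\tri(\emp{} \setminus \{v\})$ once the tile lands on $v$ rather than on $q$. Chordlessness of boundaries and the bound $|B(v)| \le 6$ are what keep this case analysis finite and local.
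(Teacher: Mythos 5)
Your route is the same as the paper's: placements at non-links are harmless by \cref{def:1link}, and a placement at a link is justified by closing the subpath of $\lp$ into a cycle through the placed node, using the node detected in $R_3(\lp_i,\lp_{i-1})$ (which, by \cref{lem:enterCoat} and $P1$, was visited in the previous \fetch{}-phase) and then invoking $P4$. Concretely, the paper takes $\lp_j$ to be the \emph{last} node of $\lp$ contained in $R_3(\lp_i,\lp_{i-1})$ with $j<i-1$, lets $P$ be a shortest path from $\lp_i$ to $\lp_j$ inside $\tri(R_3(\lp_i,\lp_{i-1}))$, and argues that $C = P\circ(\lp_{j+1},\dots,\lp_{i-1})$ is a simple cycle containing $\lp_{i-1}$ and $\lp_{i-2}$, so that by $P4$ the cycle meets every component of $\be(\lp_{i-1})$ (and of $\be(\lp_{i-2})$ if that node is a link), whence tiling either node cannot disconnect $\emp$. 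Note that the maximal choice of $j$ already answers one of your two worries: if $\lp_{i-2}\in R_3(\lp_i,\lp_{i-1})$ then $j=i-2$ and $P$ ends there, and otherwise $P\subseteq R_3(\lp_i,\lp_{i-1})$ avoids $\lp_{i-2}$ altogether, so the detour never runs through the node that is actually tiled except as an endpoint of the cycle.

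The genuine gap is that your proposal stops exactly at the decisive claim: you announce as a ``plan'' that the two cycle-neighbours of the tiled node $v$ lie in \emph{distinct} components of $\be(v)$, and you explicitly flag this as the main open obstacle instead of proving it. Everything before that point (one placement per round, the termination case, the non-link case, the construction of the cycle) is routine; the content of the lemma is precisely the assertion that the cycle hits both components of $\be(v)$, which the paper discharges via $P4$ (a link has exactly two boundary components of empty nodes) together with the fact that $C$ passes through $\lp_{i-1}$ and $\lp_{i-2}$ with neighbours on both sides. Without that verification the cycle argument does not yet show that $v$ fails to be a cut node of $\tri(\emp)$, so the attempt as written does not prove the lemma; the finite local case analysis you sketch (over the shapes in \cref{fig:alg}, using chordlessness, $|B(v)|\le 6$, \cref{lem:generatorSharedTiled} and \cref{cor:generatorSize}) is the kind of argument the paper reserves for \cref{lem:invariant}, and it would still have to be carried out here to close your version of the proof.
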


\begin{proof}
    Tiling a node $v \notin \links$ cannot disconnect $\emp{}$ by \cref{def:1link}.
    This covers the tile placement at the start of phase \coat{}, especially the case $\noCheck = true$, and the last placement before termination in phase \fetch{}.
    Thus, we must only consider cases in which a tile is placed at some link $v \in \links$.
    By \cref{lem:enterCoat}, the agent enters phase \coat{} at $\lp_i$ such that $\lp_{i-1}$ is the last node $w \in \lp$ with $w \in \links \cup \{\start{}\}$.
    Together with $P1$ follows that whenever it detects some node $w \in R_3(\lp_i,\lp_{i-1})$ with $w \in \links \cup \{\start{}\}$, then $w$ was visited in the previous execution of phase \fetch{}.
    Let $\lp_j$ be the last node of $\lp$ that is contained in $R_3(\lp_i,\lp_{i-1})$ with $j < i-1$, and $P$ be the shortest path from $\lp_i$ to $\lp_j$ in $\tri(R_3(\lp_i,\lp_{i-1}))$.
    Then $C = P \circ (\lp_{j+1}, \lp_{j+2},...,\lp_{i-2}, \lp_{i-1})$ is a simple cycle in $\tri$, where $\circ$ is the concatenation of paths.
    Placing a tile at $\lp_{i-1}$ or $\lp_{i-2}$ cannot disconnect $C$ since it is a cycle, and it cannot disconnect $\emp{}$ since $C$ contains nodes of all connected components of $\be(\lp_{i-1})$ (and $\be(\lp_{i-2})$, if $\lp_{i-2} \in \links$) by $P4$.
    \end{proof}

\begin{lemma}
    \label{lem:generatorSharedTiled} 
    If $\bo(v) \cap \bo(w) \neq \emptyset$, then tiling $v$ cannot increase $||\be(w)||$.
\end{lemma}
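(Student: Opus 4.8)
The statement to prove is \cref{lem:generatorSharedTiled}: if $\bo(v) \cap \bo(w) \neq \emptyset$, then tiling $v$ cannot increase $||\be(w)||$, the number of connected components of empty nodes in the boundary of $w$. The plan is to argue locally in the chordless cycle $\tri(B(w))$. Since tiling $v$ only removes a single node $v$ from $\be(w)$ (it affects $w$'s boundary only if $v \in B(w)$, and if $v \notin B(w)$ there is nothing to prove), we must show that deleting $v$ from the cyclic structure $\be(w)$ does not split one component into two. The key hypothesis is that $v$ and $w$ share a common tiled neighbor $u \in \bo(v) \cap \bo(w)$.

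**Key steps.**
First I would observe that by assumption (2) in \cref{sec:algorithm}, $\tri(B(w))$ is a single simple cycle $(c_0, c_1, \dots, c_{d-1})$ on the $d = |B(w)| \le 6$ neighbors of $w$, so $\be(w)$ consists of the empty nodes among the $c_i$, and its components are exactly the maximal runs of consecutive empty nodes along this cycle, separated by the tiled nodes. Second, since $u \in B(w)$ is tiled, it is one of the $c_i$, say $u = c_0$. Third, the crucial geometric fact is that $v$ and $u$ are adjacent (since $u \in B(v)$), and $v \in B(w)$, so $v$ is a neighbor of $w$ adjacent to $c_0$ in $\tri(B(w))$; because $\tri(B(w))$ is an \emph{induced} cycle (chordless), the only neighbors of $c_0$ within $B(w)$ are its two cyclic neighbors $c_1$ and $c_{d-1}$. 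Hence $v \in \{c_1, c_{d-1}\}$, i.e., $v$ is adjacent along the cycle to a tiled node. Fourth, deleting an endpoint of a run of consecutive empty nodes (the run is adjacent to the tiled node $c_0$ on one side) cannot disconnect that run: the run $(\dots, v)$ simply becomes $(\dots)$, one shorter, still connected (or empty). Therefore the number of components of $\be(w)$ does not increase; it either stays the same (if $v$'s run had length $\ge 2$) or decreases by one (if $v$ was an isolated empty node between two tiled nodes).

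**Main obstacle.**
The delicate point is justifying that $v$ must be a cyclic neighbor of the shared tiled node $u$ in $\tri(B(w))$ — this is where chordlessness (assumption (2)) is essential, and I would state it carefully: within the induced cycle $\tri(B(w))$, every node has exactly two neighbors, namely its cyclic successor and predecessor, so a tiled node $u$ can have an empty neighbor $v$ in $B(w)$ only if $v$ is one of those two. Once $v$ is pinned to the boundary of its component's run, the rest is the elementary observation that removing a boundary vertex of a path (or a singleton) never disconnects it. I would also note the edge case where $\be(w)$ is already disconnected by other tiled nodes — this is fine, since the argument is purely local to the run containing $v$ and leaves all other runs untouched. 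I expect the whole proof to be short, a few lines, with the chordlessness invocation being the only non-routine ingredient.
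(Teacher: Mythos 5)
Your proof is correct and takes essentially the same route as the paper's: both arguments use chordlessness of $B(w)$ to conclude that $v$ lies next to a tiled node on the cycle $\tri(B(w))$, and then observe that removing such an end-of-run node from $\be(w)$ cannot split a component. The only (cosmetic) difference is how the adjacency is pinned down — the paper identifies the two triangular faces on the edge $\{v,w\}$, giving $B(v)\cap B(w)=\{u_1,u_2\}$ with one $u_i$ tiled, whereas you argue directly that the shared tiled neighbor $u$ must have $v$ as one of its two cyclic neighbors in $B(w)$.
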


\begin{proof}
    The lemma follows trivially if $v \notin B(w)$.
    Consider arbitrary $v,w \in L$ with $v \in B(w)$.
    Since $\tri$ is a triangulation, the edge $\{v,w\}$ is contained in precisely two triangular faces, each of which contains another node $u_1$ and $u_2$, respectively.
    Since $B(v)$ and $B(w)$ are chordless, these are the only nodes adjacent to both $v$ and $w$, which implies that $B(v) \cap B(w) = \{u_1,u_2\}$ and that $\{u_1,v,u_2\}$ is connected in $B(w)$.
    If any $u_i$ is tiled, then $||\bo(w) \cup \{v\}||\leq ||\bo(w)||$.
    Thereby, $||\be(w) \setminus \{v\}|| = ||\bo(w) \cup \{v\}|| \leq ||\bo(w)|| = ||\be(w)||$.
    \end{proof}

We can now deduce the precise neighborhood of $v$ for the case where $v$ is both a link and a generator, i.e., $v \in \links \cap \gen$.
By \cref{def:coatability}, $B(v)$ contains at most six nodes, at least two of which must be tiled, as otherwise $v$ cannot be a link.
Hence, at most four empty nodes in at least two connected components of $\be(v)$ remain.
If each connected component has size at most two, then all nodes in $\be(v)$ share a tiled neighbor with $v$, which contradicts that $v$ is a generator by the previous lemma.
Hence, as a corollary we obtain the following:

\begin{corollary}
    \label{cor:generatorSize}
    For any $v \in \links \cap \gen$: $\bo(v)$ contains two connected components of size one, and $\be(v)$ contains two connected components of size one and three.
\end{corollary}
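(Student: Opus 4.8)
The plan is to read off the exact cyclic arrangement of tiled and empty nodes in $B(v)$ from three ingredients: the degree bound, the link property, and \cref{lem:generatorSharedTiled}. I would first fix $v \in \links \cap \gen$ and collect the cheap constraints. By assumption~(2) the graph $\tri(B(v))$ is a single chordless cycle, and by \cref{def:coatability} it has at most six nodes. Since $v$ is a link, $\be(v)$ is disconnected, hence non-empty and properly contained in $B(v)$; thus along the cycle the empty nodes form at least two maximal runs, the tiled nodes form at least two maximal runs as well, so $|\bo(v)| \ge 2$ and therefore $|\be(v)| \le 4$.

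Next I would exploit that $v$ is a generator to force a long empty run. Let $w$ be a node generated by $v$; then $w \in \be(v)$ and tiling $v$ increases $||\be(w)||$ from $1$ to at least $2$. By the contrapositive of \cref{lem:generatorSharedTiled} this is possible only if $\bo(v) \cap \bo(w) = \emptyset$. As already extracted inside the proof of that lemma, the edge $\{v,w\}$ lies in exactly two triangular faces of $\tri$, whose third vertices $u_1, u_2$ are --- by chordlessness --- precisely the two common neighbours of $v$ and $w$; equivalently, $u_1$ and $u_2$ are the two cycle-neighbours of $w$ in $\tri(B(v))$. Being adjacent to both $v$ and $w$, each $u_i$ would lie in $\bo(v)\cap\bo(w)$ if it were tiled, so $u_1$ and $u_2$ are empty. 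Hence $u_1, w, u_2$ are three consecutive empty nodes on the cycle, i.e., $\be(v)$ has a component of size at least three.

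Finally I would finish by elementary counting on the cycle. From $|\be(v)| \le 4$, at least two empty components, and one of size $\ge 3$, the only possibility is exactly two empty components, of sizes $3$ and $1$, with $|\be(v)| = 4$; then $|\bo(v)| = |B(v)| - 4 \le 2$, which together with $|\bo(v)| \ge 2$ forces $|B(v)| = 6$ and $|\bo(v)| = 2$. Placing two tiled nodes on a $6$-cycle whose empty nodes form one run of length $3$ and one run of length $1$ leaves those two tiled nodes non-adjacent, so $\bo(v)$ consists of two components of size one and $\be(v)$ of two components of sizes one and three, which is the claim.

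The one step that needs care is the middle one: turning ``$v$ generates a link at $w$'' --- a statement about disconnecting $\be(w)$ --- into the structural fact that $w$'s two cycle-neighbours in $B(v)$ are empty. This hinges on reading \cref{lem:generatorSharedTiled} in contrapositive form together with the identification of $u_1,u_2$ as the common neighbours of $v$ and $w$ (which is exactly the combinatorial observation already made when proving that lemma); once that is in place, the remaining length-at-most-six case analysis is entirely routine.
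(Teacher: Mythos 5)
Your proposal is correct and follows essentially the same route as the paper's own deduction preceding the corollary: the degree bound from \cref{def:coatability}, the fact that a link forces at least two tiled nodes on the chordless boundary cycle, an application of \cref{lem:generatorSharedTiled}, and the final counting on the at-most-six-cycle. The only difference is cosmetic: you apply the lemma in contrapositive form to the generated node $w$ (its two cycle-neighbours must be empty), whereas the paper argues that if every empty component had size at most two, every node of $\be(v)$ would share a tiled neighbour with $v$, contradicting that $v$ is a generator.
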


\begin{lemma}
    \label{lem:invariant}
    If $P1$--$P5$ hold and a node is tiled in step $t$, then either $P1$--$P5$ hold in step $t^+$ or the agent sets $\noCheck = true$ and $P1$--$P5$ hold in step $t^{++}$.
\end{lemma}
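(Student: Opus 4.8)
The plan is to verify the five invariants one at a time, conditioned on which branch of phase \coat{} or \fetch{} produced the new tile in step $t$. There are essentially three tiling events to consider: (a) the "safe" placement at $p$ in lines 4--5 (including the $\noCheck = true$ case and the final placement at $\start{}$ in line 13), (b) the placement at a link reached via RHR in lines 7--10, split into the sub-cases $\bp_{i-1}$ tiled, $\bp_{i-2}$ tiled with $\noCheck$ staying $false$, and $\bp_{i-2}$ tiled with $\noCheck$ set to $true$ (the last being the case that only guarantees the invariants at $t^{++}$). For each of these I would track how $\links$, the paths $\bp$ and $\lp$, and the segment structure change.

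First I would handle \emph{P4}: by \cref{def:coatability} every empty node has $|B(v)| \le 6$, and a link needs at least two tiled boundary nodes, so $\|\be(v)\| \le 2$; combined with $\|\be(v)\| \ge 2$ from the link definition, $P4$ is automatic for \emph{any} link at any time, so it is maintained trivially. Next, \emph{P1} (links lie on $\lp$) and \emph{P2} (every link's predecessor is a link or $\start{}$): the key tool is \cref{lem:generatorSharedTiled} together with \cref{cor:generatorSize}, which pins down the exact local picture around a node that is simultaneously a link and a generator — $\be(v)$ splits into components of sizes one and three. I would argue that when a tile is placed at $p$ (case (a)), no new link is generated off $\lp$ because $\noCheck$/the $R_3$-check guarantees $p$ was reached immediately $\lhr{}$ of the last link on $\lp$, so any generated link sits at the next node of $\lp$ and its predecessor is exactly the old last link (or $\start{}$), giving $P1$ and $P2$. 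For case (b), \cref{lem:enterCoat} tells us the agent entered \coat{} at $\lp_i$ with $\lp_{i-1}$ the last link on $\lp$, and \cref{lem:cycles}'s cycle $C$ shows the tiled link lies on a detected cycle; I would then check, using \cref{cor:generatorSize}, that tiling $\bp_{i-1}$ or $\bp_{i-2}$ consumes the detected link and does not spawn a link on the "wrong side" — this is precisely the role of the $\noCheck$ flag in line 9, which forces $\bp_{i-3}$ to be tiled next and restores the property that the agent re-enters \coat{} $\lhr{}$ of the genuine last link on $\lp$.

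For \emph{P3} (the prefix of $\lp$ up to the last link is chordless, i.e., induces no cycle) I would again split by case: placing a tile at $p$ in case (a) can only shorten $\lp$ or append to it beyond the last link, so it cannot create a chord among $\lp_0,\dots,\lp_k$ with $\lp_k$ a link; in case (b), the tiled link $\lp_{i-1}$ (or $\lp_{i-2}$) was the \emph{last} link, so after consuming it the new last link has strictly smaller index, and the portion of $\lp$ up to it was already chord-free by the inductive hypothesis — the only thing to rule out is that a \emph{newly generated} link further back violates $P3$, which I would exclude exactly as in the $P1$/$P2$ argument via the local picture of \cref{cor:generatorSize}. Finally \emph{P5} (there is an $\lp_i \notin \links \cup \{\start{}\}$ with $\suc(\lp_i) = \lp_i$, i.e. a valid place to enter \coat{}): I would show that after any tiling event the updated $\links^+$ is either empty — in which case $P5$ holds vacuously, and this is where the termination argument in \cref{lem:termination} kicks in — or else, using $P2^+$, the chain of predecessors from the last link reaches $\start{}$ through overlapping segments, and the node immediately following the last link along $\lp$ (which exists and is not a link by maximality of "last link") is a segment-terminal successor, furnishing the required $\lp_i$.

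I expect the main obstacle to be case (b) with $\noCheck$ set to $true$, i.e. showing that the invariants are restored at step $t^{++}$ rather than $t^+$: here a tile is placed at $\bp_{i-2}$, which may itself be a generator (\cref{cor:generatorSize}), so $\links$ can momentarily acquire an extra link at $\bp_{i-3}$ or elsewhere that violates $P1$ or $P2$ at step $t^+$; I must show this transient link is exactly the one consumed when $\bp_{i-3}$ is forcibly tiled in the next \coat{} execution (triggered by $\noCheck$), and that after that second placement the agent is back to entering \coat{} $\lhr{}$ of the true last link on $\lp$. Carefully bookkeeping the two-step interaction between line 9's condition ($\bp_{i-2} \in \gen$ and $\bp_{i-3} \notin \links \cup \{\start{}\}$) and the segment structure around $\bp_{i-3}$ is the delicate part; everything else reduces to local case analysis on boundaries of size at most six via \cref{lem:generatorSharedTiled} and \cref{cor:generatorSize}.
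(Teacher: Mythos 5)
Your overall decomposition (case analysis on which branch of \coat{} placed the tile, driven by \cref{lem:enterCoat}, \cref{lem:generatorSharedTiled} and \cref{cor:generatorSize}) matches the paper's proof structure, but two of your steps do not go through as stated. First, the claim that $P4$ is automatic is false: $|B(v)| \leq 6$ plus ``at least two tiled boundary nodes'' does not give $||\be(v)|| \leq 2$, because $B(v)$ is a chordless cycle of length up to six, and an alternating tiled/empty pattern around it yields \emph{three} components of $\be(v)$. $P4$ is a genuine invariant, and its maintenance is precisely where \cref{lem:generatorSharedTiled} has to be invoked: when the agent tiles $\lp_i$ adjacent to the existing link $\lp_{i-1}$, one must argue that this placement cannot increase $||\be(\lp_{i-1})||$ (the paper does this via the shared tiled neighbor); your shortcut skips the only nontrivial part of $P4$.

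Second, you correctly identify the $\noCheck$ case as the crux but do not supply the argument that makes it work, and you partially undercut it by folding the $\noCheck=true$ placement into the ``safe'' case (a) and justifying it by the $R_3$-check --- which is exactly the check that is \emph{skipped} when $\noCheck=true$. The paper's proof of this case hinges on reconstructing the local configuration exactly: using \cref{cor:generatorSize} it shows that the two links $\lp_i,\lp_{i+1}$ must both have been generated by the single tile placed at $\anchor(\lp_i)$ in the immediately preceding round (so $|B(\anchor(\lp_i))| \leq 6$ and the neighborhood is unique up to rotation), then identifies the one freshly generated link $v$ on the far side of the new tile, shows that the only violated property at $t^+$ is $P2$ (since $\pred^+(v)=\lp_{i-1}\notin\links^+$ --- note the transient link is $v$, not a link ``at $\bp_{i-3}$''), and shows that the forced placement at $\lp_{i-1}$ restores $P2$ and preserves $P1$, $P3$--$P5$ at $t^{++}$. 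Moreover, the companion case in which the tiled node is in $\links\cap\gen$ but $\lp_{i-1}\in\links\cup\{\start\}$ (the paper's Claim 4) needs a separate argument distinguishing $|\bo(\anchor(\lp_i))|=1$ from $|\bo(\anchor(\lp_i))|=2$ and ruling out that $\anchor(\lp_i)$ was itself placed with $\noCheck=true$; your plan does not address this subcase at all. Without these configuration-level arguments the two-step restoration at $t^{++}$ remains an assertion rather than a proof.
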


The proof of \cref{lem:invariant} is deferred to \cref{sec:appendix}.
Essentially, we distinguish the type of node $v$ that is tiled in step $t$, i.e., whether $v$ is not a link, a link but no generator, or a link and a generator, and finally whether $v + \rhr{} \in \links \cup \{\start\}$ (see line 9).
In all but the last case, we can show that $P1$--$P5$ immediately hold in step $t^+$, and in the last case, we know by the algorithm that $\noCheck{}$ is set to true after tiling $v$ in step $t^+$.
Here we can show that only property $P2$ is violated in step $t^+$, and only within the neighborhood of the previously placed tile.
Using \cref{cor:generatorSize}, we can precisely determine at which node the next tile is placed (with $\noCheck{} = true$), and that this tile placement restores $P2$ in step $t^{++}$.
In the other cases the properties mostly follow from \cref{lem:generatorSharedTiled}.

All properties hold in an initial configuration, and they are maintained as invariants by \cref{lem:invariant}.
By \cref{lem:enterCoat}, the agent eventually terminates in some step $t^*$, and by \cref{lem:cycles} $\emp{}$~never disconnects.
Hence, $\occ{}^{t^*} = L$ holds after termination which concludes the following:

\begin{theorem}
    \label{thm:algorithm}
    Following \cref{alg:algorithm}, a finite-state agent solves the coating problem on $\tri$, given a configuration $C^0 = (\occ^0,\obj{},p^0)$ that is coatable w.r.t. $\tri$.
\end{theorem}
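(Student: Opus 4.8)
The plan is to assemble the theorem from \cref{lem:termination,lem:enterCoat,lem:cycles,lem:invariant} by induction over the steps at which the agent gathers material at $p^0$; call the execution segment between two consecutive such steps (or between the last one and termination) a \emph{round}. A coatable configuration has $\links^0 = \emptyset$, so all of $P1$--$P5$ hold in $C^0$. By \cref{lem:invariant}, each tile placement preserves $P1$--$P5$, except possibly when the agent sets $\noCheck = true$, in which case $P1$--$P5$ are restored one round later at step $t^{++}$; since $\noCheck$ is cleared at the very next placement, such a skip never chains, so $P1$--$P5$ hold at the start of every round, and in particular $P5$ holds whenever the agent gathers material. Applying \cref{lem:termination} at each such step then shows that every round is finite --- the agent always either returns to $p^0$ or terminates --- so the run decomposes into a well-defined sequence of rounds.

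Next I would show this sequence is finite and ends in termination. The agent operates only in $\tri$, whose node set is $L$, so every placement (line~5, line~10, or line~13) tiles a previously empty node of $L$; since tiles are never removed, $|\occ{}|$ strictly increases each round and is bounded by $|L|$, so there are finitely many rounds. A round ends without a successor only by termination at line~13, which is triggered exactly when $\be(\start{}) = \emptyset$. Hence some round must reach line~13 in that state: otherwise, in every round $\be(\start{}) \neq \emptyset$ at line~13, the agent does not terminate, and by $P5$ together with \cref{lem:termination} it re-enters $p^0$, thereby executing phase \coat{} once more and placing another tile --- which cannot happen without bound.

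It remains to check $\occ{}^{t^*} = L$ at the terminating step $t^*$. By \cref{lem:cycles}, $\emp{}$ stays connected throughout, since $\emp{}^0$ is connected (coatability) and $P1$--$P4$ are the invariants just established. When line~13 fires, the agent occupies the empty node $\start{}$, all of whose $\tri$-neighbors are tiled, so $\start{}$ is isolated in $\tri(\emp{})$; connectivity then forces $\emp{} = \{\start{}\}$, i.e.\ $\occ{} = L \setminus \{\start{}\}$, and placing the carried tile at $\start{}$ yields $\occ{}^{t^*} = L$. The depot never runs dry: placements occur at distinct empty nodes of $L$, so there are at most $|L| - |\occ{}^0|$ of them, and the agent gathers at most one more unit than it places (it gathers only when empty-handed and places only while carrying one), hence at most $|L| - |\occ{}^0| + 1$ units are ever drawn. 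Finally, each $C^t$ in $C^0,\dots,C^{t^*}$ is valid (tiles are only added, and the agent moves only to neighbors of its previous position), results from its predecessor by an action (i)--(iv), and the agent terminates (v) in step $t^*$ --- precisely the definition of solving the coating problem.

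The real obstacle is not this bookkeeping but \cref{lem:invariant} itself, whose proof is deferred --- in particular the case where the tiled node is both a link and a generator, where \cref{cor:generatorSize} is needed to pin down where the next tile lands once $\noCheck$ is set and to verify that this placement restores $P2$ at step $t^{++}$. Within the present argument, the one genuinely delicate inference is the passage from $\be(\start{}) = \emptyset$ to $\emp{} = \{\start{}\}$: it rests entirely on $\emp{}$ never disconnecting, so the conclusion is only as robust as \cref{lem:cycles} and, through it, the invariants $P1$--$P4$.
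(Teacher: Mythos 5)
Your proposal is correct and follows essentially the same route as the paper: properties $P1$--$P5$ hold initially since $\links^0 = \emptyset$, are maintained via \cref{lem:invariant} (with the $\noCheck$ case restoring them one placement later), termination follows from $P5$ and \cref{lem:termination} together with the fact that each round tiles one more node of $L$, and \cref{lem:cycles} keeps $\emp{}$ connected so that $\be(\start{}) = \emptyset$ forces $\occ{}^{t^*} = L$ after the final placement. You merely make explicit some bookkeeping (round decomposition, depot size, validity of configurations) that the paper leaves implicit, and like the paper you correctly identify the deferred \cref{lem:invariant} as the real load-bearing step.
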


\subsection{Runtime Analysis}
\label{subsec:optimality}

Since the agent does not sense any node outside of $N_1(p)$ in its look-phase, exploring $R_i(p,q)$ requires additional steps.
From $R_i(p,q) \subset N_{i+2}(p)$ it follows that the number of steps is upper bounded by $2 \cdot |N_{i+2}(p)| = \O(|N(p)|^i)$. 
Since $i$ is a constant and $\tri$ has constant degree, each execution of phase \coat{} takes $\O(1)$ steps.
Each execution of phase \fetch{} takes $\O(|\lp|)$ steps as the agent traverses a sub-path of $\lp$ twice.
Since $\lp$ is simple, it follows that $|\lp| = \O(n)$, where $n = |L|$.
The agent can place at most $n$ tiles until $L = \occ{}$, thereby performs at most $n$ executions of \coat{} and \fetch{}, which results in $\O(n^2)$ steps in total.

An agent $\tilde{r}$ with unlimited memory and global vision can reach any node via a shortest path instead of sticking to the boundary of tiled nodes.
Except for the last placed tile it must always return to the material depot which implies that the last tile is placed at a node $w$ with maximum distance to $p^0$.
In the worst case, the surface graph is the triangulation of an object resembling a straight line such that $d_L(p^0,w) = \Theta(n)$. 
Each node on the shortest path $P$ from $p^0$ to $w$ must be tiled.
Hence, $\tilde{r}$ takes at least
$2\left( \sum_{u \in P} d_L(p^0,u)\right) - d_L(p^0,w) = \left(\sum_{i=1}^{\Theta(n)} 2 \cdot i\right) - \Theta(n) = \Theta(n^2)$ steps which implies worst-case optimality of \cref{alg:algorithm}.

\section{Coating in the 3D Hybrid Model}
\label{sec:construction}

In this section, we apply our coating algorithm to the 3D hybrid model.
We first define a triangulation on nodes of $L$ with degree $\Delta \leq 8$, and afterwards construct a virtual graph on which we emulate our algorithm using $2^{2\Delta}$ types of tiles.

\begin{figure}[t]
    \centering
    \begin{minipage}{.43\textwidth}
        \centering
        \includegraphics[width=\linewidth]{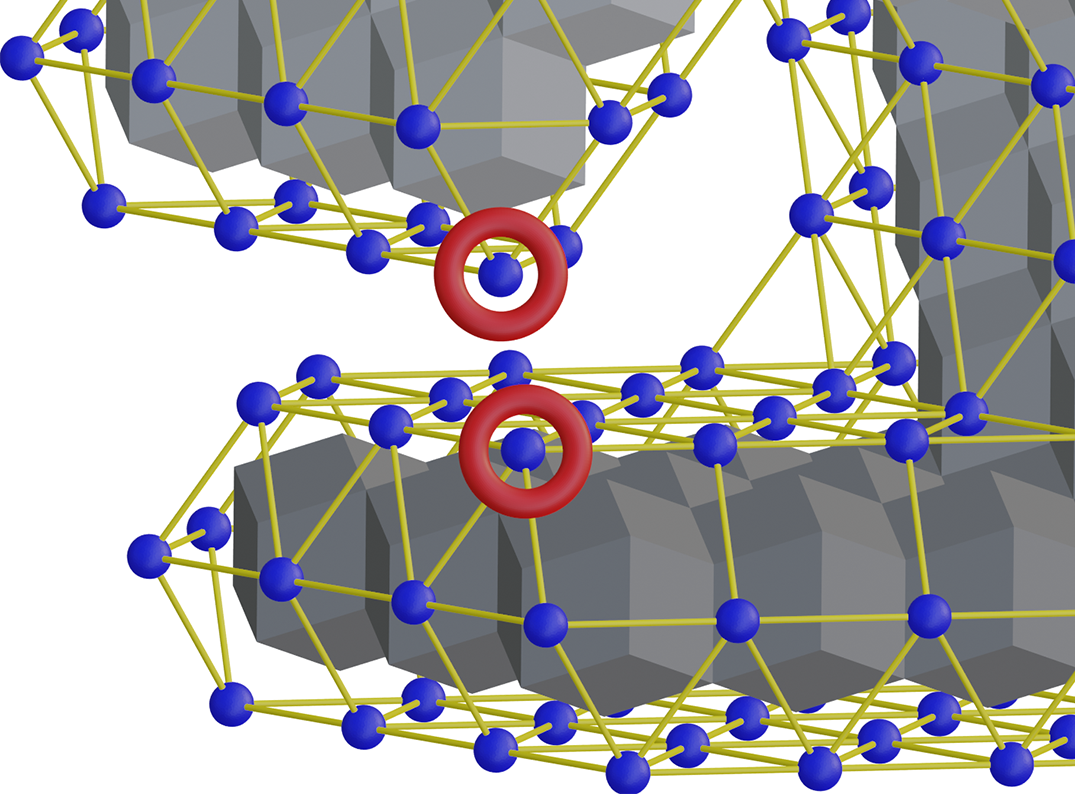}
        \caption{A snapshot of $\surf$: the circled nodes are adjacent in $G(L)$.}
        \label{fig:coatingLayerGraph}
    \end{minipage}%
    \hfill
    \begin{minipage}{.55\textwidth}
        \centering
        \includegraphics[width=\linewidth]{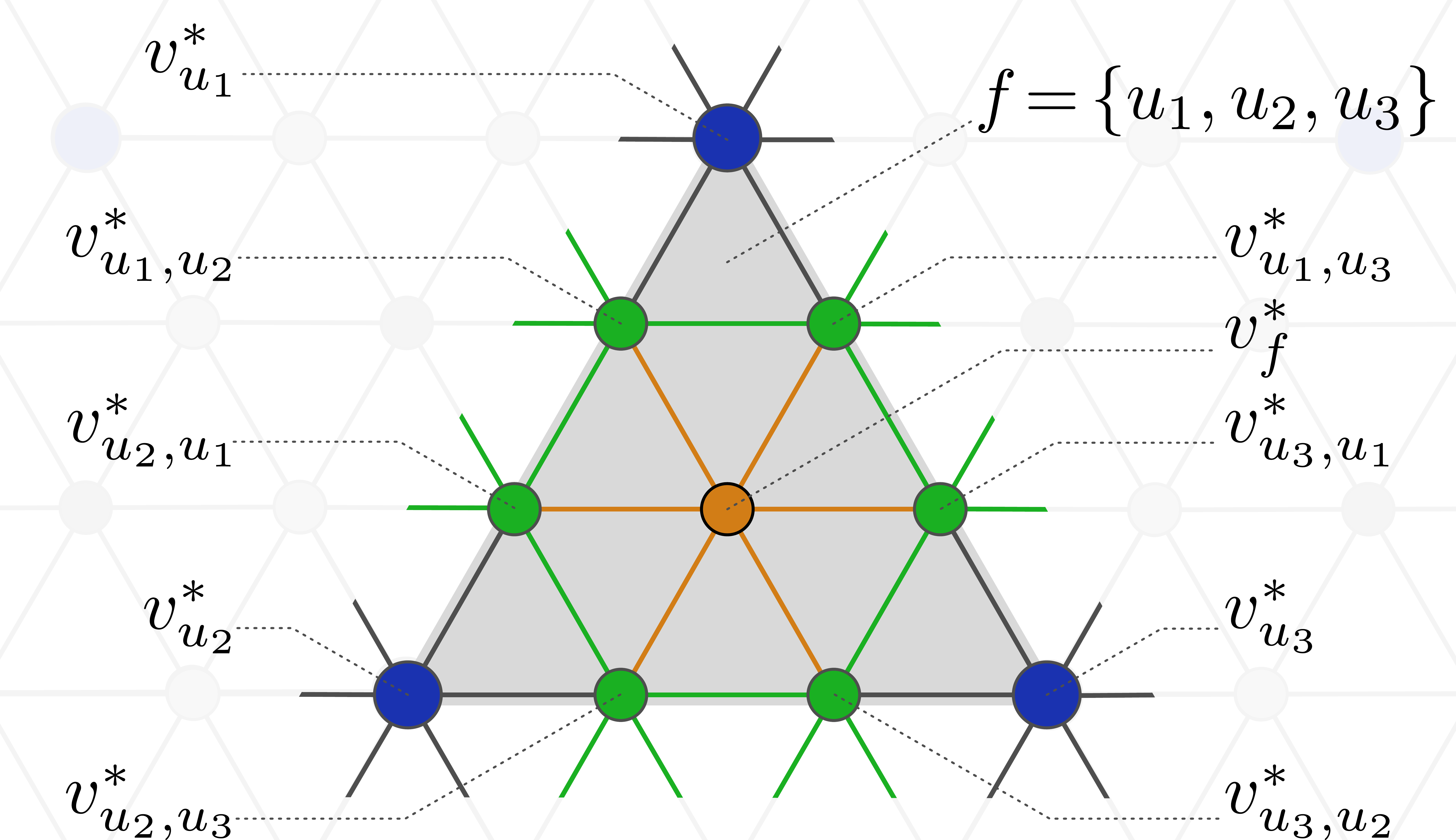}
        \caption{A triangular face $f$ of $\tri$ and its corresponding virtual edges and nodes in $\tri^*$.}
        \label{fig:emulation}
    \end{minipage}
\end{figure}

\begin{figure}[b]
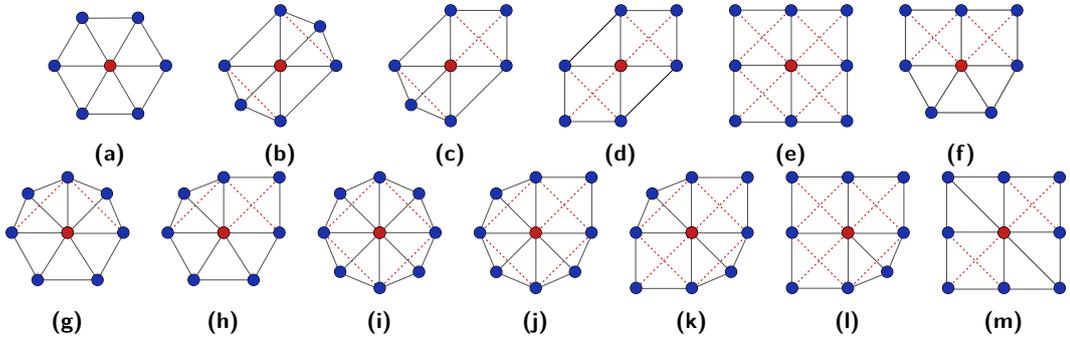

    \centering
    \hfill
    \foreach \x in {a,...,f}{%
        \begin{subfigure}[c]{0.12\linewidth}
            \includegraphics[width=\linewidth]{boundaries_\x}
            \subcaption{}
            \label{subfig:triangulation_\x}
        \end{subfigure}%
        \hfill
    }%
    \null\hfill
    \\
    \foreach \x in {g,...,m}{%
        \begin{subfigure}[c]{0.12\linewidth}
            \includegraphics[width=\linewidth]{boundaries_\x}
            \subcaption{}
            \label{subfig:triangulation_\x}
        \end{subfigure}%
        \hfill
    }%
    \caption{All possible arrangements of faces in $\surf$ apart from rotation. Dashed edges indicate that the distance between its endpoints is precisely two w.r.t. $G$.}
    \label{fig:triangulation}
\end{figure}


Recall the definition of graph $G = (V,E)$ and its embedding in $\mathbb{R}^3$ from \cref{sec:model}.
Define $\surf = (L,E')$ as the subgraph of $G(L)$ that contains only those edges $\{v,w\}$ for which $v$ and $w$ share adjacent object neighbors (see \cref{fig:coatingLayerGraph}), i.e., $E' = \{\{v,w\}\mid d_\obj{}(N_1(v), N_1(w)) \leq 1\}$.
We can view $\surf$ as embedded on the surface of our 3D object.
That embedding contains triangular and tetragonal faces (see \cref{fig:coatingLayerGraph}) where tetragonal faces can occur in one of three orientations:
(1) $v, v+ \NE, v + \NE + \USE, v + \USE$, (2) $v, v + \NW, v + \NW + \UNE, v + \UNE$, and (3)~$v, v + \N, v + \N + \UW, v+ \UW$.
Apart from rotation, \cref{fig:triangulation} shows all possible arrangements of faces within $\surf$. 
We define the class of \emph{smooth objects} $\mathcal{S}$ as all objects for which $\surf$ contains only the cases (a)--(f) from \cref{fig:triangulation}.
Let $\tri$ be the triangulation of $\surf$ in which the same diagonal edge is added for each tetragonal face of the same orientation (1)--(3) (since we want the agent to be able to deduce the triangulation).
Since $d_L(v,w) = 1$ w.r.t. $\tri$ implies $d_L(v,w) \leq 2$ w.r.t. $\surf$, the agent can emulate moving on $\tri$ with a multiplicative time and memory overhead of at most two.
It is easy to see that $B(v)$ is chordless and $v$ has degree at most six for all $v \in L$ within the class $\mathcal{S}$.
Together with \cref{thm:algorithm} follows:

\begin{theorem}
    \label{thm:coatableSingleType}
    A finite-state agent with a single tile type solves the coating problem on any object $\obj \in \mathcal{S}$ with coating layer $L$ in $\O(n^2)$ steps, where $n = |L|$.
\end{theorem}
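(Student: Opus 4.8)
The plan is to reduce Theorem~\ref{thm:coatableSingleType} to the machinery already developed for the abstract surface graph $\tri$, so the only work is to verify the hypotheses and account for the emulation overhead. Concretely, I would proceed in three steps: (1) show that for an object $\obj \in \mathcal{S}$ the triangulation $\tri$ of $\surf$ satisfies the two standing assumptions of \cref{sec:algorithm} --- namely $\tri$ is a triangulation of a closed 3D surface with finitely many edge orientations, and $B(v)$ is chordless for every $v \in L$; (2) show that the agent, whose native moves are along edges of $G(L)$, can \emph{emulate} a walk on $\tri$, i.e. realize each $\tri$-step and each look-phase on $\tri$ using only $G(L)$-moves and $\O(1)$ extra memory and time; and (3) check that the initial configuration $C^0=(\{p^0\},\obj{},p^0)$ is \emph{coatable} w.r.t. $\tri$ in the sense of \cref{def:coatability}, then invoke \cref{thm:algorithm} and the runtime bound from \cref{subsec:optimality}.

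For step (1): by construction $\tri$ is obtained from $\surf$ by triangulating each tetragonal face with a diagonal chosen consistently per orientation, and $\surf$ is itself an embedded surface graph on the object's boundary, so $\tri$ is a triangulation of a closed surface; since edges of $G$ have only the twelve fixed orientations and the added diagonals have only the three orientations of the tetragonal-face diagonals, the "constantly many edge orientations" hypothesis holds. For chordlessness and degree $\leq 6$ I would argue case-by-case over the local configurations of faces that can surround a node $v$: by definition of $\mathcal{S}$, only cases (a)--(f) of \cref{fig:triangulation} occur around any $v$, and in each of these the boundary cycle $B(v)$ in $\tri$ has no chord and contains at most six nodes (this is the content of the sentence "It is easy to see that $B(v)$ is chordless \dots"). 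This is the step I expect to carry the most bookkeeping: one must be careful that after the diagonal-insertion the boundary of $v$ is still a single chordless cycle --- in particular that no inserted diagonal becomes a chord of some $B(v)$ --- which is where the "consistent diagonal per orientation" choice and the restriction to $\mathcal{S}$ are both used. I would present it as a finite check against the six admissible local pictures.

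For step (2): the key inequality is $d_L(v,w)=1$ in $\tri \implies d_L(v,w)\leq 2$ in $\surf$, already noted in the excerpt; hence every $\tri$-neighbor of the agent's position is reachable by at most two $G(L)$-moves, and a $\tri$-step $p\mapsto p'$ is simulated by a short fixed walk, while a $\tri$-look-phase at $p$ (which needs the tile states of $p$ and its $\le 6$ $\tri$-neighbors) is simulated by visiting those neighbors within a radius-2 ball of $p$ in $G(L)$ and returning. Since $G(L)$ has bounded degree, this is $\O(1)$ moves and $\O(1)$ extra state per simulated $\tri$-operation, giving a multiplicative overhead of at most two (or some absolute constant) in both time and memory; the agent can deduce which diagonal $\tri$ uses purely from the orientation of the local tetragonal face, so no additional memory or tile types are required for this. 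Step (3) is then immediate: initially $\occ^0=\{p^0\}$, so $\links^0=\emptyset$ and $\emp^0=L\setminus\{p^0\}$ is connected (it is the coating layer minus one node, and holes have width $>1$); $|B(v)|\le 6$ for all $v$ by step (1); and $\beo(p^0)\neq\emptyset$ since $p^0\in L$ has at least one empty $\tri$-neighbor (as $|L|\ge 2$ for any nontrivial object, and $L$ is connected). Thus $C^0$ is coatable, \cref{thm:algorithm} applies, and by the runtime analysis the emulated execution uses $\O(n^2)$ $\tri$-steps, hence $\O(n^2)$ $G(L)$-steps after the constant-factor blow-up, with a single tile type throughout. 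I would close by remarking that the worst-case lower bound of \cref{subsec:optimality} transfers verbatim since a straight-line object lies in $\mathcal{S}$, so $\O(n^2)$ is optimal.
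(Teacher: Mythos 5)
Your proposal matches the paper's argument: the paper likewise constructs $\tri$ by adding a consistent diagonal per tetragonal-face orientation, observes that for $\obj \in \mathcal{S}$ every $B(v)$ is chordless with degree at most six (a finite check over the admissible local face arrangements), notes that a $\tri$-edge corresponds to a path of length at most two in $\surf$ so the agent emulates $\tri$-moves with constant overhead, and then invokes \cref{thm:algorithm} together with the runtime analysis. Your explicit verification of coatability of $C^0$ just fills in a step the paper leaves implicit, so the proposal is correct and essentially the same proof.
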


\subsection{Emulation of Coatable Surface Graphs}
\label{subsec:emulation}

Consider an arbitrary triangulation $\tri = (L,E)$ of constant degree $\Delta$ and an initially valid configuration $C_0$.
We construct a virtual graph $\tri^* = (L^*,E^*)$ with virtual initial configuration $C^{0*}$ such that $\tri^*$ is coatable w.r.t. $C^{0*}$.
During that construction, we define a partial surjective function $\mathcal{R}: L^* \rightarrow L$ that maps virtual nodes to real nodes.
We show that an agent $r$ operating on $\tri$ w.r.t. $C_0$ with $2^{2\Delta}$ tile types can emulate an agent $r^*$ that executes \cref{alg:algorithm} on $\tri^*$ w.r.t. $C^{0*}$ such that throughout the emulation $\mathcal{R}(p^*) = p$.

\subsubsection{Virtual Graph Construction}

The virtual graph $\tri^*$ is the result of subdividing each face of $\tri$ into nine triangular faces (see \cref{fig:emulation}).
The node set $L^*$ contains a virtual node $v^*_u$ for each node $u \in L$, two virtual nodes $v^*_{u,w}$ and $v^*_{w,u}$ for each edge $\{u,w\} \in E$, and a virtual node $v^*_f$ for each triangular face $f$ of $\tri$.
For each edge $\{u,w\} \in E$ the edge set $E^*$ contains three virtual edges $\{v^*_u,v^*_{u,w}\}$, $\{v^*_{u,w}, v^*_{w,u}\}$ and $\{v^*_{w,u}, v^*_w\}$.
For each triangular face $f = \{u_1,u_2,u_3\}$ of $\tri$, $E^*$ contains six virtual edges $\{v^*_f,v^*_{u_i,u_j}\}$ and three virtual edges $\{v^*_{u_i,u_j}, v^*_{u_i,u_k}\}$, where $u_i, u_j, u_k \in f$ are pairwise distinct.
We define $\mathcal{R}(v^*_{u,w}) = u$ for any virtual node $v^*_{u,w} \in L^*$. 
Consider an arbitrary but fixed order on the vectors $\overrightarrow{\X_1},...,\overrightarrow{\X_m}$ that correspond to edges in the embedding of $\tri$.
Let $\pi$ represent that order, i.e., $\pi(\overrightarrow{\X_i}) = i$.
For some face $f = \{u_1,u_2,u_3\}$ of $\tri$, we define $\mathcal{R}(v^*_f) = u_i$, where $u_i$ is the node minimizing $\pi(\overrightarrow{u_i} - \overrightarrow{u_j})$ for any $u_i,u_j \in f$ with $i \neq j$.
We define the virtual initial configuration $C^{0*}$ such that all $v^*_u$ are tiled, i.e., $\occ^{0*} = \cup_{u \in L} v^*_u$, $p^{0*} = v^*_{p^0}$ and assume a material depot of size at least $|L^*|-|L|$ at $v^*_{p^0}$.

\begin{lemma}
    \label{lem:virtualGraph}
    $C^{0*}$ is coatable w.r.t. $\tri^*$.
\end{lemma}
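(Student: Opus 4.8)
The plan is to verify the four clauses of \cref{def:coatability} one at a time for the pair $(\tri^*,C^{0*})$ (see \cref{fig:emulation}). Everything hinges on first pinning down the local structure of $\tri^*$ from the (terse) definition of $E^*$. There are three kinds of virtual nodes --- the \emph{original} nodes $v^*_u$ ($u\in L$), the \emph{edge} nodes $v^*_{u,w}$ ($\{u,w\}\in E$), and the \emph{face} nodes $v^*_f$ ($f$ a triangular face of $\tri$) --- and in $C^{0*}$ precisely the original nodes are tiled, so $\emp^{0*}$ is exactly the set of all edge and face nodes. From $E^*$ I would then read off the adjacencies: a face node $v^*_f$ with $f=\{u_1,u_2,u_3\}$ is adjacent exactly to the six edge nodes of $f$; writing $f_1=\{u,w,x_1\}$ and $f_2=\{u,w,x_2\}$ for the two faces of $\tri$ containing $\{u,w\}$ (here I use that $\tri$ triangulates a closed surface, so each edge lies in exactly two triangles and $x_1\neq x_2$), an edge node $v^*_{u,w}$ is adjacent exactly to $v^*_u$, $v^*_{w,u}$, the two face nodes $v^*_{f_1},v^*_{f_2}$, and the two edge nodes $v^*_{u,x_1},v^*_{u,x_2}$; and an original node $v^*_u$ has degree $\deg_\tri(u)\le\Delta$.

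With these pictures in hand, the first two clauses are immediate: every node of $\emp^{0*}$ is an edge or a face node, hence has degree exactly $6$, so $|B(v)|=6\le 6$ for all $v\in\emp^{0*}$; and $p^{0*}=v^*_{p^0}$ has all of its neighbours (the edge nodes $v^*_{p^0,w}$ with $w$ a $\tri$-neighbour of $p^0$) in $\emp^{0*}$, with at least one since $p^0$ has a neighbour in $\tri$, so $\be(p^{0*})\neq\emptyset$. For the third clause, $\links^{0*}=\emptyset$, I would show $\be(v)$ is connected for every $v\in\emp^{0*}$. If $v=v^*_f$, then $\be(v)$ is the whole set of six edge nodes of $f$, and the three ``half-edge'' edges $\{v^*_{u_i,u_j},v^*_{u_j,u_i}\}$ together with the three ``corner'' edges $\{v^*_{u_i,u_j},v^*_{u_i,u_k}\}$ form a $6$-cycle among them. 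If $v=v^*_{u,w}$, then the only tiled neighbour is $v^*_u$, and the induced subgraph on the remaining five neighbours contains the path $v^*_{u,x_1}-v^*_{f_1}-v^*_{w,u}-v^*_{f_2}-v^*_{u,x_2}$ (using $v^*_{w,u}\sim v^*_{f_1},v^*_{f_2}$ and $v^*_{f_i}\sim v^*_{u,x_i}$). In either case $\be(v)$ is connected, so by \cref{def:1link} no node of $\emp^{0*}$ is a link.

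For the last clause, connectivity of $\emp^{0*}$, I would argue by face gadgets: for each face $f$ of $\tri$, the seven nodes $\{v^*_f\}$ together with the six edge nodes of $f$ induce a connected subgraph of $\tri^*$ (those six edge nodes already form a cycle and $v^*_f$ is joined to all of them); two faces of $\tri$ sharing an edge $\{u,w\}$ have gadgets meeting in $v^*_{u,w}$ and $v^*_{w,u}$; and since $\tri$ is a connected triangulation of a closed surface, its dual graph (faces, adjacency $=$ sharing an edge) is connected, so the union of all gadgets is connected. As every edge node and every face node lies in some gadget, and these are all the nodes of $\emp^{0*}$, we conclude that $\emp^{0*}$ is connected. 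Having checked all four clauses, $C^{0*}$ is coatable w.r.t.\ $\tri^*$.

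The step I expect to cost the most care is the second one above: enumerating the $\tri^*$-neighbourhood of an edge node and of a face node from the definition of $E^*$, and in particular being sure that an edge node has exactly the two empty ``sides'' described --- this is precisely where the closed-surface hypothesis (each edge in exactly two triangles with distinct opposite vertices) is used. Once those two local pictures are fixed, all four checks are routine bookkeeping.
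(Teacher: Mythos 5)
Your proof is correct, and for the first two clauses it coincides with the paper's: the paper derives $|B(v^*_f)|=6$ and $|B(v^*_{u,w})|=6$ from exactly the adjacency enumeration you give (a face node sees the six edge nodes of its face; $v^*_{u,w}$ sees $v^*_u$, $v^*_{w,u}$, the two face nodes of the faces containing $\{u,w\}$, and two further edge nodes), and $\be(p^{0*})\neq\emptyset$ is immediate in both write-ups. Where you genuinely diverge is in the last two clauses. The paper handles $\links^{0*}=\emptyset$ and connectivity of $\emp^{0*}$ in one stroke by observing that the initially tiled nodes are pairwise at distance at least $3$ in $\tri^*$, i.e.\ isolated, so every empty node has at most one tiled neighbour, and then inferring both facts from connectivity of $\tri^*$. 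You instead verify boundary-connectedness by hand (the $6$-cycle of edge nodes around a face node, and the explicit path $v^*_{u,x_1}$--$v^*_{f_1}$--$v^*_{w,u}$--$v^*_{f_2}$--$v^*_{u,x_2}$ around an edge node, which correctly uses that each edge of $\tri$ lies in two faces with distinct opposite vertices) and prove connectivity of $\emp^{0*}$ via face gadgets glued along shared edge nodes, invoking connectivity of the dual graph of $\tri$. Your route is longer but more self-contained: the paper's one-line inference implicitly relies on the same local pictures (e.g.\ that the boundary of an empty node remains connected after deleting its single tiled neighbour), whereas your only external ingredient is dual-graph connectivity, which is standard for triangulations of closed connected surfaces. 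Either way, all four clauses of \cref{def:coatability} are established, so your argument goes through.
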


\begin{proof}
    Each face of $\tri$ is triangular, and two virtual nodes are added for each edge of $\tri$.
    Hence, $|B(v^*_f)|= 6$ w.r.t. $\tri^*$ for any face $f$ of $\tri$.
    Any $v^*_{u,w}$ is adjacent to $v^*_{f_1}$ and $v^*_{f_2}$, where $f_1,f_2$ are the two faces of $\tri$ that both contain $u$ and $w$, to two nodes $v^*_{u,w_1}, v^*_{u,w_2}$, where $w_1 \in f_1$ and $w_2 \in f_2$, and to $v^*_{w,u}$ and $v^*_u$.
    Hence, $|B(v^*_{u,w})| = 6$ w.r.t. $\tri^*$ for any edge $\{u,w\}$ of $\tri$.
    Any other virtual node is initially tiled, which implies $|B(v^*)| \leq 6$ for any $v^* \in \emp^*$.
    By construction, each initially tiled node is isolated, i.e., $d(v^*,w^*) \geq 3$ for any $v^*,w^* \in \occ{}^{0*}$.
    Since $\tri^*$ is connected, it follows that $\emp^{0*}$ is connected and $\be(v^*)$ is connected for any $v^* \in \emp^{0*}$, i.e., $\links^{0*} = \emptyset$.
    Hence, each property of \cref{def:coatability} is satisfied.
    \end{proof}

\begin{lemma}
    \label{lem:emulation}
    A finite-state agent can emulate \cref{alg:algorithm} on $\tri^*$ in $\O(\Delta^2n^2)$ steps while moving and placing tiles of at most $2^{2\Delta}$ types on $\tri$.
\end{lemma}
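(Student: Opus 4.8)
The plan is to let the real agent $r$ on $\tri$ simulate $r^*$ one step at a time, always standing on the real node $\mathcal{R}(p^*)$ and storing the virtual configuration of $\tri^*$ locally in the types of real tiles. For a node $u \in L$ let $S(u) = \mathcal{R}^{-1}(u)$ be the set of virtual nodes it represents: $v^*_u$, one node $v^*_{u,w}$ per edge $\{u,w\}$ of $\tri$ incident to $u$, and one node $v^*_f$ per face $f$ of $\tri$ with $\mathcal{R}(v^*_f)=u$. Since $u$ lies on exactly $\deg(u)\le\Delta$ faces of the triangulation, $|S(u)\setminus\{v^*_u\}|\le 2\Delta$. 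As $v^*_u\in\occ^{0*}$ and \cref{alg:algorithm} never removes a tile, $v^*_u$ stays tiled throughout the virtual execution, so the virtual configuration restricted to $S(u)$ is determined by the subset of $S(u)\setminus\{v^*_u\}$ that is tiled; the agent $r$ records exactly this subset in the \emph{type} of a tile placed on $u$, using at most $2^{2\Delta}$ types (and, accordingly, the real configuration $C_0$ matching $C^{0*}$ has every node of $L$ carrying the tile type encoding $\emptyset$). In addition $r$ keeps $\O(1)$ registers holding a pointer into $S(\mathcal{R}(p^*))$ identifying which virtual node $p^*$ currently is, the finite state of $r^*$, and the virtual directions $\lhr$, $\rhr$, $\anchor(p^*)$ encoded as local data around $\mathcal{R}(p^*)$ — constantly many bits, since $\Delta=\O(1)$.

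The key is locality. Each virtual edge of $\tri^*$ is mapped by $\mathcal{R}$ either onto an edge of $\tri$ or collapsed to a single node (this one checks on the three subdivision edges of an edge and the six-plus-three edges around a face), so a radius-$c$ ball around $p^*$ in $\tri^*$ projects into the radius-$c$ ball around $\mathcal{R}(p^*)$ in $\tri$, and conversely every virtual node mapping into a radius-$c$ ball of $\tri$ lies within radius $\O(c)$ of $p^*$ in $\tri^*$. Hence, by walking over $\tri$ inside a ball of a suitable constant radius (reading real tile types, using the fixed embedding of $\tri$ to recover the local edge/face structure and therefore that of $\tri^*$, then returning), $r$ reconstructs the subgraph of $\tri^*$ induced on $N_c(p^*)$ together with the tiled/empty status of each of its nodes. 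Since \cref{alg:algorithm} inspects only a constant-radius neighborhood of $p^*$ — the look-phase uses $N_1(p^*)$, line~4 uses $R_3(p^*,p^*+\rhr{})\subset N_5(p^*)$, and the generator tests and the successor test in $\seg(\cdot)$ (lines~8--9 and~14) only reach nodes at fixed anchor, of which there are $\O(\Delta)$ — the agent $r$ can recompute the entire state transition of $r^*$ from this local real data, with no recourse to global knowledge of $\tri^*$.

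Having determined $r^*$'s action, $r$ executes it. For a move $p^*\to q^*$: if $\mathcal{R}(q^*)=\mathcal{R}(p^*)$, $r$ only updates its registers; otherwise $\mathcal{R}(q^*)$ is a $\tri$-neighbor of $\mathcal{R}(p^*)$ by the edge case analysis above, so $r$ moves there and updates its registers. For a tile placement at $p^*$: $r$ rewrites the type at $\mathcal{R}(p^*)$ to add $p^*$ to the recorded subset — this is action~(iv) and is legal because that node already carries a tile — and mirrors the material handling of \fetch{} at the real material depot, which sits at $\mathcal{R}(v^*_{p^0})=p^0$ and is recognized via the adjacent depot exactly as in \init{}. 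When $r^*$ terminates, $r$ terminates. After each emulated step the stored types describe precisely the new virtual configuration and $r$ stands on $\mathcal{R}(p^*)$, so the simulation is faithful.

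For the running time: $C^{0*}$ is coatable w.r.t.\ $\tri^*$ by \cref{lem:virtualGraph}, hence by \cref{thm:algorithm} and the accompanying runtime analysis $r^*$ terminates within $\O(|L^*|^2)$ steps; since $\tri$ has degree at most $\Delta$ and is a triangulation of a closed surface, $|L^*| = |L| + 2|E| + |F| = \O(\Delta n)$, giving $\O(\Delta^2 n^2)$ virtual steps, each emulated with $\O(1)$ overhead (the reconstruction walk has bounded radius and $\Delta=\O(1)$), for $\O(\Delta^2 n^2)$ real steps in total, using at most $2^{2\Delta}$ tile types. I expect the main obstacle to be precisely the locality argument: one must verify in detail that a genuinely finite-state agent can recover \emph{all} quantities that \cref{alg:algorithm} reads on $\tri^*$ — above all the range $R_3(p^*,p^*+\rhr{})$ and the fixed-anchor segment and successor tests — from a constant-radius real neighborhood alone, and that the (constantly many, as $\Delta=\O(1)$) bookkeeping registers genuinely fit inside the automaton. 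The remaining ingredients — the bound $|S(u)|\le 1+2\Delta$, adjacency preservation under $\mathcal{R}$, and the estimate $|L^*|=\O(\Delta n)$ — are routine.
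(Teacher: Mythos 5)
Your overall strategy is the same as the paper's: encode the occupancy of $\mathcal{R}^{-1}(u)\setminus\{v^*_u\}$ (at most $2\Delta$ virtual nodes) in the type of the tile at $u$, use the fixed order $\pi$ and the collinear embedding of the subdivision to make the encoding and the local virtual structure recoverable by a finite automaton, observe that $\mathcal{R}$ maps virtual edges to edges of $\tri$ or collapses them, and conclude via \cref{lem:virtualGraph}, \cref{thm:algorithm} and $|L^*|=\O(\Delta n)$ that the emulation takes $\O(\Delta^2 n^2)$ steps. Your locality discussion (that $R_3$, segments and successor tests are all confined to a constant-radius ball and hence computable from real tile types nearby) is a legitimate elaboration of a point the paper leaves implicit.

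However, there is a genuine gap in how you initialize and update the real configuration. You assume that ``the real configuration $C_0$ matching $C^{0*}$ has every node of $L$ carrying the tile type encoding $\emptyset$,'' and consequently you realize every virtual tile placement purely as a type change (action~(iv)), arguing it is ``legal because that node already carries a tile.'' But the lemma must work from the model's actual initial configuration, in which only $p^0$ is tiled and all other nodes of $L$ are empty (this is exactly the setting in which \cref{thm:coatableManyType} invokes the lemma); an all-pre-tiled $L$ is not a valid starting point, is not producible without already solving the problem, and would make the coating of $\tri$ vacuous while the material accounting no longer matches the depot. The paper's proof closes this by letting an empty real node implicitly encode ``no virtual node mapped to it is tiled'' and by distinguishing two cases when $r^*$ places a tile at $v^*$: if $\mathcal{R}(v^*)$ is still empty, $r$ places a fresh tile there (consuming a unit of material) whose type marks only $v^*$ as occupied; otherwise $r$ merely updates the type. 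With that rule, real tiles appear exactly when the first corresponding virtual tile does, surjectivity of $\mathcal{R}$ gives $\occ = L$ at termination, and the real depot of size about $|L|$ suffices even though the virtual depot holds $|L^*|-|L|$ units. Your proof needs this case split (and the induced statement that an untiled real node encodes the all-empty bit-sequence); as written, the emulation neither starts from a legal configuration nor ever tiles the real nodes of $L$.
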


\begin{proof}
    Let $F^* \subset L^*$ be the set of virtual nodes $v^*_f$ that correspond to some face $f$ of $\tri$ in the construction of $\tri^*$.
    Since $\tri^*(L^* \setminus F^*)$ is a subdivision of $\tri$, it can be embedded in the same 3D surface as $\tri$ using vectors that are collinear to vectors in the embedding of $\tri$.
    It follows that we can use the same fixed order $\pi$ from the construction of $\tri^*$.

    In the following, we define for each node $u \in L$ a bit-sequence $x(u) = (x_1,...,x_{2\Delta})$ that encodes the occupation of all nodes $v^* \in L^*$ with $\mathcal{R}(v^*) = u$, where a $0$ encodes an empty, and a $1$ encodes an occupied virtual node.
    By the construction of $\tri^*$, there are at most $2 \Delta$ nodes $v^*$ with $\mathcal{R}(v^*) = u$ such that $2 \Delta$ bits suffice.
    The order of bits in $x(u)$ is uniquely given by $\pi$ where the first $\Delta$ bits encode virtual nodes that correspond to edges of $\tri$, and the following bits encode virtual nodes that correspond to faces of $\tri$.
    There is no bit for the virtual node $v^*_u \in L^*$ since it is initially occupied and remains occupied until termination by following \cref{alg:algorithm}.
    In fact, $\mathcal{R}$ is undefined for $v^*_u \in L^*$.

    Consider an agent $r$ on $\tri$ that utilizes $k = 2^{2\Delta}$ types of passive tiles.
    Each tile type uniquely describes a bit-sequence of length $\log(k) = 2\Delta$ such that $r$ emulates an agent $r^*$ on $\tri^*$ with initial configuration $C^{0*}$ as follows:
    If $r^*$ moves from $v^*$ to $w^*$, then $r$ moves from $\mathcal{R}(v^*)$ to $\mathcal{R}(w^*)$ (if $\mathcal{R}(v^*) \neq \mathcal{R}(w^*)$).
    If $r^*$ places a tile at $v^*$ and $\mathcal{R}(v^*)$ is empty, then $r$ places a tile at $\mathcal{R}(v^*)$ that corresponds to the bit-sequence $x$ in which only $v^*$ is encoded as occupied, otherwise $r$ incorporates the occupation of $v^*$ by changing the tile type.
    If $r^*$ gathers material and $r$ carries no material, then $r$ also gathers material.

    By \cref{thm:algorithm} and \cref{lem:virtualGraph}, $r^*$ solves the coating problem on $\tri^*$.
    Since $\mathcal{R}$ is surjective and any node $\mathcal{R}(v^*) \in L$ is occupied, if $v^*\in L^*$ is occupied, the emulation solves the coating problem on $\tri$ in $\O(|L^*|^2) = \O(\Delta n)$ steps.
    \end{proof}

Our final theorem follows from the virtual graph construction on top of our triangulation $\tri$ of $\surf$ (with $\Delta \leq 8$) and the previous lemma:

\begin{theorem}
    \label{thm:coatableManyType}
    A finite-state agent utilizing constantly many tile types can solve the coating problem on arbitrary objects in worst-case optimal $\O(n^2)$ steps.
\end{theorem}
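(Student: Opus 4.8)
The plan is to combine the two emulation results already established for restricted and general objects into a single statement covering all objects. The starting point is the triangulation $\tri$ of the surface graph $\surf$: for every object, $\surf$ consists only of the face arrangements in \cref{fig:triangulation}, and adding a fixed diagonal to each tetragonal face (consistently per orientation (1)--(3)) yields a triangulation $\tri$ of the closed 3D surface with an embedding whose edges have only constantly many orientations. Since any edge of $\tri$ corresponds to a pair of nodes at distance at most two in $\surf$, and hence in $G(L)$, the agent can simulate movement on $\tri$ with only a constant ($\leq 2$) multiplicative overhead in both time and memory. Thus, whatever we prove for $\tri$ transfers to the 3D hybrid model with constant loss.

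Next, I would observe that $\tri$ is a triangulation of constant degree: within the class $\mathcal{S}$ of smooth objects the degree is at most $6$, and in general — inspecting the remaining cases (g)--(m) of \cref{fig:triangulation} — the degree is bounded by $\Delta \leq 8$. In either case $\Delta = \O(1)$, which is the crucial point: the number $2^{2\Delta}$ of tile types used by \cref{lem:emulation} is a constant, and the $\O(\Delta^2 n^2)$ running time is $\O(n^2)$. Concretely: if the object lies in $\mathcal{S}$, apply \cref{thm:coatableSingleType} directly (a single tile type, $\O(n^2)$ steps). Otherwise, take the triangulation $\tri$ of $\surf$ with $\Delta \leq 8$, build the virtual graph $\tri^*$ with its virtual initial configuration $C^{0*}$ via the subdivision construction, invoke \cref{lem:virtualGraph} to see that $C^{0*}$ is coatable w.r.t.\ $\tri^*$, and then apply \cref{lem:emulation}: a finite-state agent with $2^{2\Delta} \leq 2^{16}$ tile types emulates \cref{alg:algorithm} on $\tri^*$ in $\O(\Delta^2 n^2) = \O(n^2)$ steps, solving the coating problem on $\tri$ and hence on $L$. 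The matching $\Omega(n^2)$ lower bound from the runtime analysis (the straight-line object forces any globally-informed memoryless agent to take $\Theta(n^2)$ steps) shows this is worst-case optimal.

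The main obstacle — really the only substantive gap — is verifying that the degree of $\tri$ is genuinely bounded by a constant for \emph{all} objects, not just those in $\mathcal{S}$. This rests on the claim that \cref{fig:triangulation} enumerates every possible local face arrangement in $\surf$ up to rotation, which in turn depends on the width-larger-than-one assumption on holes ($d_\obj(v,w)\leq 2$ whenever $d(v,w)\leq 2$) that restricts how the object can meet a coating-layer node. Once that enumeration is granted, bounding the degree of each node of $\tri$ by $8$ is a finite case check over those diagrams, and adding the fixed diagonals cannot raise the degree by more than a constant. Everything else in the argument is assembling already-proven lemmas, so the proof is short: invoke \cref{thm:coatableSingleType} on $\mathcal{S}$, and \cref{lem:virtualGraph} together with \cref{lem:emulation} (and the constancy of $\Delta$) on its complement, then cite the $\Omega(n^2)$ bound for optimality.
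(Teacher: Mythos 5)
Your proposal is correct and follows the paper's own route: triangulate $\surf$ (constant degree $\Delta \leq 8$), build the virtual graph $\tri^*$, and combine \cref{lem:virtualGraph} with \cref{lem:emulation} so that $2^{2\Delta}$ is a constant number of tile types and $\O(\Delta^2 n^2) = \O(n^2)$, with worst-case optimality cited from the $\Omega(n^2)$ lower bound of the runtime analysis. One wording slip worth fixing: that lower bound is proved against an agent with global knowledge and \emph{unrestricted} memory, not a memoryless one.
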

\section{Future Work}
\label{sec:conclusion}

We provided an algorithm that solves the coating problem in the 3D hybrid model in worst-case optimal $\O(n^2)$ steps given that the initial configuration w.r.t. the surface graph fulfills the property of coatability as specified in \cref{def:coatability}.
While the algorithm solves the problem directly in the class of smooth objects, there are certainly surface graphs that violate coatability.
We bypassed this problem by emulating our algorithm on a subdivision of these surface graphs using $2^{2\Delta}$ types of tiles.
A natural question for future work is whether solving the problem with a single tile type is in fact impossible, and if so, then what is the lowest number of tile types required to solve it.
Another open question is how far our worst-case optimal solution is off from the best case solution.

%
%
%
\bibliographystyle{plainurl}
\bibliography{bibliography}

\begin{thebibliography}{10}

\bibitem{dnaCargoCollective}
M.~Akter, J.~J. Keya, K.~Kayano, A.~M.~R. Kabir, D.~Inoue, H.~Hess, K.~Sada, A.~Kuzuya, H.~Asanuma, and A.~Kakugo.
\newblock Cooperative cargo transportation by a swarm of molecular machines.
\newblock {\em Science Robotics}, 7(65):eabm0677, 2022.

\bibitem{exploration2pebbles}
Manuel Blum and Dexter Kozen.
\newblock On the power of the compass (or, why mazes are easier to search than graphs).
\newblock In {\em 19th Annual Symposium on Foundations of Computer Science (sfcs 1978)}, pages 132--142, 1978.
\newblock \href {https://doi.org/10.1109/SFCS.1978.30} {\path{doi:10.1109/SFCS.1978.30}}.

\bibitem{dnaPathFinding}
Jie Chao, Jianbang Wang, Fei Wang, Xiangyuan Ouyang, Enzo Kopperger, Huajie Liu, Qian Li, Jiye Shi, Jun hu, Lianhui Wang, Wei Huang, Friedrich Simmel, and Chunhai Fan.
\newblock Solving mazes with single-molecule dna navigators.
\newblock {\em Nature Materials}, 18, 2019.
\newblock \href {https://doi.org/10.1038/s41563-018-0205-3} {\path{doi:10.1038/s41563-018-0205-3}}.

\bibitem{Czyzowicz2021NestFormation}
Jurek Czyzowicz, Dariusz Dereniowski, and Andrzej Pelc.
\newblock Building a nest by an automaton.
\newblock {\em Algorithmica}, 83, 2021.
\newblock \href {https://doi.org/10.1007/s00453-020-00752-0} {\path{doi:10.1007/s00453-020-00752-0}}.

\bibitem{amoebotConvexity}
Joshua~J. Daymude, Robert Gmyr, Kristian Hinnenthal, Irina Kostitsyna, Christian Scheideler, and Andr\'{e}a~W. Richa.
\newblock Convex hull formation for programmable matter.
\newblock In {\em Proceedings of the 21st International Conference on Distributed Computing and Networking}, ICDCN 2020, 2020.
\newblock \href {https://doi.org/10.1145/3369740.3372916} {\path{doi:10.1145/3369740.3372916}}.

\bibitem{amoebotLeader}
{Joshua J.} Daymude, Robert Gmyr, Andrea Richa, Christian Scheideler, and Thim Strothmann.
\newblock Improved leader election for self-organizing programmable matter.
\newblock In {\em Algorithms for Sensor Systems - 13th International Symposium on Algorithms and Experiments for Wireless Sensor Networks, ALGOSENSORS 2017, Revised Selected Papers}, pages 127--140, 2017.
\newblock \href {https://doi.org/10.1007/978-3-319-72751-6_10} {\path{doi:10.1007/978-3-319-72751-6_10}}.

\bibitem{amoebotAnnouncement}
Zahra Derakhshandeh, Shlomi Dolev, Robert Gmyr, Andr\'{e}a~W. Richa, Christian Scheideler, and Thim Strothmann.
\newblock Amoebot - a new model for programmable matter.
\newblock In {\em Proceedings of the 26th ACM Symposium on Parallelism in Algorithms and Architectures}, SPAA '14, page 220–222, 2014.
\newblock \href {https://doi.org/10.1145/2612669.2612712} {\path{doi:10.1145/2612669.2612712}}.

\bibitem{amoebotCoating2}
Zahra Derakhshandeh, Robert Gmyr, Alexandra~M. Porter, Andr{\'e}a~W. Richa, Christian Scheideler, and Thim Strothmann.
\newblock On the runtime of universal coating for programmable matter.
\newblock {\em Natural Computing}, 17:81--96, 2016.
\newblock \href {https://doi.org/10.1007/s11047-017-9658-6} {\path{doi:10.1007/s11047-017-9658-6}}.

\bibitem{amoebotShape1}
Zahra Derakhshandeh, Robert Gmyr, Andrea~W. Richa, Christian Scheideler, and Thim Strothmann.
\newblock Universal shape formation for programmable matter.
\newblock In {\em Proceedings of the 28th ACM Symposium on Parallelism in Algorithms and Architectures}, SPAA '16, page 289–299, 2016.
\newblock \href {https://doi.org/10.1145/2935764.2935784} {\path{doi:10.1145/2935764.2935784}}.

\bibitem{amoebotCoating}
Zahra Derakhshandeh, Robert Gmyr, Andréa~W. Richa, Christian Scheideler, and Thim Strothmann.
\newblock Universal coating for programmable matter.
\newblock {\em Theoretical Computer Science}, 671:56--68, 2017.
\newblock \href {https://doi.org/10.1016/j.tcs.2016.02.039} {\path{doi:10.1016/j.tcs.2016.02.039}}.

\bibitem{amoebotShape2}
Giuseppe~Antonio Di~Luna, P.~Flocchini, N.~Santoro, G.~Viglietta, and Y.~Yamauchi.
\newblock Shape formation by programmable particles.
\newblock {\em Distributed Computing}, 33:69--101, 2019.

\bibitem{cadbots}
Sándor Fekete, Robert Gmyr, Sabrina Hugo, Phillip Keldenich, Christian Scheffer, and Arne Schmidt.
\newblock Cadbots: Algorithmic aspects of manipulating programmable matter with finite automata.
\newblock {\em Algorithmica}, 83:1--26, 2021.
\newblock \href {https://doi.org/10.1007/s00453-020-00761-z} {\path{doi:10.1007/s00453-020-00761-z}}.

\bibitem{amoebotCircuits}
Michael Feldmann, Andreas Padalkin, Christian Scheideler, and Shlomi Dolev.
\newblock Coordinating amoebots via reconfigurable circuits.
\newblock {\em Journal of Computational Biology}, 29, 2022.
\newblock \href {https://doi.org/10.1089/cmb.2021.0363} {\path{doi:10.1089/cmb.2021.0363}}.

\bibitem{graphExploration}
Pierre Fraigniaud, David Ilcinkas, Guy Peer, Andrzej Pelc, and David Peleg.
\newblock Graph exploration by a finite automaton.
\newblock In {\em Mathematical Foundations of Computer Science 2004}, pages 451--462, 2004.

\bibitem{3dLeaderElection}
Nicolas Gastineau, Wahabou Abdou, Nader Mbarek, and Olivier Togni.
\newblock Leader election and local identifiers for 3d programmable matter.
\newblock {\em Concurrency and Computation: Practice and Experience}, 34, 2020.
\newblock \href {https://doi.org/10.1002/cpe.6067} {\path{doi:10.1002/cpe.6067}}.

\bibitem{hybridShapeRecognition}
Robert Gmyr, Kristian Hinnenthal, Irina Kostitsyna, Fabian Kuhn, Dorian Rudolph, and Christian Scheideler.
\newblock Shape recognition by a finite automaton robot.
\newblock In {\em 43rd International Symposium on Mathematical Foundations of Computer Science (MFCS 2018)}, volume 117 of {\em LIPIcs}, pages 52:1--52:15, 2018.
\newblock \href {https://doi.org/10.4230/LIPIcs.MFCS.2018.52} {\path{doi:10.4230/LIPIcs.MFCS.2018.52}}.

\bibitem{hybridShapeFormation}
Robert Gmyr, Kristian Hinnenthal, Irina Kostitsyna, Fabian Kuhn, Dorian Rudolph, Christian Scheideler, and Thim Strothmann.
\newblock Forming tile shapes with simple robots.
\newblock {\em Natural Computing}, 19, 2020.
\newblock \href {https://doi.org/10.1007/s11047-019-09774-2} {\path{doi:10.1007/s11047-019-09774-2}}.

\bibitem{dnaTileSurvey}
Amelie Heuer-Jungemann and Tim Liedl.
\newblock From dna tiles to functional dna materials.
\newblock {\em Trends in Chemistry}, 1(9):799--814, 2019.
\newblock \href {https://doi.org/10.1016/j.trechm.2019.07.006} {\path{doi:10.1016/j.trechm.2019.07.006}}.

\bibitem{hybrid3D}
Kristian Hinnenthal, Dorian Rudolph, and Christian Scheideler.
\newblock Shape formation in a three-dimensional model for hybrid programmable matter.
\newblock In {\em Proc. of the 36th European Workshop on Computational Geometry (EuroCG 2020)}, 2020.

\bibitem{exploration1pebble}
Frank Hoffmann.
\newblock One pebble does not suffice to search plane labyrinths.
\newblock In {\em International Symposium on Fundamentals of Computation Theory}, 1981.

\bibitem{Kant2021FortFormation}
Kartikey Kant, Debasish Pattanayak, and Partha~Sarathi Mandal.
\newblock Fort formation by an automaton.
\newblock In {\em 2021 International Conference on COMmunication Systems and NETworkS}, pages 540--547, 2021.
\newblock \href {https://doi.org/10.1109/COMSNETS51098.2021.9352839} {\path{doi:10.1109/COMSNETS51098.2021.9352839}}.

\bibitem{dnaWalkMembrane}
Hao Li, Jing Gao, Lei Cao, Xuan Xie, Jiahui Fan, Hongda Wang, Honghui Wang, and Zhou Nie.
\newblock A dna molecular robot autonomously walking on the cell membrane to drive the cell motility.
\newblock {\em Angewandte Chemie International Edition}, 60, 2021.
\newblock \href {https://doi.org/10.1002/anie.202108210} {\path{doi:10.1002/anie.202108210}}.

\bibitem{hybridLine}
Nooshin Nokhanji, Paola Flocchini, and Nicola Santoro.
\newblock Dynamic line maintenance by hybrid programmable matter.
\newblock {\em International Journal of Networking and Computing}, 13(1):18--47, 2023.
\newblock \href {https://doi.org/10.15803/ijnc.13.1_18} {\path{doi:10.15803/ijnc.13.1_18}}.

\bibitem{tileAssemblySurvey}
Matthew Patitz.
\newblock An introduction to tile-based self-assembly and a survey of recent results.
\newblock {\em Natural Computing}, 13, 2013.
\newblock \href {https://doi.org/10.1007/s11047-013-9379-4} {\path{doi:10.1007/s11047-013-9379-4}}.

\bibitem{MSR}
Ning Tan, Abdullah~Aamir Hayat, Mohan~Rajesh Elara, and Kristin~L. Wood.
\newblock A framework for taxonomy and evaluation of self-reconfigurable robotic systems.
\newblock {\em IEEE Access}, 8:13969--13986, 2020.
\newblock \href {https://doi.org/10.1109/ACCESS.2020.2965327} {\path{doi:10.1109/ACCESS.2020.2965327}}.

\bibitem{claytronicsCoating}
Pierre Thalamy, Benoît Piranda, and Julien Bourgeois.
\newblock 3d coating self-assembly for modular robotic scaffolds.
\newblock In {\em 2020 IEEE/RSJ International Conference on Intelligent Robots and Systems (IROS)}, pages 11688--11695, 2020.
\newblock \href {https://doi.org/10.1109/IROS45743.2020.9341324} {\path{doi:10.1109/IROS45743.2020.9341324}}.

\bibitem{claytronicsScaffold}
Pierre Thalamy, Benoît Piranda, Frédéric Lassabe, and Julien Bourgeois.
\newblock Scaffold-based asynchronous distributed self-reconfiguration by continuous module flow.
\newblock In {\em 2019 IEEE/RSJ International Conference on Intelligent Robots and Systems (IROS)}, pages 4840--4846, 2019.
\newblock \href {https://doi.org/10.1109/IROS40897.2019.8967775} {\path{doi:10.1109/IROS40897.2019.8967775}}.

\bibitem{dnaCargo}
Anupama~J. Thubagere, Wei Li, Robert~F. Johnson, Zibo Chen, Shayan Doroudi, Yae~Lim Lee, Gregory Izatt, Sarah Wittman, Niranjan Srinivas, Damien Woods, Erik Winfree, and Lulu Qian.
\newblock A cargo-sorting dna robot.
\newblock {\em Science}, 357(6356):eaan6558, 2017.

\bibitem{amoebot3Dcoating}
W.K.R. Traversat.
\newblock Universal coating by programmable matter in 3d.
\newblock Master's thesis, Eindhoven University of Technology, 2020.
\newblock URL: \url{https://pure.tue.nl/ws/portalfiles/portal/168210057/Traversat_W..pdf}.

\bibitem{MSR2}
Thadeu Tucci, Beno\^{\i}t Piranda, and Julien Bourgeois.
\newblock A distributed self-assembly planning algorithm for modular robots.
\newblock In {\em Proceedings of the 17th International Conference on Autonomous Agents and MultiAgent Systems}, AAMAS '18, page 550–558, 2018.

\bibitem{swarmRobotics}
Justin Werfel, Kirstin Petersen, and Radhika Nagpal.
\newblock Designing collective behavior in a termite-inspired robot construction team.
\newblock {\em Science}, 343(6172):754--758, 2014.

\end{thebibliography}
\appendix
\section{Deferred Proofs}
\label{sec:appendix}

\setcounter{lemma}{6}
\begin{lemma}
    \label{lem:invariantAppendix}
    If $P1$--$P5$ hold and a node is tiled in in step $t$, then either $P1$--$P5$ hold in step $t^+$ or the agent sets $\noCheck = true$ and $P1$--$P5$ hold in step $t^{++}$.
\end{lemma}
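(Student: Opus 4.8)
The plan is to fix step $t$, let $v$ be the node that is tiled in the phase \coat{} following $t$, and run a case analysis on the type of $v$: (i) $v \notin \links$ — this covers the placement of line~5, the termination placement at $\start{}$ in line~13, and (as it turns out) the placement made once $\noCheck{}$ has been raised; (ii) $v \in \links \setminus \gen$; (iii) $v \in \links \cap \gen$, within which I additionally split according to whether $v + \rhr{} \in \links \cup \{\start{}\}$, i.e. the test in line~9. The structural results do the heavy lifting: \cref{lem:enterCoat} tells us phase \coat{} is (re-)entered at a node $\bp_i = \lp_i$ whose RHR-predecessor $\bp_{i-1}$ is the last element of $\links \cup \{\start{}\}$ on $\lp$; \cref{lem:generatorSharedTiled} and \cref{cor:generatorSize} pin down the local geometry of a link (and of a link that is also a generator), and the fact that $\tri$ is a triangulation with chordless boundaries controls how the boundary walk $\bp$ re-routes around a freshly tiled node; and \cref{lem:cycles} already guarantees that $\emp{}$ stays connected, so connectivity is not the issue here.

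For cases (i) and (ii) I expect $P1$--$P5$ to hold already in step $t^+$. In (i), tiling $v$ can spawn a new link only inside $\be(v)$; using the entry invariant of phase \coat{} (that $p + \rhr{}$ was the last link or $\start{}$ on $\lp$) the re-derived walk $\bp$ simply continues past $v$, so any new link becomes the new last link on $\lp$, which preserves $P1$, $P2$, $P3$ and $P5$, while $P4$ for that link follows from \cref{lem:generatorSharedTiled} together with the triangulation/chordlessness argument behind \cref{cor:generatorSize}. In (ii), tiling $v$ deletes the link at $v$ and spawns none; since $v$ lies on the simple cycle $C$ exhibited in the proof of \cref{lem:cycles}, the link-prefix of $\lp$ only gets shorter, so $P1$--$P4$ survive and I would just check that the re-entry node required by $P5$ is still well defined.

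Case (iii) is the crux and is exactly why the $\noCheck{}$ flag exists. If $v + \rhr{} \in \links \cup \{\start{}\}$, the link that tiling $v$ would push one step further along $\bp$ is immediately absorbed by that neighbouring link (or by $\start{}$), so $P1$--$P5$ are restored in step $t^+$ essentially as in case (ii). If $v + \rhr{} \notin \links \cup \{\start{}\}$ — the line~9 situation, in which $v = \bp_{i-2}$ and $\bp_{i-1} \in \gen$ — I would invoke \cref{cor:generatorSize} to identify the node $w$ at which tiling $v$ generates a fresh link; since $w$ lies on $\bp$ strictly after $\bp_{i-1}$, its predecessor in the chain of overlapping segments is not a link or $\start{}$, so $w$ violates $P2$. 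The two key claims are that $P2$ is the \emph{only} violated property and that the defect is confined to $N_1(v)$, both of which read off the rigid local picture of \cref{cor:generatorSize}. Since $\noCheck{}$ is now set, after the round trip to $p^0$ the agent re-enters phase \coat{} and, skipping the range check, tiles the one node dictated by the $\bp_{i-1},\bp_{i-2},\bp_{i-3}$ bookkeeping of lines~8--9 (namely $\bp_{i-3}$); a second application of \cref{cor:generatorSize} shows this placement consumes $w$ and generates no further off-path link, so $P1$--$P5$ hold in step $t^{++}$.

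The main obstacle is precisely this last subcase. One has to prove both that \emph{no property other than $P2$} can fail in step $t^+$ — a lone off-path link surviving for one round must not already break $P1$, $P3$ or $P5$ — and that the forced, check-free placement genuinely lands on the node that repairs $P2$ rather than merely relocating the stray link. Both hinge on the exact component sizes from \cref{cor:generatorSize} and on tracking carefully which of $\bp_{i-1},\bp_{i-2},\bp_{i-3}$ is tiled, so the bulk of the proof is this finite but somewhat delicate geometric bookkeeping; the remaining cases are routine given \cref{lem:generatorSharedTiled} and \cref{lem:cycles}.
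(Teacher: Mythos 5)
Your case split is the same as the paper's (non-link; link but not generator; link and generator, subdivided by whether $\lp_{i-1}\in\links\cup\{\start{}\}$), and you correctly locate the crux in the $\noCheck{}$ subcase. But your repair mechanism for that subcase is not what actually happens, and it is exactly the step the lemma hinges on. When a tile is placed at $\lp_i\in\links\cap\gen$ with $\lp_{i-1}\notin\links\cup\{\start{}\}$, tiling $\lp_i$ generates a fresh link $v$; you claim the subsequent check-free placement (at $\bp_{i-3}$, i.e.\ $\lp_{i-1}$) \emph{consumes} this stray link and that this restores $P2$. In the paper's analysis the stray link is \emph{not} consumed: it survives into step $t^{++}$, and $P2$ is restored because tiling $\lp_{i-1}$ changes the segment-predecessor of $v$ to $\lp_{i-2}$, which is $\pred(\lp_i)$ at step $t$ and hence lies in $\links\cup\{\start{}\}$ by $P2$ (its existence is itself forced by $P2$ at step $t$). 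So the repair is a re-anchoring of $\pred(v)$, not a consumption argument read off \cref{cor:generatorSize}; as stated, your step would fail.

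Two further ingredients are missing. First, to get $P1$, $P3$, $P4$ at $t^{++}$ you assert the defect is confined to $N_1(v)$, but this needs the paper's argument that $\anchor(\lp_i)$ generated both $\lp_i$ and $\lp_{i+1}$ and was necessarily placed in the immediately preceding round ($t'=t^-$) with $\noCheck{}=false$, so the empty $3$-range from that round confines every link generated and not consumed between $t^-$ and $t^{++}$ to $\be(\lp_{i-1})$; without this localization the claim is unsupported. Second, the other generator subcase ($\lp_{i-1}\in\links\cup\{\start{}\}$) is not ``essentially case (ii)'': a new link is still generated there, and one must additionally rule out that $\anchor(\lp_i)$ was itself placed with $\noCheck{}=true$ (the paper does this by showing $\bo(\lp_{i-1})$ has no tiled component of size larger than two, whereas the $\noCheck{}$-placement pattern forces one of size at least three) before reducing to the Claim-3 geometry. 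Finally, your case (i) cannot absorb the $\noCheck{}=true$ placement as an instance of ``non-link placement under $P1$--$P5$'', since $P2$ is violated at that moment; the two placements must be analyzed jointly from $t$ to $t^{++}$, as you in fact do inside case (iii).
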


\begin{proof}
    \setcounter{lemma}{0}
    The lemma is proven by a case distinction on the type of node $v$ that is tiled in step $t$.
    For the sake of clarity, each case is proven individually in the following claims. 
    \begin{claim} 
        If $P1$--$P5$ hold in step $t$, and a tile is placed at some $\lp_i \notin \links \cup \{\start{}\}$ with $\noCheck = false$, then $P1$--$P5$ hold in step $t^{+}$. 
    \end{claim}
    \begin{claimproof}
        We first show that $P4$ holds in step $t^+$.
        The tile is not placed at a link, which by \cref{lem:enterCoat} implies that $\lp_i$ is the node at which the agent enters phase \coat{} and $\lp_{i-1}$ is the last node on $\lp$ that is contained in $\links \cup \{\start{}\}$.
        Since $\noCheck = false$ by assumption, $R_3(\lp_i,\lp_{i-1})$ must have been searched by the agent and cannot contain any link.
        Placing a tile at $\lp_i$ can only generate links in $\be(\lp_i)$, and the only possible node $w \in \be(\lp_i)$ that is already a link, i.e., $||\be(w)|| > 1$, is $\lp_{i-1}$.
        Since $\lp_{i}$ shares a tiled neighbor with $\lp_{i-1}$, $\lp_{i}$ cannot increase $||\be(\lp_{i-1})||$ by \cref{lem:generatorSharedTiled} such that $P4$ holds in step $t^+$.

        To show the remaining properties, we distinguish two cases based on the number of empty neighbors of $\lp_i$.
        First, consider the case $|\be(\lp_i)| > 1$.
        Let $\tilde{N}_3(\lp_i)$ be the $3$-neighborhood of $\lp_i$ w.r.t. $\tri(\emp{} \setminus \{\lp_{i-1}\})$.
        By \cref{def:range}, $R_3(\lp_i,\lp_{i-1})$ contains $\pred(v)$ for all $v \in \lp \cap \tilde{N}_3(\lp_i)$.
        $R_3(\lp_i,\lp_{i-1})$ does not contain $\start$ or any link, which together with $P2$ implies that $j > i - 1$ for any $\lp_j \in \tilde{N}_3(\lp_i) \cap \lp$.
        Thereby, the subpath of $\lp$ from $\start$ to $\lp_{i-1}$ does not change from step $t$ to $t^+$, and due to the fact that a tile is placed at $\lp_i$, each node in $\be(\lp_i)$ is contained in $\lp^+$.
        Together with $\tilde{N}_1(v) \subseteq \tilde{N}_3(\lp_i)$ for all $v \in \be(\lp_i)$, it follows that $P3$ holds in step $t^+$.
        Since each link that is generated in step $t$ is contained in $\be(\lp_i)$, $P1$ holds in step $t^+$ as well.
        It remains to show $P2$ and $P5$.
        $\lp_i$ cannot consume $\lp_{i-1}$, i.e., $\lp_{i-1} \in \links^+ \cup \{\start{}\}$, as otherwise $\lp_{i-1}$ is contained in a connected component of $\be(\lp_i)$ of size one, which would contradict $\lp_i \notin \links$ or $|\be(\lp_i)| > 1$.
        $\lp_i$ cannot generate $\lp_{i+1}$ by \cref{lem:generatorSharedTiled}, i.e., $\lp_{i+1} \notin \links^+$.
        $\lp_{i-1}$ is the only node in $\be(\lp_i)$ without a predecessor and $\lp_{i+1}$ is the only node in $\be(\lp_i)$ without a successor.
        Then for all $v \in \be(\lp_i)$ with $v \neq \lp_{i-1}$ holds that $\pred^+(v) = \lp_{i-1} \in \links^+ \cup \{\start{}\}$  and $\suc^+(v) = \lp_{i+1} \notin \links^+$, i.e., $P2$ and $P5$ hold in step $t^+$.

        Second, consider the case $|\be(\lp_i)| = 1$.
        By symmetry, $\lp_i$ is a node of a connected component of $\be(\lp_{i-1})$ of size one.
        By $P4$, $\lp_i$ consumes $\lp_{i-1}$ (if $\lp_{i-1} \in \links$) and it cannot generate any link since it has no other empty neighbor.
        Then $P1$--$P4$ hold trivially in step $t^+$.
        If $\lp_i$ is a successor of $\lp_{i-1}$ in $\seg(\lp_{i-1})$ in step $t$, then $\lp_{i-1}$ has no successor in step $t^+$, i.e., $\suc(\lp_{i-1})^+ = \lp_{i-1}$ and $P5$ holds.
    \end{claimproof}
    \begin{claim}  
        If $P1$--$P5$ hold in step $t$, and a tile is placed at some $\lp_i \in \links$ with $\lp_{i} \notin \gen$, then $P1$--$P5$ hold in step $t^{+}$.
    \end{claim}
    \begin{claimproof}
        There are two cases: (1) the agent enters \coat{} at $\lp_{i+2}$ and moves $\rhr{}$ twice, i.e., $\lp_{i+1} \in \links \cap \gen$, or (2) it enters \coat{} at $\lp_{i+1}$.
        In case (1), $\lp_i$ consumes $\lp_{i+1}$ by \cref{cor:generatorSize}, as $\lp_{i}$ must be contained in a connected component of $\be(\lp_{i+1})$ of size one.
        In both cases $\lp_i \notin \lp^+$ since $\lp_i$ is tiled in step $t^+$.
        Together with \cref{lem:enterCoat}, it follows that only the last link of $\lp$ (and second last link in case (1)) is consumed and no link is generated, which implies that the sub-path of $\lp$ from $\start$ to $\lp_{i-1}$ is identical in step $t$ to $t^+$ $P1$--$P4$ hold.

        If the connected component $K$ of $\tri(\be(\lp_{i}))$ that contains $\lp_{i-1}$ has size one, then $\lp_i$ consumes $\lp_{i-1}$ by $P4$ such that $\suc^+(\lp_{i-1}) = \lp_{i-1}$ and $P5$ holds.
        Otherwise, for any $\lp_j \in K$ with $j \neq i-1$ it holds that $j > i$ by $P3$, and thereby $\lp_j \notin \links$ by $P1$.
        Together with the lemma's assumption $\lp_i \notin \gen$ follows that $\lp_j \notin \links^+$, especially $\suc(\lp_i^+) \notin \links$.
        Hence, $P5$ holds in step $t^+$.
    \end{claimproof}
    \begin{claim} 
        \label{claim:firstFig}
        If $P1$--$P5$ hold in step $t$, and a tile is placed at some $\lp_i \in \links \cap \gen$ with $\lp_{i-1} \notin \links \cup \{\start{}\}$, then $P1$--$P5$ hold in step $t^{++}$.
    \end{claim}
    \begin{claimproof}
        First, we deduce the neighborhood of $\anchor(\lp_i), \lp_i$ and $\lp_{i+1}$.
        For ease of reference, refer to \cref{fig:noCheck}.
        Since the agent places a tile at some $\lp_i \in \links \cap \gen$, $\noCheck{}$ must be $false$ in that execution of phase \coat{} and the agent moves \rhr{} twice before it places a tile and sets $\noCheck$ to $true$.
        By \cref{lem:enterCoat}, phase \coat{} is entered at $\lp_{i+2} \notin \links$ with $\lp_{i+1}, \lp_{i} \in \links \cap \gen$, and $\lp_{i+1}$ is the last $v \in \lp$ with $v \in \links \cup \{\start{}\}$.
        By \cref{cor:generatorSize}, both $\be(\lp_{i+1})$ and $\be(\lp_{i})$ contain two connected components of size one and three.
        $\lp_{i+2}$ must be contained in a connected component of $\be(\lp_{i+1})$ of size three, as otherwise $\lp_{i+2} \in \links$ which contradicts that the agent enters phase \coat{} at $\lp_{i+2}$, or $\be(\lp_{i+2}) = \{\lp_{i+1}\}$, which implies $R_3(\lp_{i+2},\lp_{i+1}) = \emptyset$ and contradicts that a tile is placed at a link.
        Thereby, $\lp_{i+2}, \lp_{i+1}, \lp_{i}$ and $\lp_{i-1}$ are all contained in the same segment $\seg(\lp_i)$.
        The lemma's assumption $\lp_{i-1} \notin \links \cup \{\start{}\}$ implies that there must exist another node $\lp_{i-2}$ in that segment since otherwise $\lp_{i-1} = \pred(\lp_i) \notin \links \cup \{\start{}\}$ which would contradict $P2$.
        Next, we show that $\lp_i$ and $\lp_{i+1}$ were generated by $\anchor(\lp_i)$ in some step $t' < t$.
        Initially, $\links^0 = \emptyset$, which implies that the links at $\lp_i$ and $\lp_{i+1}$ were generated by one of the two tiled nodes in $B(\lp_i) \cap B(\lp_{i+1})$.
        Let $u$ be the node that generated $\lp_i$ and $\lp_{i+1}$ in step $t'$.
        By \cref{def:coatability}, it holds that $|B(u)|\leq 6$.
        By the above deduction of the neighborhood of $\lp_i$ and $\lp_{i+1}$, $\be(u)$ contains a connected component of size four.
        Hence, $B(u) \setminus \be(u)$ is connected and contains at most two nodes.
        Since the agent never disassembles any tile, this implies that $u$ was not a link in step $t'$.
        Node $u$ generates both $\lp_i$ and $\lp_{i+1}$, and $\lp_i$ is visited before $\lp_{i+1}$ in step $t$, which implies $\anchor(\lp_i) = u$ and thus $|B(\anchor(\lp_i))| \leq 6$.
        Hence, $\seg(\lp_i)$ contains five nodes $\sigma_j$ with $j \in \{i-2,...,i+2\}$ and $\anchor(\lp_i)$ has one tiled neighbor, which precisely results in the local configuration depicted by \cref{fig:noCheck} apart from rotation.

        \begin{figure}[!t]
            \centering
            \begin{subfigure}[c]{0.33\linewidth}
                \includegraphics[width=\linewidth]{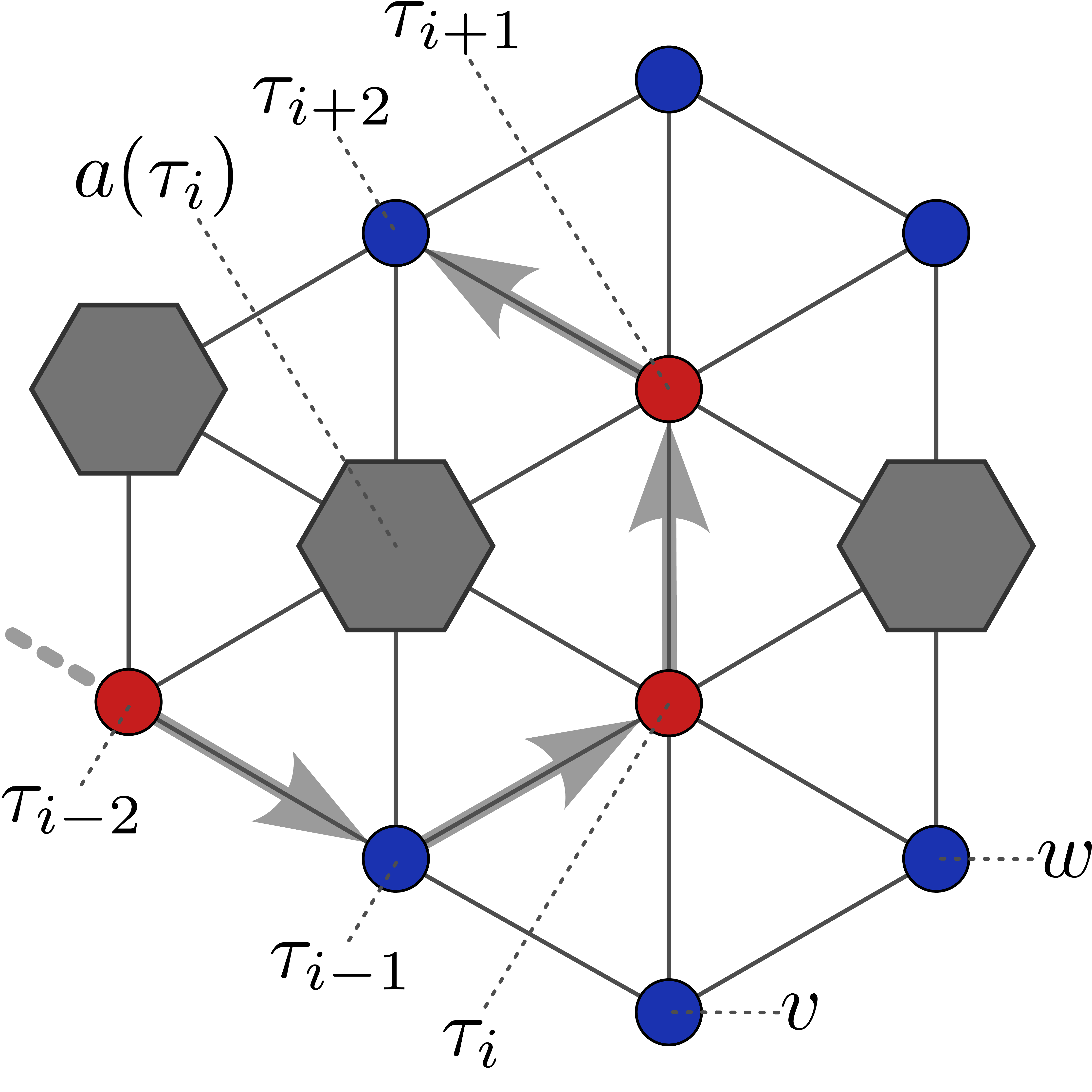}
                \subcaption{step $t$}
                \label{fig:noCheck1}
            \end{subfigure}%
            \begin{subfigure}[c]{0.33\linewidth}
                \includegraphics[width=\linewidth]{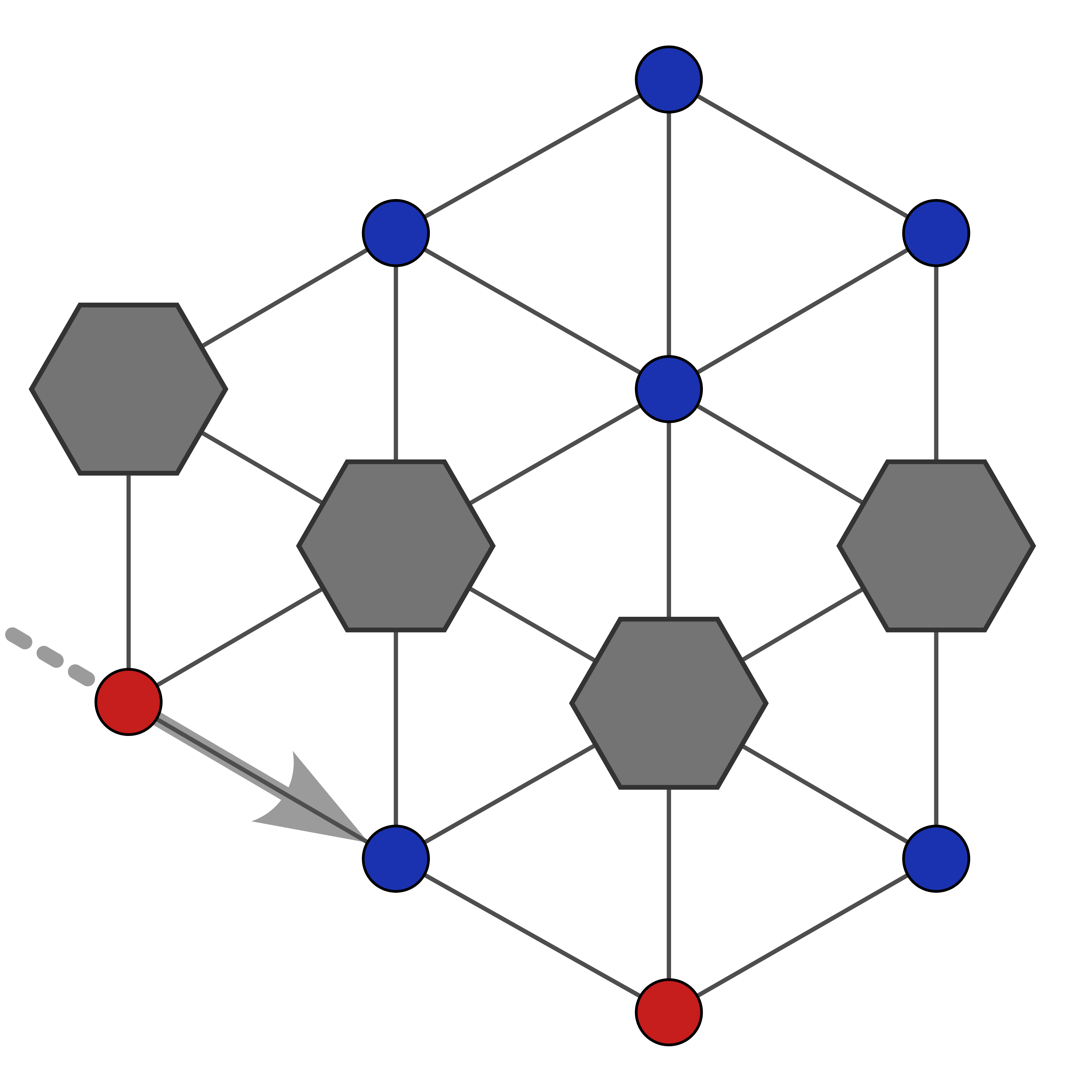}
                \subcaption{step $t^+$}
                \label{fig:noCheck2}
            \end{subfigure}%
            \begin{subfigure}[c]{0.33\linewidth}
                \includegraphics[width=\linewidth]{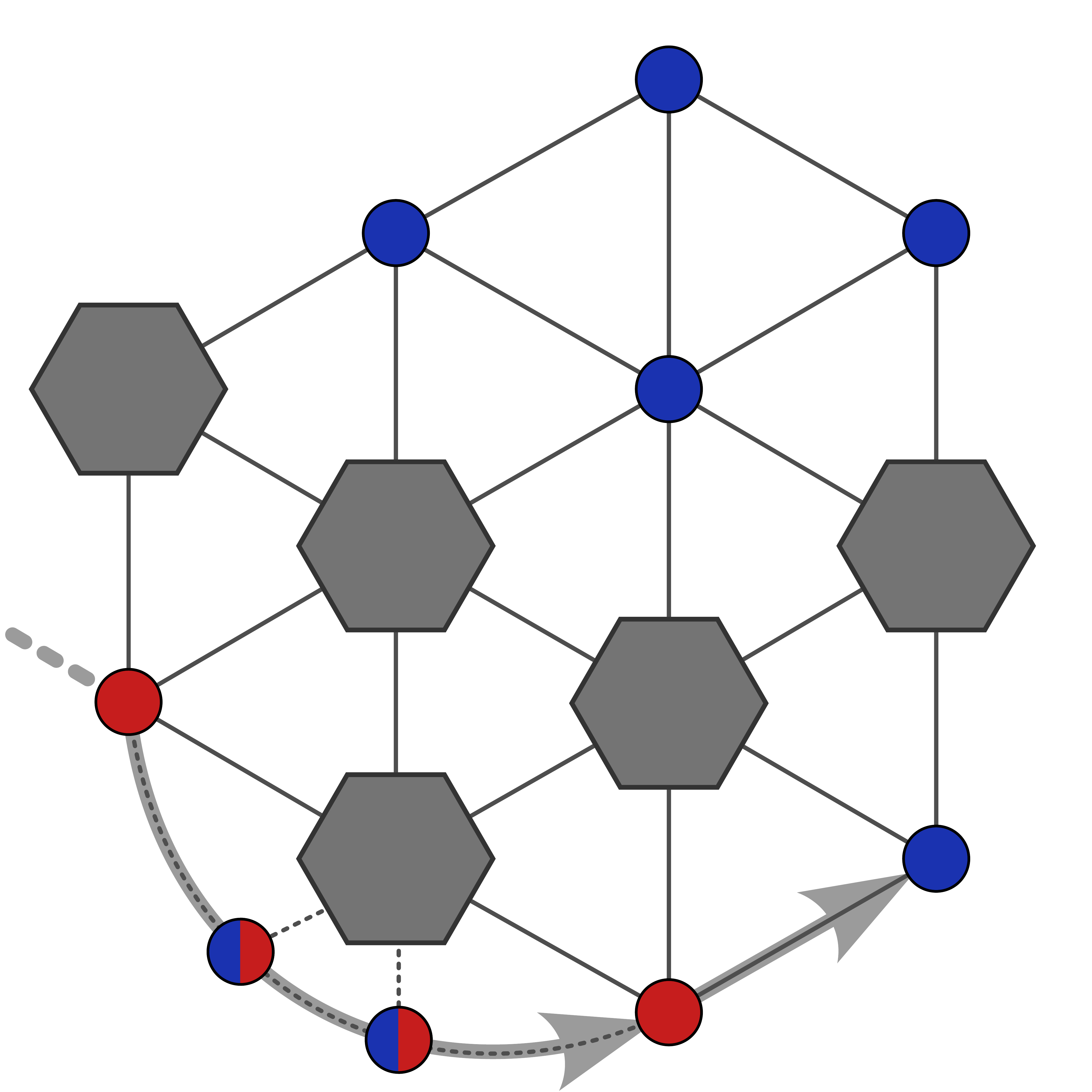}
                \subcaption{step $t^{++}$}
                \label{fig:noCheck3}
            \end{subfigure}%
            \caption{Local configuration in the proof of Claim 3 together with the traversed path in phase \fetch{} (tiled nodes are depicted as hexagons, empty nodes as circles, links are red circles).}
            \label{fig:noCheck}
        \end{figure}

        Second, we consider the situation after a tile is placed at $\lp_i$ between step $t$ and $t^+$ and show that $P2$ holds in step $t^{++}$.
        Let $v$ be the node that is generated by $\lp_{i}$, and $w$ be the other node in the connected component of $\be(\lp_i)$ distinct from $\lp_{i-1}, \lp_{i+1}$ and $v$.
        As can be seen in \cref{fig:noCheck1}, $\lp_i$ consumes $\lp_{i+1}$, it generates $v$ and it cannot generate $w$ or $\lp_{i-1}$ since they share a tiled neighbor with $\lp_i$.
        Hence, in step $t^+$ the neighborhoods are precisely depicted by \cref{fig:noCheck2}.
        Since $\pred^+(v) = \lp_{i-1}$ and $\lp_{i-1} \notin \links^+$, $P2$ is violated and the agent enters phase \coat{} with $\noCheck{}$ set to $true$ at $\lp_{i-1}$ between step $t^+$ and $t^{++}$ without visiting $v$.
        In this case it places a tile at $\lp_{i-1}$ without searching for links in $R_3(\lp_{i-1},\lp_{i-2})$.
        Afterwards, it holds that $\pred^{++}(v) = \lp_{i-2}$ such that $P2$ holds again in step $t^{++}$.
        For completeness, \cref{fig:noCheck3} shows the neighborhoods w.r.t. step $t^{++}$.

        Third, we show that $P1$ and $P3$--$P5$ are maintained.
        Recall that we already concluded that $\lp_{i}$ and $\lp_{i+1}$ were generated by $\anchor(\lp_i)$ in some prior step $t' < t$.
        Note that precisely two nodes in $B(\lp_{i-1})$ are tiled in step $t^+$ (see \cref{fig:noCheck2}).
        Since $B(\anchor(\lp_i))$ contains only one tiled node in step $t$, by contraposition it follows that in step $t'$ the tile at $\anchor(\lp_i)$ is placed with $\noCheck{} = false$.
        No node from $\be(\anchor(\lp_i))$ was ever tiled prior to step $t$, and $\anchor(\lp_i)$ cannot generate $\lp_{i+2}$ in step $t'$ by \cref{lem:generatorSharedTiled}.
        It follows that no tile was placed between step $t'$ and $t$, i.e., $t' = t^-$.
        This implies that if $\anchor(\lp_i)$ were empty in step $t$, then $R_3(\anchor(\lp_i), \lp_{i-2})$ contains neither $\start$ nor any link.
        The only links that are generated and not consumed between step $t^-$ and $t^{++}$ are contained in $\be(\lp_{i-1})$ (note that $v \in \be(\lp_{i-1})$ as well).
        Hence, $P1$ and $P3$--$P4$ hold in step $t^{++}$ analogous to Claim 1, and $P5$ holds since $w \notin \links^{++}$ and $\suc^{++}(w) = w$.
    \end{claimproof}
    
    \begin{figure}[!b]
        \centering
        \begin{minipage}[t][][b]{.63\textwidth}
            \begin{subfigure}[c]{0.5\linewidth}
                \includegraphics[width=\linewidth]{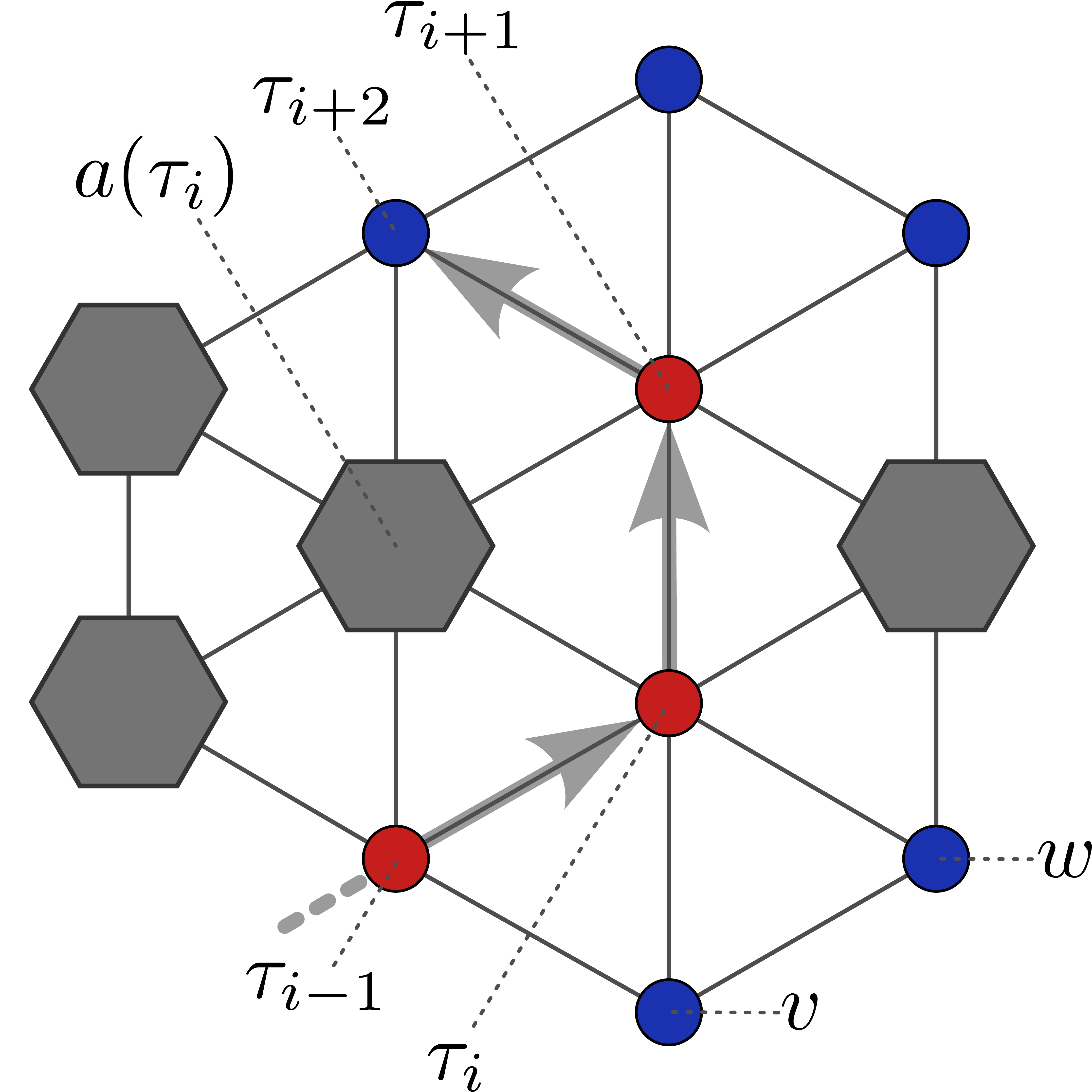}
                \subcaption{step $t$}
                \label{fig:invariantGen1}
            \end{subfigure}%
            \begin{subfigure}[c]{0.5\linewidth}
                \includegraphics[width=\linewidth]{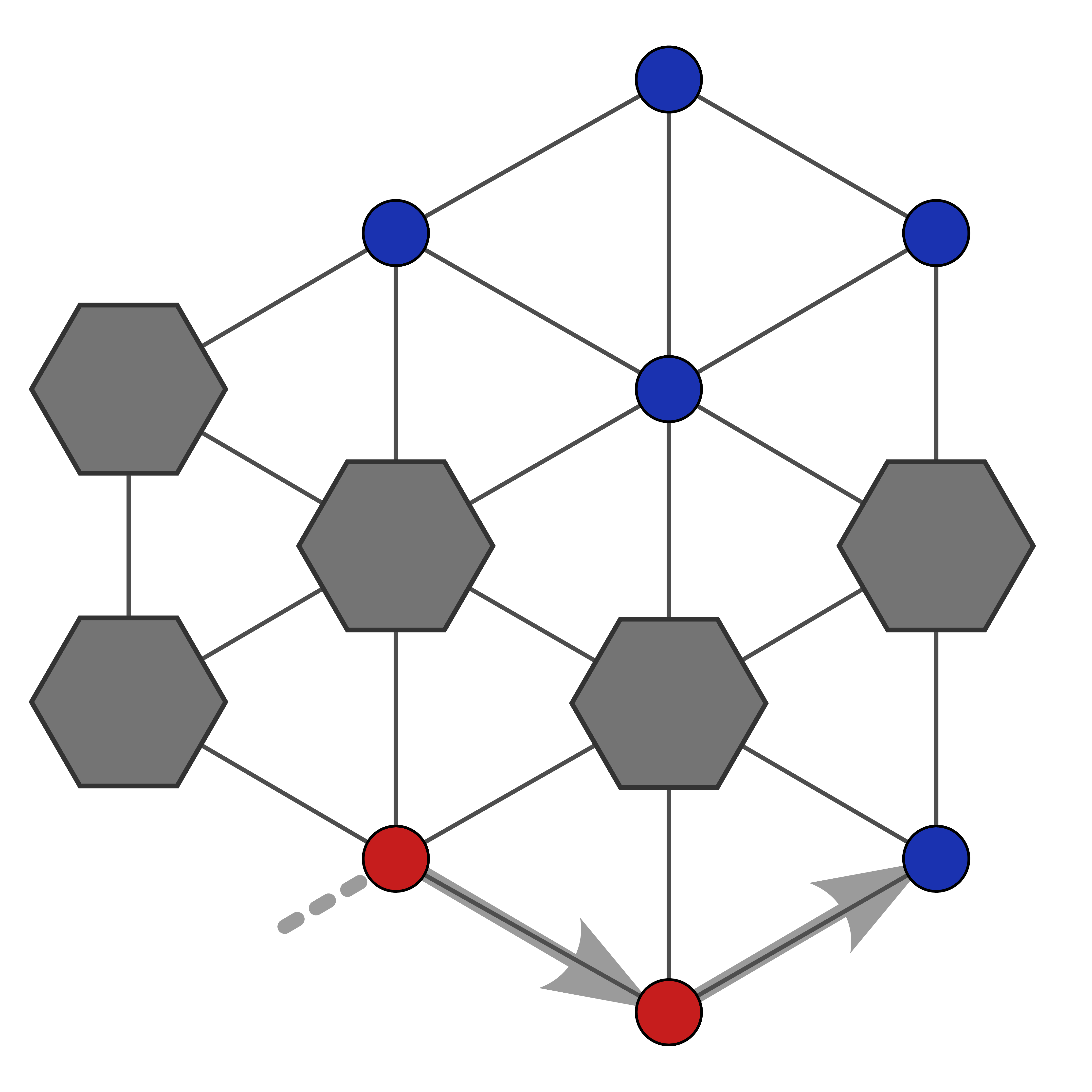}
                \subcaption{step $t^+$}
                \label{fig:invariantGen2}
            \end{subfigure}
            \caption{Local configuration in the proof of Claim 4. Note that while $\lp_{i-1}$ is depicted as a link (circular, red), it may be the node $\start{}$ instead.}
            \label{fig:coatingLayerGraphasd}
        \end{minipage}%
        \hfill
        \begin{minipage}[t][][b]{.315\textwidth}
            \centering
            \captionsetup[subfigure]{labelformat=empty}
            \begin{subfigure}[c]{\linewidth}
                \includegraphics[width=\linewidth]{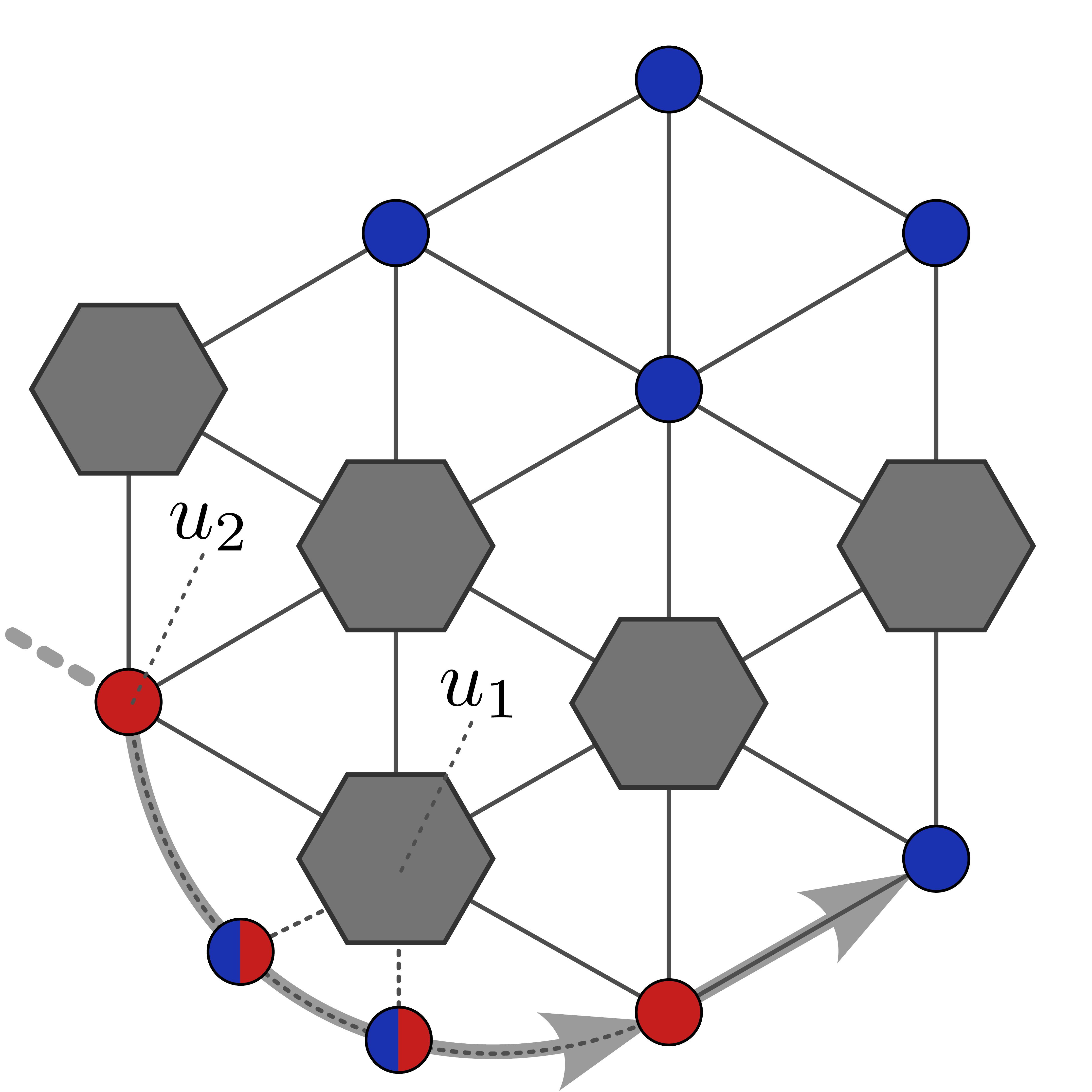}
                \subcaption{}
                \label{fig:invariantGen3}
            \end{subfigure}
            \caption{Local configuration after a tile is placed at $u_1$ with $\noCheck = true$.}
            \label{fig:boundaryasd}
        \end{minipage}
    \end{figure}
    \begin{claim}  
        If $P1$--$P5$ hold in step $t$, and a tile is placed at some $\lp_i \in \links \cap \gen$ with $\lp_{i-1} \in \links \cup \{\start{}\}$, then $P1$--$P5$ hold in step $t^{+}$.
    \end{claim}
    \begin{claimproof}
        Since a tile is placed at a generator, by \cref{lem:enterCoat} the agent enters phase \coat{} at $\lp_{i+2}$ such that $\lp_{i+1}, \lp_i \in \links \cap \gen$.
        It follows that the neighborhood of $\lp_i$ and $\lp_{i+1}$ is the same as depicted in \cref{fig:noCheck1}.
        However, since $\lp_{i-1} \in \links \cup \{\start{}\}$, the neighborhood of $\anchor(\lp_i)$ differs.
        If $\anchor(\lp_i)$ has only one tiled neighbor, then the proof reduces to the proof of Claim 3 except that $\noCheck$ is not set to $true$ and $P2$ holds directly in step $t^+$.
        Hence, we must only consider the case $|\bo(\anchor(\lp_i))| = 2$ which is precisely depicted in \cref{fig:invariantGen1} apart from rotation.

        First, assume that $\lp_{i-1} = \start{}$.
        By \cref{lem:enterCoat}, $\lp_{i+2}$ is the first node $v \in \lp$ with $v\notin \links \cup \{\start{}\}$.
        Hence, $\lp_i$ and $\lp_{i+1}$ are the only existing links.
        Let $v$ be the node generated by $\lp_i$ and $w$ the node in $\be(\lp_i)$ that is not $\lp_{i+1}, \lp_i-1$ or $v$.
        As can be seen in \cref{fig:invariantGen1}, $\lp_i$ consumes $\lp_{i+1}$, it generates $v$, and by \cref{lem:generatorSharedTiled} cannot generate $w$.
        Then in step $t^+$ (see \ref{fig:invariantGen2}), $\lp^+$ is given by $\lp^+ = (\start{},v,w,...)$ with $\links^+ = \{v\}$ and $\suc(w) = w \notin \links^+$ such that $P1$--$P5$ hold.

        Second, assume that $\lp_{i-1} \neq \start{}$, i.e., by the lemma's assumption $\lp_{i-1} \in \links$.
        Note that $\be(\lp_{i-1})$ and $\bo(\lp_{i-1})$ both contain a connected component of size at least two.
        Since $\lp_{i-1}$ is empty, it holds that $|B(\lp_{i-1})| \leq 6$, and since $\lp_{i-1}$ is a link, $\be(\lp_{i-1})$ and $\bo(\lp_{i-1})$ must each contain another component of size one.
        Especially, $\bo(\lp_{i-1})$ contains no connected component of size larger than two.

        In the proof of Claim 3, we showed that apart from rotation there is only one local configuration in which a tile is placed with $\noCheck = true$.
        For ease of reference, the configuration after the tile is placed is depicted in \cref{fig:invariantGen3} with new labeling on the nodes.
        Let $u_1$ be the node at which a tile is placed with $\noCheck = true$ in that case, and $u_2$ the node from which the agent moves to $u_1$ before placing the tile.
        As can be seen in \cref{fig:invariantGen3}, $\bo(u_2)$ contains a connected component of size at least three.
        By contraposition, $\anchor(\lp_i)$ must have been placed with $\noCheck = false$ since we showed that $\bo(\lp_{i-1})$ cannot contain a connected component of size larger than two.
        In that case, the proof again reduces to the proof of Claim 3 as described above.
    \end{claimproof}
    Our claims cover all cases in which the agent places a tile according to \cref{alg:algorithm} without terminating afterwards.
    Hence, the case distinction is complete and the lemma follows.
\end{proof}

\end{document}